\newcommand*{\rom}[1]{\expandafter\@slowromancap\romannumeral #1@}
\DeclareMathAlphabet{\mathpzc}{OT1}{pzc}{m}{it} 
 \colorlet{lgray}{white!80!black}
\colorlet{lred}{white!85!red}
\colorlet{lgreen}{white!60!green}
\colorlet{dgreen}{black!30!green}
\colorlet{lpurple}{white!60!purple}
\colorlet{lblue}{white!60!blue}
\definecolor{green}{rgb}{0.1,0.8,0.1}
\definecolor{yellow}{rgb}{1.0,0.85,0.25}
\definecolor{purple}{rgb}{1.0, 0, 1.0}
\definecolor{blue}{rgb}{0, 0, 1.0} 
\tikzstyle{unfused}=[lgray, line width=1.5pt, ->]
\tikzstyle{fused}=[lgray, line width=4pt, ->]
\tikzstyle{dual}=[black, line width=1pt, dashed]
\tikzstyle{lightdual}=[black, line width=0.5pt, dashed]
\tikzstyle{cut}=[black, line width=1.0pt]
 \renewcommand{\tikz}[2]{
\begin{tikzpicture}[scale=#1,baseline=(current bounding box.center),>=stealth]
#2
\end{tikzpicture}}
\theoremstyle{plain}
\newtheorem{theorem}{Theorem}[section]
\newtheorem{lemma}[theorem]{Lemma}
\newtheorem{proposition}[theorem]{Proposition}
\newtheorem{corollary}[theorem]{Corollary}
\theoremstyle{definition}
\newtheorem{definition}[theorem]{Definition}
\newtheorem{remark}[theorem]{Remark}
\newtheorem{claim}[theorem]{Claim}
\numberwithin{equation}{section} 
\colorlet{shadecolor}{gray!20}
\pgfplotsset{compat=1.9}
\tikzstyle{fleche}=[>=stealth', postaction={decorate}, thick]
\tikzstyle{axis}=[->, >=stealth', thick, gray]
\tikzstyle{paths}=[>->, >=stealth', thick]
\tikzstyle{path}=[->, >=stealth', thick]
\tikzstyle{grille}=[dotted, gray]
\newcommand{\pathR}{\raisebox{-25pt}{\begin{tikzpicture}[scale=0.6,>=stealth]
\draw[thick,->] (0,-1) -- (0,1);
\draw[thick,->] (-1,0) -- (1,0); 
\node[below] at (0,-1) {\small $a$};
\node[left] at (-1,0) {\small $b$};
\node[above] at (0,1) {\small $c$};
\node[right] at (1,0) {\small $d$};
\end{tikzpicture}}}
\newcommand{\pathKl}{\raisebox{-15pt}{\begin{tikzpicture}[scale=0.8,>=stealth]
\draw[thick,->] (0,-1) -- (0,0)--(1,0); 
\node[below] at (0,-1) {\small $a$};
\node[right] at (1,0) {\small $d$};
\end{tikzpicture}}}
\newcommand{\pathKr}{\raisebox{-15pt}{\begin{tikzpicture}[scale=0.8,>=stealth]
\draw[thick,->] (-1,0) -- (0,0)--(0,1); 
\node[left] at (-1,0) {\small $b$};
\node[above] at (0,1) {\small $c$};
\end{tikzpicture}}}
\DeclareFontFamily{U}{mathx}{}
\DeclareFontShape{U}{mathx}{m}{n}{<-> mathx10}{}
\DeclareSymbolFont{mathx}{U}{mathx}{m}{n}
\DeclareMathAccent{\widehat}{0}{mathx}{"70}
\DeclareMathAccent{\widecheck}{0}{mathx}{"71} 
\def \lw {\biggl< W\bigg{|}}
\def \rv {\bigg{|}V\biggr>}
\def \PP {\mathbb{P}}
\def \C {\mathbb{C}}
\def \z {\mathbb{Z}}
\def \be {\begin{equation}}
\def \ee {\end{equation}}
\def \de {\delta}
\def \ep {\varepsilon}
\def \D {\mathbf{D}}
\def \E {\mathbf{E}}
\def \k {\kappa}
\def \ll {\langle}
\def \rr {\rangle}
\def \lb {\left(}
\def \rb {\right)}
\def \lbe {\left[}
\def \rbe {\right]}
\def \u {\uparrow}
\def \r {\rightarrow}
\def \M {\mathbf{M}}
\def \hA {\mathcal{A}}
\newcommand{\UU}{{\mathbb U}}
\def \o {\otimes}
\def \cR {\widecheck{R}}
\def \bK {\overline{K}}
\def \sR {\mathsf{R}}
\def \sK {\mathsf{K}}
\def \sbK {\overline{\sK}}
\def \scR {\widecheck{\sR}}
\def \sM {\mathsf{M}}
\def \sbM {\boldsymbol{\mathsf{M}}}
\def \fV{\mathfrak{V}}
\def \v {\upsilon}
\def \k {\kappa}
\def \d {\mathbf{d}}
\def \e {\mathbf{e}}
\def \aa{\mathtt{a}}
\def \bb{\mathtt{b}}
\def \cc{\mathtt{c}}
\def \dd{\mathtt{d}}
\def \hA {\mathcal{A}} 
\def \eb{\mathrm{e}}
\def \soR{\mathring{\sR}^I}
\def \soRb{\mathring{\scR}^I}
\def \soK{\mathring{\sK}^I}
\def \sobK{\mathring{\sbK}^I}
\def \hQ{\mathcal{Q}}
\def \R {\mathrm{R}}
\def \lK {_{\ell}\mathrm{K}}
\def \rK {_{r}\mathrm{K}}
\def \lra{\longrightarrow}
\def \lla{\longleftarrow}
\def \th {\theta} 
\def \hP {\mathcal{P}}
\def \aaa {\mathpzc{a}}
\def \bbb {\mathpzc{b}}
\def \ccc {\mathpzc{c}}
\def \ddd {\mathpzc{d}}
\def \xx {\mathtt{x}}
\def \yy {\mathtt{y}}
\def \AA {\mathtt{A}}
\def \BB {\mathtt{B}}
\def \CC {\mathtt{C}}
\def \DD {\mathtt{D}}
\def \ttA{A\sqrt{t}}
\def \ttB{B\sqrt{t}}
\def \ttC{C/\sqrt{t}}
\def \ttD{D/\sqrt{t}}
\def \Pih{\widehat{\Pi}}
\def \Co {\C^2}
\def \CI {\C^{I+1}}
\def \CoI {\lb\C^2\rb^{\o I}}
\def \pt {\partial_t}
\def \po {|_{t=1}}
\def\qp#1{\lb#1;q\rb_\infty}
\def\qps#1{\lb#1;q\rb}
\newcommand{\upperRomannumeral}[1]{\uppercase\expandafter{\romannumeral#1}}
\DeclareMathOperator{\dehp}{DEHP}
\DeclareMathOperator{\usw}{USW}
\DeclareMathOperator{\End}{End}
\DeclareMathOperator{\Sym}{Sym}
\DeclareMathOperator{\inv}{inv}
\DeclareMathOperator{\tinv}{\overline{inv}}
\DeclareMathOperator{\id}{id}
\DeclareMathOperator{\zf}{ZF}
\DeclareMathOperator{\gz}{GZ}
\def \sy {\Sym_q^I}
\let\NAT@parse\undefined
\title{Stationary measures for higher spin vertex models on a strip}
\author[Z. Yang]{Zongrui Yang}
\address{Z. Yang, Columbia University,
	Department of Mathematics,
	2990 Broadway,
	New York, NY 10027, USA.}
\email{zy2417@columbia.edu}
\begin{document}
\begin{abstract} 
We introduce a higher spin vertex model on a strip with fused vertex weights. This model can be regarded as a generalization of both the unfused six-vertex model on a strip \cite{Y} and  an `integrable two-step Floquet dynamics' model introduced in \cite{Vanicat_fused_MPA}. We solve for the stationary measure using a fused version of the matrix product ansatz and then characterize it in terms of the Askey--Wilson process. Using this characterization, we obtain the limits of the mean density along an arbitrary down-right path. It turns out that all these models share a common phase diagram, which, after an appropriate mapping, matches the phase diagram of open ASEP. This provides evidence for the universality of this phase diagram. 
\end{abstract}  
\keywords{Higher spin vertex model; Fusion; Matrix product ansatz; Askey--Wilson process; Phase diagram}
\subjclass{82C22; 60K35; 82B23; 82B26}
\maketitle
\section{Introduction and main results} 
\subsection{Preface}   \label{subsec:preface}
The higher spin vertex model plays a central role among probabilistic systems in the Kardar-Parisi-Zhang (KPZ) universality class, since it can be degenerated into many other systems in this class, including interacting particle systems and polymer models. For   summaries of its  degenerations, see \cite[Figure 1]{CP}, \cite[Figures 1 and 2]{kuan2018algebraic}.  
While most studies focus on  vertex models in full space, recent progress has been made towards such models with open boundary, see for example \cite{barraquand2018stochastic,Jimmy_He,Y}.

On a separate note, the matrix product ansatz method, introduced by \cite{DEHP93}, has been extensively adopted to study stationary measures for Markov chains, particularly for interacting particle systems. This method involves expressing the stationary measure as a product of matrices, one for each occupation number. These matrices need to satisfy certain consistency relations. In the case of open asymmetric simple exclusion process (ASEP), the matrix ansatz is related to Askey--Wilson polynomials \cite{USW04} and processes \cite{BW17}, enabling a rigorous derivation of the phase diagram, density profile and fluctuations \cite{BW17,BW19,wang2023askey}.

A physics paper \cite{Vanicat_fused_MPA} introduced a class of higher spin interacting particle systems called the `two-step Floquet dynamics'.  
We refer to the spin of an interacting particle system as $\frac{I}{2}$ if up to $I$ many particles are allowed to occupy a single site. 
The stationary measures of the spin-$\frac{I}{2}$ version of `two-step Floquet dynamics' can be  solved  by a fused version of matrix product ansatz. The matrices that are involved  
are obtained in \cite{Vanicat_fused_MPA} through developing a fusion procedure for the so-called Zamolodchikov-Faddeev (ZF) and Ghoshal-Zamolodchikov (GZ) relations. It is known in the physics literature \cite{ZZ,F,SW97,GZ94,Review_and_progress} that, for integrable systems with two open boundaries (in the sense of \cite{sklyanin1988boundary}), the ZF and GZ relations are connected to the consistency relations of the matrix ansatz. \cite{Vanicat_fused_MPA} constructed such systems and their stationary measures for $I\in\{1,2\}$ cases, and algebraic formulas for certain physical quantities were obtained.

A recent work \cite{Y} studied the stationary measure of the unfused stochastic six-vertex
model on a strip (when $I=1$). In this paper we study a higher spin  generalization of this model and its stationary measure. In the spin-$\frac{I}{2}$ version of such model, up to $I$ many arrows are allowed to occupy a single edge. The higher spin vertex model on a strip has vertex weights given by the fused $R$ and $K$ matrices, which are constructed from the (standard) fusion procedure that goes back to \cite{kirillov1987exact}. The stationary measure  of such a  model can be solved using the matrix product ansatz. The matrices involved in this matrix ansatz can be obtained from the fused solutions of ZF and GZ relations that are  generalized from \cite{Vanicat_fused_MPA}. We then utilize with modifications the techniques from \cite{USW04,BW17} to characterize the stationary measure in terms of the Askey--Wilson processes. Using this description, we investigate the limits of a basic (macroscopic) physical quantity of the system  known as the mean density, as the size of the system  going to infinity. The limits are given by different formulas within different regions, from which we obtain the phase diagram of the system.

We remark that the Markov chains defined by the fused vertex model on a strip are indexed by down-right paths on the strip. In this paper we are able to study the stationary measure corresponding to an arbitrary  path.  
It is interesting that the systems indexed by different down-right paths share the same phase diagram (but with different limits of mean density). This phenomenon has not been observed  previously in \cite{Y}, since the asymptotics was only obtained therein for the horizontal path.
Moreover, we observe that the system corresponding to a specific down-right path (the zig-zag path) coincides with one of the `two-step Floquet dynamics' models in \cite{Vanicat_fused_MPA}. Therefore our
fused vertex model on a strip can be considered as a generalization of this `two-step Floquet dynamics' system. Our results, in particular, answer an open question raised in \cite[Section 4]{Vanicat_fused_MPA} of the mean density and the phase diagram of such (a class of) systems. 

The family of Markov chains studied in this paper are parameterized by the following: the system size $N$, the number $I$ (which controls the spin), bulk and boundary parameters $q,\k,\aaa,\bbb,\ccc,\ddd$, and the shape of a down-right path $\hP$. As size $N$ approaches infinity, this class of models share a common phase diagram, which, after an appropriate mapping, matches the phase diagram of open ASEP. We remark that the open ASEP models are parameterized by the system size $N$ and  $q,\alpha,\beta,\gamma,\delta$.
This provides evidence for the universality of the open ASEP phase diagram, in the sense that a family of systems shares this common phase diagram.

It is possible that, as mentioned in the first paragraph, the higher spin vertex model on a strip studied in this paper could also be degenerated and analytically continued into particle systems and polymer models (see \cite{Jimmy_He} for an example in half space). If such procedure could also be done for the fused matrix ansatz, then one may get a description of its stationary measure. We plan to explore this direction in future research.

\subsection*{Outline of the introduction}
In Section \ref{subsec:higher spin six-vertex model on a strip intro} we introduce the higher spin vertex model on a strip with general (unspecified) vertex weights and define a Markov chain corresponding to an arbitrary down-right path. We solve the stationary measure of this Markov chain using the matrix product ansatz in Section \ref{subsec:MPA intro}, assuming certain consistency relations. In Section \ref{subsec:fused vertex model on a strip intro} we  introduce the fusion procedure for the $R$ and $K$ matrices and define the fused vertex model on a strip. We state the fusion for the ZF and GZ relations in Section \ref{subsec:fusion of ZF and GZ intro}, which gives a concrete matrix ansatz for the fused vertex model on a strip. In Section  \ref{subsec:AW intro} we offer an alternative expression of the stationary measure as the Askey--Wilson Markov processes. We state the limits of the mean density on any down-right path (which exhibits the phase diagram) in  Section \ref{subsec:mean density intro}. In Section \ref{subsec: Integrable discrete-time two-step Floquet dynamics intro} we demonstrate that one of the `two-step Floquet dynamics' models in \cite{Vanicat_fused_MPA} can be regarded as a special case of our fused vertex model on a strip corresponding to the down-right zig-zag path. Therefore our result in particular implies the limits of mean   density and phase diagram for this  model in \cite{Vanicat_fused_MPA}.

\subsection{Higher spin vertex model on a strip} \label{subsec:higher spin six-vertex model on a strip intro}
We introduce the stochastic spin-$\frac{I}{2}$ vertex model on a strip with general vertex weights and define its stationary measure on any down-right path.

Suppose $I\in\z_+$.
We consider certain configurations of arrows on the edges of the strip
\be\label{eq:strip}
\left\{(x,y)\in\z^2: 0\leq y\leq x\leq y+N\right\},
\ee
where each edge can contain $0$ up to $I$ arrows.
For all $y\in\mathbb{Z}_{\geq 0}$, we refer vertices $(y,y)$ as left boundary vertices and $(y+N,y)$ as right boundary vertices. Other vertices on the strip are referred to as bulk vertices.  
For each vertex of the strip, its left and/or bottom edges are called its incoming edges, and its  right and/or top edges are called its outgoing edges.

We will use the word ‘down-right path’ to refer to a path $\mathcal{P}$ that goes from a left boundary vertex of the strip to a right boundary vertex of the strip, with each step going downwards or rightwards by $1$. Every down-right path on the strip has length $N$, and there are $N$ outgoing up/right edges emanating from the path. In the configurations that we will be interested in, each of the outgoing edges of $\mathcal{P}$ can be occupied by $0$ up to $I$ arrows, which gives $(I+1)^N$ possible `outgoing configurations' of $\mathcal{P}$. We label the $N$ outgoing edges of $\mathcal{P}$ from the up-left start of the path to the down-right end of the path: $p_1,\dots,p_N\in\left\{\u,\r\right\}$, where $\u$ denotes a vertical edge and $\r$ denotes a horizontal edge. 
The $(I+1)^N$ outgoing configurations of $\mathcal{P}$ can be encoded as occupation variables $\tau=\tau_{\mathcal{P}}=(\tau_1,\dots,\tau_N)\in[[0,I]]^N$, where $0\leq\tau_i\leq I$ denote the number of arrows occupying edge $p_i$, for $1\leq i\leq N$.
Let $\mathcal{Q}$ be any down-right path sitting above $\mathcal{P}$,
which may contain edges coinciding with edges of $\mathcal{P}$. 
We denote by $\mathbb{U}(\mathcal{P},\mathcal{Q})$ the set of vertices between $\mathcal{P}$ and $\mathcal{Q}$, including those on $\mathcal{Q}$ but excluding those on $\mathcal{P}$. Figure \ref{fig:outgoing arrows down-right path} illustrates these definitions.

\begin{figure} 
\centering
\begin{tikzpicture}[scale=0.82]
\draw[dotted] (-0.5,-0.5)--(4.5,4.5);
\draw[dotted] (4.5,-0.5)--(9.5,4.5);
\draw[dotted] (0,0)--(5,0);
\draw[dotted] (1,1)--(6,1);
\draw[dotted] (2,2)--(7,2);
\draw[dotted] (3,3)--(8,3);
\draw[dotted] (4,4)--(9,4);
\draw[dotted] (0,-0.5)--(0,0);
\draw[dotted] (1,-0.5)--(1,1);
\draw[dotted] (2,-0.5)--(2,2);
\draw[dotted] (3,-0.5)--(3,3);
\draw[dotted] (4,-0.5)--(4,4);
\draw[dotted] (5,0)--(5,4.5);
\draw[dotted] (6,1)--(6,4.5);
\draw[dotted] (7,2)--(7,4.5);
\draw[dotted] (8,3)--(8,4.5);
\draw[dotted] (9,4)--(9,4.5);
\node at (1.7,2) {$\mathcal{P}$};
\node at (3.7,4) {$\mathcal{Q}$};
\draw[ultra thick] (2,2)--(2,1)--(3,1)--(3,0)--(5,0);
\draw[ultra thick] (4,4)--(4,3)--(8,3);
\draw[ultra thick,lgray] (2,2)--(3,2);
\draw[ultra thick,lgray] (3,1)--(3,2);
\draw[ultra thick,lgray] (3,1)--(4,1);
\draw[ultra thick,lgray] (4,0)--(4,1);
\draw[ultra thick,lgray] (5,0)--(5,1);
\draw[ultra thick,lgray] (4,4)--(5,4);
\draw[ultra thick,lgray](5,3)--(5,4);
\draw[ultra thick,lgray](6,3)--(6,4);
\draw[ultra thick,lgray](7,3)--(7,4);
\draw[ultra thick,lgray](8,3)--(8,4);
\node at (4,1) {$\bullet$};
\node at (5,1) {$\bullet$};
\node at (6,1) {$\bullet$};
\node at (3,2) {$\bullet$};
\node at (4,2) {$\bullet$};
\node at (5,2) {$\bullet$};
\node at (6,2) {$\bullet$};
\node at (7,2) {$\bullet$};
\node at (3,3) {$\bullet$};
\node at (4,3) {$\bullet$};
\node at (5,3) {$\bullet$};
\node at (6,3) {$\bullet$};
\node at (7,3) {$\bullet$};
\node at (8,3) {$\bullet$};
\node at (4,4) {$\bullet$};
\end{tikzpicture} 
\caption{Outgoing edges of $\hP$ and $\hQ$ and set of vertices $\UU(\hP,\hQ)$, for $N=5$ and for the down-right (thick) paths $\hP$ and $\hQ$ as depicted. The gray edges are outgoing edges of $\hP$ and $\hQ$. Outgoing edges of $\hP$ are labelled from the up-left to the down-right: $p_1=\r$, $p_2=\u$, $p_3=\r$, $p_4=\u$, $p_5=\u$. The thick nodes are vertices in $\mathbb{U}(\hP,\hQ)$. Initially we have a (deterministic) outgoing configuration of $\hP$. We inductively sample through all vertices in $\UU(\hP,\hQ)$ and get a probability measure on the set of all outgoing configurations of $\hQ$. This figure is the same as \cite[Figure 1]{Y}}
\label{fig:outgoing arrows down-right path}
\end{figure}

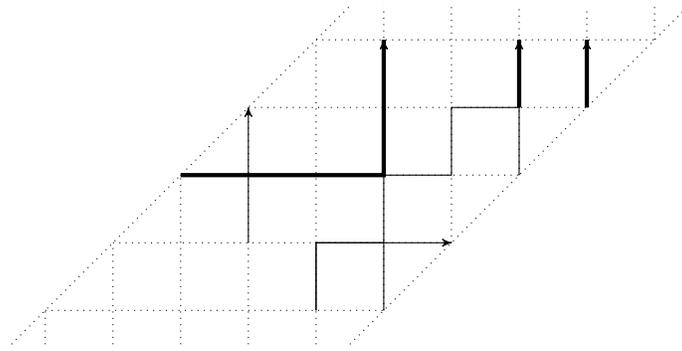
\begin{figure} 
\centering
\begin{tikzpicture}[scale=0.82]
\draw[dotted] (-0.5,-0.5)--(4.5,4.5);
\draw[dotted] (4.5,-0.5)--(9.5,4.5);
\draw[dotted] (0,0)--(5,0);
\draw[dotted] (1,1)--(6,1);
\draw[dotted] (2,2)--(7,2);
\draw[dotted] (3,3)--(8,3);
\draw[dotted] (4,4)--(9,4);
\draw[dotted] (0,-0.5)--(0,0);
\draw[dotted] (1,-0.5)--(1,1);
\draw[dotted] (2,-0.5)--(2,2);
\draw[dotted] (3,-0.5)--(3,3);
\draw[dotted] (4,-0.5)--(4,4);
\draw[dotted] (5,0)--(5,4.5);
\draw[dotted] (6,1)--(6,4.5);
\draw[dotted] (7,2)--(7,4.5);
\draw[dotted] (8,3)--(8,4.5);
\draw[dotted] (9,4)--(9,4.5);
\draw[ultra thick](2,2)--(5,2)--(5,4);\draw[path,thin](5,3)--(5,4);
\draw[thin](3,1)--(3,2);\draw[thin](5,0)--(5,1);\draw[ultra thick](7,3)--(7,4);
\draw[path,thin](8,3)--(8,4); 
\draw[path,thin](4,0)--(4,1)--(6,1);
\draw[path,thin](4,0)--(4,1)--(5,1)--(5,4);
\draw[path,thin](2,2)--(6,2)--(6,3)--(7,3)--(7,4);
\draw[path,thin](7,2)--(7,4);\draw[path,thin](2,2)--(3,2)--(3,3);\draw[ultra thick](8,3)--(8,4);
\end{tikzpicture}  
\caption{An example of sampling the spin-$\frac{I}{2}$ vertex model for $N=5$ and $I=2$. Down-right paths $\hP$ and $\hQ$ are the same as in Figure \ref{fig:outgoing arrows down-right path} and are omitted.
The edges being occupied by one arrow are depicted as thin edges and edges being occupied by two arrows are depicted as thick edges. The outgoing edges of $\hP$ and $\hQ$ are respectively $\tau_\hP=(2,1,0,1,1)$ and $\tau_\hQ=(0,2,0,2,2)$.}
\label{fig:sample configuration higher spin}
\end{figure}

We write $[[0,x]]:=\z\cap[0,x]$.
Suppose that there are three  matrices:
$$
\R=\lb\R_{a,b}^{c,d}\rb_{a,b,c,d\in[[0,I]]},\quad\quad \lK=\lb\lK_a^d\rb_{a,d\in[[0,I]]},\quad\quad \rK=\lb\rK_b^c\rb_{b,c\in[[0,I]]},
$$
which will later play the role of vertex weights respectively at the bulk/left boundary/right boundary. We will always require that these matrices satisfy the following conditions: 
\be  \label{eq:stochasticity intro}
\begin{split}
&{}\R_{a,b}^{c,d}\geq 0,\quad \lK_a^d\geq 0,\quad \rK_b^c\geq 0,\quad \text{for all }0\leq a,b,c,d\leq I,\\
&{}\sum_{c,d=0}^I\R_{a,b}^{c,d}=1,\quad \sum_{d=0}^I {\lK_a^d}=1,\quad \sum_{c=0}^I {\rK_b^c}=1, \quad \text{for all }0\leq a,b\leq I,\\
&{}\R_{a,b}^{c,d}=0,\quad\text{for all }0\leq a,b,c,d\leq I\text{ such that }a+b\neq c+d.
\end{split}
\ee 
The first and second conditions above together mean that the matrices are stochastic, so that all the vertices in the system are probabilistic. As will become clear later on, the third condition above will serve as the `conservation of arrows' in the  bulk of the system: the total number of arrows exiting a bulk vertex is equal to the total number of arrows entering this vertex.
Later we will specify $\R$, $\lK$ and $\rK$ to be the fused vertex weights, however, in order to define  the model, we only need them to satisfy condition \eqref{eq:stochasticity intro}.

The stochastic spin-$\frac{I}{2}$ vertex model is a Markovian sampling procedure that generates configurations. An `initial condition' is given as a down-right path $\hP$ and an outgoing configuration on it. At each vertex of the strip, we inductively sample through the following probabilities given by $\R$, $\lK$ and $\rK$:
\be\label{eq:vertex weights higher spin}
\PP\left(\pathR\right)=\R_{a,b}^{c,d},\quad\quad
\PP\left(\pathKl\right)=\lK_a^d,\quad\quad
\PP\left(\pathKr\right)=\rK_b^c,
\ee
where $0\leq a,b,c,d\leq I$ indicate the number of arrows contained in those edges. More precisely, suppose $\hP$ and $\hQ$ are down-right paths such that $\hQ$ sits above $\hP$. An `initial condition' is given by an outgoing configuration of $\hP$, i.e. initially there are arrows assigned to the outgoing edges of $\hP$. Suppose we have arrived at a vertex $(x,y)\in\UU(\hP,\hQ)$ and have sampled through all the vertices $(x',y')\in\UU(\hP,\hQ)$ such that either $y'<y$ or $y'=y$ and $x'<x$. Then we have already   assigned arrows to the incoming edges of vertex $(x,y)$. We then sample the outgoing edges of $(x,y)$ according to three probabilities given in \eqref{eq:vertex weights higher spin} respectively, in the cases when $(x,y)$ is a bulk/left boundary/right boundary vertex. After we sample through all the vertices in $\UU(\hP,\hQ)$, we get a probability measure on the set of all outgoing configurations of $\hQ$, whose randomness comes from the sampling procedure. See Figure \ref{fig:sample configuration higher spin} for an example of a configuration generated by this sampling.
We will encode the above sampling procedure as a transition probability matrix $P_{\hP,\hQ}(\tau,\tau')$, where $\tau,\tau'\in[[0,I]]^N$ are respectively the occupation variables of outgoing edges of $\hP$ and $\hQ$.
\begin{definition}\label{def:interacting particle system}
Assume the condition \eqref{eq:stochasticity intro} on the vertex weights.
Suppose $\hP$ is a down-right path on the strip. Denote by $\Upsilon_k\hP$ the up-right translation of $\hP$ by $(k,k)$, for all $k\in\mathbb{Z}_{\geq 0}$.  
We look at the outgoing configurations of down-right paths $\Upsilon_k\hP$   and regard $k\in\mathbb{Z}_{\geq 0}$ as time, which gives us a 
time-homogeneous Markov chain $(\tau(k))_{k\geq 0}$ on the state space $[[0,I]]^N$. This Markov chain has initial condition given by an outgoing configuration $\tau(0)\in[[0,I]]^N$ of $\hP$ and with the same transition probability matrix $P_{\Upsilon_k\hP,\Upsilon_{k+1}\hP}(\tau,\tau')=P_{\hP,\Upsilon_1\hP}(\tau,\tau')$ in each step $k\mapsto k+1$. 
When the vertex weights (additionally) satisfy:
\be\label{eq:condition irreducible}\begin{split}
     &\R_{a,b}^{c,d}\in(0,1),\quad\text{for all } a,b,c,d\in[[0,I]]\text{ such that } a+b=c+d,\\
     &\lK_a^d \text{ and } \rK_b^c\in(0,1),\quad\text{for all } a,b,c, d\in[[0,I]], 
\end{split}
\ee 
one can observe that this Markov chain  is irreducible.  
We will be interested in the (unique) stationary measure of this  system, which we  refer to as the stationary measure of spin-$\frac{I}{2}$ vertex model on a strip on $\hP$.

\end{definition} 
\begin{remark}\label{rmk:stochasticity and irreducibility}
    The conditions \eqref{eq:stochasticity intro} and \eqref{eq:condition irreducible} can be understood as follows: (1) Each vertex in the system is probabilistic. 
    (2) At the bulk, anything happens with positive probability as long as the `conservation of arrows' property holds, i.e. the total number of arrows exiting a bulk vertex equals the total number of arrows entering this vertex. (3) At the left and right boundaries, anything  happens with positive probability.
    In summary, arrows are conserved in the bulk, but can enter or exit the system at two open boundaries. 
    
    We will make all our assumptions explicit in the statements of the results in this paper.
Our main object of study is the fused vertex model on a strip, which is introduced in Definition~\ref{def: fused model intro} and, as shown in Proposition~\ref{prop:stochasticity irreducible}, satisfies conditions \eqref{eq:stochasticity intro} and \eqref{eq:condition irreducible}. 
\end{remark}

\subsection{Matrix product ansatz of stationary measure}
\label{subsec:MPA intro} 
We will develop a matrix product ansatz based on the so-called local moves of down-right paths (which will be defined in \eqref{eq:local moves intro} below), in order to solve for the stationary measure of the spin-$\frac{I}{2}$ vertex model on a strip. This matrix product ansatz directly generalizes the recent work \cite{Y} in the spin-$\frac{1}{2}$ ($I=1$) case to the higher spin cases. 

Assume $\left\{\mu_\hP\right\}$ is a collection of probability measures indexed by down-right paths $\hP$ on the strip, where each $\mu_{\hP}$ is supported on the set (with cardinality $(I+1)^N$) of all outgoing configurations of $\hP$. 
We will make the assumption that, for any pair of down-right paths $\hP$ and $\hQ$ such that $\hQ$ sits above $\hP$, the measure $\mu_{\hP}$ is updated to $\mu_{\hQ}$ under the evolution of the vertex model on a strip: For all $\tau'\in[[0,I]]^N$,
\be\label{eq:transition probability intro}\sum_{\tau\in[[0,I]]^N} P_{\hP,\mathcal{Q}}\lb\tau,\tau'\rb\mu_\hP(\tau)=\mu_{\mathcal{Q}}(\tau').\ee
By taking $\hQ=\Upsilon_1\hP$, we get that $\mu_{\hP}$ is the stationary measure of the spin-$\frac{I}{2}$ vertex model on $\hP$. 

We introduce three types of local moves of down-right paths: 
\be\label{eq:local moves intro} 
\begin{tikzpicture}[scale=0.8]
		\draw[dotted] (0,0) -- (1,0)--(1,1)--(0,1)--(0,0);
		\draw[ultra thick] (0,1) -- (0,0) -- (1,0);
		\end{tikzpicture}
  \quad 
\raisebox{5pt}{$\longmapsto$}
\quad 
\begin{tikzpicture}[scale=0.8]
		\draw[dotted] (0,0) -- (1,0)--(1,1)--(0,1)--(0,0);
		\draw[ultra thick] (0,1) -- (1,1) -- (1,0);
		\end{tikzpicture}
,\quad\quad
\begin{tikzpicture}[scale=0.8]
		\draw[dotted] (0,0) -- (0,1)--(-1,0)--(0,0);
            \draw[ultra thick] (0,0) -- (-1,0);
		\end{tikzpicture}
  \quad 
\raisebox{5pt}{$\longmapsto$}
\quad 
\begin{tikzpicture}[scale=0.8]
		\draw[dotted] (0,0) -- (0,1)--(-1,0)--(0,0);
		\draw[ultra thick] (0,0) -- (0,1);
		\end{tikzpicture}
,\quad\quad
\begin{tikzpicture}[scale=0.8]
		\draw[dotted] (0,0) -- (1,0)--(0,-1)--(0,0);
            \draw[ultra thick] (0,0) -- (0,-1);
		\end{tikzpicture}
  \quad 
\raisebox{5pt}{$\longmapsto$}
\quad 
\begin{tikzpicture}[scale=0.8]
		\draw[dotted] (0,0) -- (1,0)--(0,-1)--(0,0);
		\draw[ultra thick] (0,0) -- (1,0);
		\end{tikzpicture},
\ee
where the thick lines denote locally the down-right paths. As will be observed in Section \ref{subsec:proof of MPA}, 
condition \eqref{eq:transition probability intro} can be guaranteed by its special case where $\hQ$ is taken to be a local move of $\hP$.

We propose an ansatz of the form that $\mu_{\hP}$ could take:
Suppose $M_0^\u,\dots,M_I^\u$, $M_0^\r,\dots,M_I^\r$ are elements in a (possibly noncommutative) abstract algebra $\hA$ and $\ll W|\in H^*$ and $|V\rr\in H$ are two boundary vectors, where $H$ is a linear representation space of $\hA$. 
We define $\mu_\hP$ by the following matrix product states: 
\be\label{eq:matrix ansatz in introduction}
\mu_{\mathcal{P}}(\tau_1,\dots,\tau_N)=\frac{\ll W|M_{\tau_1}^{p_1}\times\dots\times M_{\tau_N}^{p_N}|V\rr}{\ll W|( \sum_{j=0}^IM^{p_1}_j)\times\dots\times (\sum_{j=0}^IM^{p_N}_j)|V\rr},
\ee
where $p_i\in\left\{\u,\r\right\}$, $1\leq i\leq N$ are outgoing edges of $\hP$ labeled from the up-left of $\hP$ to the down-right of $\hP$, and $0\leq\tau_1,\dots,\tau_N\leq I$ are occupation variables indicating the number of arrows occupying these edges. Three types of local moves \eqref{eq:local moves intro} provide the following consistency relations: 
For all $0\leq c,d\leq I$, we have
\be\label{eq:compatibility relations in introduction} 
    M_c^\u M_d^\r=\sum_{a,b=0}^I\R_{a,b}^{c,d}M_b^\r M_a^\u, \quad
    \ll W|M_d^\r=\sum_{a=0}^I\lb\lK_a^d\rb\ll W|M_a^\u, \quad
    M_c^\u|V\rr=\sum_{b=0}^I\lb\rK_b^c\rb M_b^\r|V\rr. 
\ee

We summarize this matrix product ansatz as the following theorem:

\begin{theorem}[Matrix product ansatz]\label{thm:general matrix ansatz higher spin}
In this paper  we will always use $\hA$ to denote a  (possibly noncommutative) algebra over $\mathbb{C}$, which admits a linear representation on a vector space $H$ over $\C$
with a finite or countable basis (the elements in $H$ are finite linear combinations of basis vectors). We use $H^*$ to denote the dual of $H$, i.e. the space of linear functions from $H$ to $\mathbb{C}$. 
We will implicitly identify elements of $\hA$ with elements in $\End(H)$, which is the space of linear transformations from $H$ to itself.

Assume the vertex weights $\R$, $\lK$ and $\rK$ satisfy \eqref{eq:stochasticity intro} and \eqref{eq:condition irreducible}.
Suppose  there are elements $M_{j}^\u$ and $M_{j}^\r$ for $0\leq j\leq I$ in $\hA$ 
and two boundary vectors $\ll W|\in H^*$ and $|V\rr\in H$, satisfying consistency relations  \eqref{eq:compatibility relations in introduction}. 
Assume that the denominator of \eqref{eq:matrix ansatz in introduction} is nonzero.

Consider the spin-$\frac{I}{2}$ stochastic vertex model on a strip with width $N$ and with sampling probabilities given by \eqref{eq:vertex weights higher spin}. Then for any down-right path $\hP$ on the strip with outgoing edges $p_1,\dots,p_N\in\left\{\u,\r\right\}$,  the matrix product ansatz \eqref{eq:matrix ansatz in introduction} 
gives the stationary measure of the spin-$\frac{I}{2}$ stochastic vertex model on a strip on the down-right path $\hP$,  
where $0\leq\tau_1,\dots,\tau_N\leq I$ are occupation variables.
\end{theorem}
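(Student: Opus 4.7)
The plan is to reduce the general Markov update relation \eqref{eq:transition probability intro} to its three atomic versions corresponding to the local moves in \eqref{eq:local moves intro}, and then verify each atomic version directly using the corresponding consistency relation from \eqref{eq:compatibility relations in introduction}. Once \eqref{eq:transition probability intro} is established for every pair of down-right paths $\hP, \hQ$ with $\hQ$ above $\hP$, specializing to $\hQ = \Upsilon_1 \hP$ shows $\mu_\hP$ is invariant under $(\tau(k))$, and the irreducibility guaranteed by \eqref{eq:condition irreducible} identifies it with the unique stationary measure.

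The reduction step is a combinatorial induction on $|\UU(\hP, \hQ)|$: so long as the intermediate path differs from $\hQ$, one can always find a vertex $v \in \UU$ that is a concave corner of the current path (both of its incoming edges lie on that path), and apply the bulk, left boundary, or right boundary local move at $v$ according to whether $v$ sits in the interior of the strip or on one of its two boundaries. Consequently it suffices to verify \eqref{eq:transition probability intro} in the single-step case where $\hQ$ is a local move of $\hP$.

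I would then verify each local move. For the bulk move at a vertex $v$, only two consecutive factors of the matrix product are affected: the old outgoing edges at $v$ form the pair $(p_i, p_{i+1}) = (\r, \u)$ with occupations $(b, a)$, while the new ones form $(p'_i, p'_{i+1}) = (\u, \r)$ with occupations $(c, d)$, and the transition probability reduces to the single factor $\R_{a,b}^{c,d}$ (all other outgoing edges are unchanged). Substituting the ansatz \eqref{eq:matrix ansatz in introduction} into the LHS of \eqref{eq:transition probability intro}, pulling the unchanged matrix factors outside the sum over $(a,b)$, and applying the bulk relation
\[
M_c^\u M_d^\r \;=\; \sum_{a,b=0}^I \R_{a,b}^{c,d}\, M_b^\r M_a^\u
\]
converts the $M_b^\r M_a^\u$ factor in the numerator of $\mu_\hP$ into the $M_c^\u M_d^\r$ factor in the numerator of $\mu_\hQ$. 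The left and right boundary moves are handled analogously, with the affected factor being at position $1$ or $N$ and with $\ll W|$ or $|V\rr$ absorbing the appropriate $M^\u$ or $M^\r$ term via the corresponding boundary relation in \eqref{eq:compatibility relations in introduction}.

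Lastly, the denominators in \eqref{eq:matrix ansatz in introduction} must also be invariant so that the numerator identity upgrades to the full probability identity. Summing the bulk relation over $c, d$ and using $\sum_{c,d} \R_{a,b}^{c,d} = 1$ gives
\[
\Bigl(\sum_{c=0}^I M_c^\u\Bigr)\Bigl(\sum_{d=0}^I M_d^\r\Bigr) \;=\; \Bigl(\sum_{b=0}^I M_b^\r\Bigr)\Bigl(\sum_{a=0}^I M_a^\u\Bigr),
\]
and summing the two boundary relations gives $\ll W|\sum_d M_d^\r = \ll W|\sum_a M_a^\u$ and $\bigl(\sum_c M_c^\u\bigr)|V\rr = \bigl(\sum_b M_b^\r\bigr)|V\rr$, showing that the denominator is unchanged by each local move. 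The main obstacle I anticipate is the bookkeeping: correctly matching, for each local move, which of the $N$ matrix factors are affected and how the indices $a, b, c, d$ correspond to the up-left-to-down-right ordering of outgoing edges. Once this setup is in place, the three relations in \eqref{eq:compatibility relations in introduction} are precisely engineered for the three local moves and the argument becomes algebraic substitution.
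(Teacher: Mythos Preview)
Your proposal is correct and follows essentially the same route as the paper's own proof: reduce \eqref{eq:transition probability intro} to the three local moves by inducting through $\UU(\hP,\hQ)$, verify each move by matching it to the corresponding relation in \eqref{eq:compatibility relations in introduction}, and then invoke irreducibility (Perron--Frobenius) to conclude uniqueness. Your explicit check that the denominator is invariant under each local move is a point the paper leaves implicit, so your write-up is if anything slightly more complete.
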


The above theorem will be proved in Section \ref{subsec:proof of MPA}.
\begin{remark}
In the numerator of \eqref{eq:matrix ansatz in introduction},  $M_{\tau_1}^{p_1}\times\dots\times M_{\tau_N}^{p_N}\in\hA$ is implicitly identified to an element in $\End(H)$, which gives a scalar after pairing with $\ll W|$ and $|V\rr$. The denominator of \eqref{eq:matrix ansatz in introduction} is also a scalar.
\end{remark}

\subsection{The fused vertex model on a strip}
\label{subsec:fused vertex model on a strip intro}
For full and half space higher spin vertex models, one particular choice of vertex weights is 
 the most extensively studied. They are referred to as the `fused' vertex weights and are constructed through the `fusion procedure'. 
The fusion procedure allows for the construction of spin-$\frac{I}{2}$ solutions of the Yang-Baxter equation (referred to as fused $R$ matrices) and the reflection equation (referred to as fused $K$ matrices) from their spin-$\frac{1}{2}$ (i.e. unfused) counterparts. 
 The fusion procedure was introduced in the representation-theoretic context for $R$ matrices in \cite{karowski1979bound, kulish1981yang, jimbo1985q, kirillov1987exact, kulish2005quantum}, and for $K$ matrices in \cite{mezincescu1992fusion,kulish1992algebraic,frappat2007complete}.
In more recent years, explicit formulas for  fused matrices appear in the physics and probability literature, see for example \cite{mangazeev2014yang,kuniba2016stochastic,BM16,CP,borodin2018higher,borodin2017family,BW,BGW} for  fused $R$ matrices and \cite{ML,Jimmy_He} for  fused $K$ matrices.

The spin-$\frac{I}{2}$ vertex models with fused vertex weights (along with their “colored” or multi-species versions) exhibit a rich exactly solvable or integrable structure. As mentioned in Section~\ref{subsec:preface}, these models can also be analytically continued and degenerated into many other models in the KPZ class, including particle systems and polymer models; see, for example, \cite[Figure 1]{CP} and \cite[Figures 1 and 2]{kuan2018algebraic} for summaries in the full-space setting, and \cite{Jimmy_He} for an example in half-space.
The present paper focuses on the stationary measures of fused vertex models on a strip, prior to analytic continuation. In the future, however, the methods developed here may be extended to study the stationary measures of a broader class of open boundary models arising through degeneration by taking various parameter limits. Advancing in this direction would require a better understanding of (1) the analytic continuation of the general fused $K$ matrices and their degeneration to open boundary models, and (2) the analytic continuation of the fused matrix product ansatz—specifically, whether there exists a linear representation of the algebra in Proposition~\ref{prop:unfused ZF and GZ solution} that admits the analytic continuation of the matrices appearing in the ansatz. We defer this direction to future research.

To define the fused vertex model on a strip, in Section \ref{subsec:Fusion of vertex weights} we will introduce the fusion procedure in detail and define the fused $R$ and $K$ matrices.
We will only provide a brief introduction  in this subsection. 

We consider the fundamental solution   of the Yang-Baxter equation given as follows:
$$
\begin{tabular}{c|cccccc}
Configuration:& \tikz{0.7}{
	\draw[dashed] (-1,0) -- (1,0);
	\draw[dashed] (0,-1) -- (0,1); 
} & \tikz{0.7}{
	\draw[dashed] (-1,0) -- (1,0);
	\draw[dashed] (0,-1) -- (0,1); 
        \draw[path, thick] (-1,0)--(1,0); 
} & \tikz{0.7}{
	\draw[dashed] (-1,0) -- (1,0);
	\draw[dashed] (0,-1) -- (0,1); 
        \draw[path, thick] (-1,0)--(0,0)--(0,1); 
} & \tikz{0.7}{
	\draw[dashed] (-1,0) -- (1,0);
	\draw[dashed] (0,-1) -- (0,1); 
         \draw[path, thick] (0,-1)--(0,1);
}& \tikz{0.7}{
	\draw[dashed] (-1,0) -- (1,0);
	\draw[dashed] (0,-1) -- (0,1); 
       \draw[path, thick] (0,-1)--(0,0)--(1,0); 
}& \tikz{0.7}{
	\draw[path, thick] (-1,0) -- (1,0);
	\draw[path, thick] (0,-1) -- (0,1); 
}\\ 
\\[0.01cm]
Probability:& $1$ & $\frac{q(1-u)}{1-qu}$ & $\frac{1-q}{1-qu}$ & $\frac{1-u}{1-qu}$ & $\frac{u(1-q)}{1-qu}$ & 1 \\ 
\end{tabular} 
$$
We refer to this solution   as the unfused $R$ matrix $\sR(u)=\sR^1(u)\in\End\lb\C^2\o\C^2\rb$, where $\sR(u)_{a,b}^{c,d}$ for $a,b,c,d\in\{0,1\}$ denotes the probability (given  above) that the numbers of outgoing arrows being  $c$ (upwards) and $d$ (rightwards), conditioned on the numbers of incoming arrows being $a$ (from below) and $b$ (from the left). We consider $\sR(u)$ as a matrix-valued function of the `spectral parameter' $u\in\C$, depending on   parameter $q$.

We also consider the following solution of the reflection equation: 
$$
\begin{tabular}{c|cccc}
Configuration:& \tikz{0.7}{
	\draw[dashed] (-1,0) -- (0,0);
	\draw[dashed] (0,0) -- (0,1); 
} & \tikz{0.7}{
	\draw[dashed] (-1,0) -- (0,0);
	\draw[dashed] (0,0) -- (0,1); 
        \draw[path, thick] (0,0)--(0,1); 
} & \tikz{0.7}{
	\draw[dashed] (-1,0) -- (0,0);
	\draw[dashed] (0,0) -- (0,1); 
        \draw[path, thick] (-1,0)--(0,0);
} & \tikz{0.7}{
	\draw[path,thick] (-1,0) -- (0,0)--(0,1); 
}\\ 
\\[0.01cm]
Probability:& $\frac{(\ccc-\aaa)u^2+u}{\ccc u^2+u-\aaa}$ & $\frac{\aaa(u^2-1)}{\ccc u^2+u-\aaa}$ & $\frac{\ccc(u^2-1)}{\ccc u^2+u-\aaa}$ & $\frac{\ccc-\aaa+u}{\ccc u^2+u-\aaa}$ \\ 
\end{tabular} 
$$
We refer to this solution of the reflection equation as the unfused $K$ matrix $\sK(u)=\sK^1(u)\in\End\lb\C^2\rb$, where $\sK(u)_{b}^{c}$ for $b,c\in\{0,1\}$ denotes the probability (given above) that there are $c$ outgoing arrows when there are $b$ incoming arrows. We consider $\sK(u)$ as a matrix-valued function of the `spectral parameter' $u\in\C$, depending on the parameters $\ccc$ and $\aaa$.
 
The next result is well-known, which in particular appears in \cite{Vanicat_fused_MPA} under $(t^2,a,b,c,d)\mapsto(q,\aaa,\bbb,\ccc,\ddd)$.
\begin{proposition}\label{prop:YBE and RE}
We have the Yang-Baxter equation and the reflection equation:
$$
\tikz{0.8}{
\node at (-2.1,-1.1) {\tiny $2$};
\node at (-2.1,1.1) {\tiny $3$};
\node at (1,-2.1) {\tiny $1$};
\draw[lgray,line width=1.5pt,>-]  (-2,-1) -- (0,1); 
\draw[lgray,line width=1.5pt,->]  (0,1) -- (2,1); 
\draw[lgray,line width=1.5pt,>-]  (-2,1) -- (0,-1); 
\draw[lgray,line width=1.5pt,->]  (0,-1) -- (2,-1); 
\draw[lgray,line width=1.5pt,>->]  (1,-2) -- (1,2); 
\node at (-1,0) {$\sR(y)$};
\node at (1,1) {$\sR(x)$};
\node at (1,-1) {$\sR(xy)$}; 
\node at (2.5,0) {$=$};
\node at (2.5,-2.7) {(a)  Yang-Baxter equation};
\node at (2.5,-3.4) {$\sR_{12}(x)\sR_{13}(xy)\sR_{23}(y)=\sR_{23}(y)\sR_{13}(xy)\sR_{12}(x)$}; 
\node at (2.9,-1.1) {\tiny $2$};
\node at (2.9,1.1) {\tiny $3$};
\node at (4,-2.1) {\tiny $1$};
\draw[lgray,line width=1.5pt,>-]  (3,1)--(5,1);
\draw[lgray,line width=1.5pt,->]  (5,1)--(7,-1);
\draw[lgray,line width=1.5pt,>-]  (3,-1)--(5,-1);
\draw[lgray,line width=1.5pt,->]  (5,-1)--(7,1);
\draw[lgray,line width=1.5pt,>->]  (4,-2)--(4,2);
\node at (4,-1) {$\sR(x)$};
\node at (4,1) {$\sR(xy)$};
\node at (6,0) {$\sR(y)$}; 
}  
\quad\quad \quad 
\tikz{0.6}{
\node at (-1.1,1.1) {\tiny $1$};
\node at (-1.1,-1.1) {\tiny $2$};
\node at (5.9,1.1) {\tiny $1$};
\node at (5.9,-1.1) {\tiny $2$};
\draw[lgray,line width=1.5pt,>->] (-1,1)--(1,-1)--(2,-1)--(2,4);
\draw[lgray,line width=1.5pt,>->]  (-1,-1)--(1,1)--(4,1)--(4,4);
\draw[lgray,line width=1.5pt,>->]  (6,1)--(11,1)--(11,2)--(9,4);
\draw[lgray,line width=1.5pt,>->]  (6,-1)--(9,-1)--(9,2)--(11,4);
\node at (5,1.5) {$=$};
\node at (5,-2.4) {(b)  Reflection equation};
\node at (5,-3.4) {$\sK_2(y)\sR_{12}(xy)\sK_1(x)\sR_{21}\left(\frac{x}{y}\right)=\sR_{12}\left(\frac{x}{y}\right)\sK_1(x)\sR_{21}(xy)\sK_2(y)$};
\node at (0,0) {$\sR\lb\frac{x}{y}\rb$};
\node at (2,-1) {$\sK(x)$};
\node at (2,1) {$\sR(xy)$};
\node at (4,1) {$\sK(y)$};
\node at (9,-1) {$\sK(y)$};
\node at (9,1) {$\sR(xy)$};
\node at (11,1) {$\sK(x)$};
\node at (10,3) {$\sR\lb\frac{x}{y}\rb$}; 
}
$$ 
where the graphs above mean equalities of partition functions after fixing the external configurations (i.e. whether there are arrows in the external edges) and summing over all the possible internal configurations. Equivalently, they can also be understood as equalities of compositions of operators, which are given explicitly right below the graphs.  Each path in the graphs corresponds to a distinct copy of the space $\C^2$, labeled by the number at the starting point of the path. The operator $\sR_{ij}(u)$ means $\sR(u)\in\End\lb\C^2\o\C^2\rb$ acting on the $i$-th and $j$-th spaces and  $\sK_{i}(u)$ means $\sK(u)\in\End\lb\C^2\rb$ acting on the $i$-th space.
\end{proposition}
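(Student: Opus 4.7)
The plan is to verify the Yang--Baxter equation (a) and the reflection equation (b) by direct matrix element computation, since both are finite identities between operators on low-dimensional tensor products of $\C^2$. Before any computation, I would observe that both claims can alternatively be deduced by observing that the stochastic weights in the two tables differ from the standard trigonometric six-vertex $R$ and $K$ matrices (e.g.\ as presented in \cite{Vanicat_fused_MPA} under $(t^2,a,b,c,d)\mapsto(q,\aaa,\bbb,\ccc,\ddd)$) only by overall scalar gauge factors on each vertex, and both the YBE and RE are invariant under such scalar rescalings. So the cleanest approach is to reduce to the already-verified non-stochastic version and cite that reference.

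For an independent direct check of the YBE, the key observation is that each $\sR(u)$ preserves total arrow number (the third condition in \eqref{eq:stochasticity intro}), so both sides of (a) preserve the total arrow number in $(\C^2)^{\otimes 3}$. The equation therefore splits into four sectors indexed by $n\in\{0,1,2,3\}$. The $n=0$ and $n=3$ sectors are one-dimensional and yield weight $1$ on both sides (the empty and fully-loaded configurations). Only the $n=1$ and $n=2$ sectors carry content, giving two $3\times 3$ matrix identities. For each of the $18$ matrix entries one sums over the single internal configuration variable (the number of arrows on the triangle's internal edges, each in $\{0,1\}$), plugs in the table entries, and checks equality of the resulting rational functions in $x,y,q$. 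Particle-hole duality (swapping $0\leftrightarrow 1$ arrows together with an inversion in the spectral parameter) reduces the $n=2$ case to the $n=1$ case.

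For (b), I would do the analogous case analysis on $(\C^2)^{\otimes 2}$. Here the $K$-matrices do not preserve arrow number, so arrow-count sectoring no longer applies; instead I would enumerate the $16$ boundary configurations (four incoming and four outgoing patterns) and, for each, sum over the interior arrow assignments on both sides. Each side then becomes an explicit rational function in $x,y,\aaa,\ccc$ with a common factorable denominator, and equality is checked after clearing denominators and comparing polynomial numerators.

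The main obstacle is the algebraic bookkeeping for (b): each matrix entry is a sum of several products of rational weights with quartic denominators in $x,y$, and many cancellations occur only after the sum over internal configurations is carried out in full. For this reason I would in practice rely on the cited verification in \cite{Vanicat_fused_MPA} together with the scalar-gauge argument of the first paragraph, rather than expand all cases by hand.
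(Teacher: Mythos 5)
The paper gives no proof of this proposition at all: it simply notes the result is well-known and cites \cite{Vanicat_fused_MPA} under the substitution $(t^2,a,b,c,d)\mapsto(q,\aaa,\bbb,\ccc,\ddd)$, which is exactly the route you ultimately take (your sector-by-sector direct verification sketch is a sound but unnecessary backup). Your proposal is therefore correct and matches the paper's approach.
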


The fusion procedure for $R$ and $K$ matrices will be defined in detail in Section \ref{subsec:Fusion of vertex weights}, which   involve  taking collections of $I$ columns and $I$ rows, and view them as one column and row. One consider the composition of operators as shown in the graphs in part (a) and part (b) below in the $I=3$ case, which respectively corresponds to fusion for a bulk and a boundary vertex. Each intersection of two paths are placed with an $\sR(u)$ and the diagonals  (in the boundary case (b)) are placed with $\sK(u)$.
The spectral parameters are indicated in the graphs, which are chosen as $q$-geometric series on each row and column, from the right to the left, and from top to bottom, except   in case (b), where one needs to take  square roots at the diagonal.

$$
\tikz{0.8}{
\draw[lgray,line width=1.5pt,>->] (-2,0)--(2,0);
\draw[lgray,line width=1.5pt,>->] (-2,1)--(2,1);
\draw[lgray,line width=1.5pt,>->] (-2,-1)--(2,-1);
\draw[lgray,line width=1.5pt,>->] (0,-2)--(0,2);
\draw[lgray,line width=1.5pt,>->] (1,-2)--(1,2);
\draw[lgray,line width=1.5pt,>->] (-1,-2)--(-1,2);
\node at (0,0) {$u$};\node at (-1,1) {$u$};\node at (1,-1) {$u$};
\node at (0,1) {$uq^{-1}$};\node at (1,0) {$uq^{-1}$};
\node at (1,1) {$uq^{-2}$};
\node at (0,-1) {$uq$};\node at (-1,0) {$uq$};
\node at (-1,-1) {$uq^{2}$}; 
\node at (0,-2.5) {(a)  Fusion of bulk vertex};
}
\quad\quad\quad\quad\quad\quad\quad\quad\quad\quad\quad\quad
\tikz{0.9}{
\draw[lgray,line width=1.5pt,>->] (-2,-1)--(-1,-1)--(-1,2);
\draw[lgray,line width=1.5pt,>->] (-2,0)--(0,0)--(0,2);
\draw[lgray,line width=1.5pt,>->] (-2,1)--(1,1)--(1,2);
\node at (-1,1) {$u^2$};\node at (0,1) {$u^2q^{-1}$};\node at (-1,0) {$u^2q$};
\node at (1.2,1) {$\sqrt{u^2q^{-2}}$};\node at (0,0) {$\sqrt{u^2}$};
\node at (-1,-1) {$\sqrt{u^2q^2}$};
\node at (0,-2) {(b)  Fusion of boundary vertex};
}
$$
The state spaces of the combination of $I$ edges (where each edge can contain one arrow or no arrow) is $\lb\C^2\rb^{\o I}$. It turns out that this state space is too big for our purposes. There is a subspace of this state space (referred to as the $q$-exchangeable subspace), which can be identified with $\C^{I+1}$. We will prove that the above composed operators preserve this subspace, in the sense that if the incoming distributions from left and/or below are $q$-exchangeable, then so are the outgoing distributions to the right and/or above. 
By restricting the composed operators to the $q$-exchangeable subspaces, we are able to define the fused $R$ matrix $\sR^I(u)\in\End\lb\C^{I+1}\o\C^{I+1}\rb$ and fused $K$ matrix $\sK^I(u)\in\End\lb\C^{I+1}\rb$. 
We will also make use of another fused operator $\sbK^I(u)\in\End\lb\C^{I+1}\rb$, which is essentially a change of parameters of $\sK^I(u)$.

The next theorem gives explicit expressions for our fused $R$ matrices. This formula is cited from \cite{BGW} and was originally derived in \cite{BM16, kuniba2016stochastic}. 
\begin{theorem}
The fused $R$ matrices $\sR^I(u)\in\End\lb\CI\o\CI\rb$ (that will be defined in  Definition \ref{defn:fused R matrix and K matrix} in Section \ref{subsec:Fusion of vertex weights}) has the following explicit formula:
    For all $0\leq a,b,c,d\leq I$, we have:
    $$
    \sR^I(u)_{a,b}^{c,d}=\mathds{1}_{a+b=c+d}u^{d-b}q^{(d-a)I}\sum_{p=0}^{\min(b,c)}\Phi_{q^{-1}}\lb c-p,c+d-p;u,q^{I}u\rb\Phi_{q^{-1}}\lb p,b;q^{I}/u,q^{I}\rb,
     $$
     where $$\Phi_{q^{-1}}(i,j;x,y):=\left(\frac{y}{x}\right)^i\frac{\lb x;q^{-1}\rb_i\lb\frac{y}{x};q^{-1}\rb_{j-i}}{\lb y;q^{-1}\rb_j}\frac{\lb q^{-1};q^{-1}\rb_j}{\lb q^{-1};q^{-1}\rb_i\lb q^{-1};q^{-1}\rb_{j-i}},$$
     and we use $q$-Pochhammer symbol $(x;s)_n:=(1-x)(1-sx)\dots(1-s^{n-1}x)$ for $n\in\mathbb{N}_0$ (where $\mathbb{N}_0:=\{0\}\cup\z_+$).
\end{theorem}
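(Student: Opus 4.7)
The plan is to compute $\sR^I(u)_{a,b}^{c,d}$ directly from its definition (Definition \ref{defn:fused R matrix and K matrix} in Section \ref{subsec:Fusion of vertex weights}) as the restriction of a composition of unfused $\sR$-matrices to the $q$-exchangeable subspace. Concretely, $\sR^I(u)$ arises by placing an unfused $\sR(uq^{i-j})$ at each vertex of the $I\times I$ grid shown just before the theorem, composing the resulting operator on $(\C^2)^{\otimes I}\otimes(\C^2)^{\otimes I}$, and restricting to the $q$-exchangeable subspace identified with $\C^{I+1}\otimes\C^{I+1}$. Hence $\sR^I(u)_{a,b}^{c,d}$ equals a weighted sum over all admissible internal arrow configurations of the $I\times I$ grid with prescribed boundary occupations $(a,b;c,d)$.

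First I would extract the combinatorial skeleton of the formula. Conservation of arrows at each unfused vertex forces $a+b=c+d$, which produces the indicator $\mathds{1}_{a+b=c+d}$. The prefactor $u^{d-b}q^{(d-a)I}$ is obtained by factoring out the global spectral contribution along any reference sampling order of the grid: each outgoing right-arrow contributes a factor of $u$ (balanced against the $b$ incoming left-arrows), and each arrow traversing a row accumulates the $q$-shifts in the column spectral parameters, which together telescope to $q^{(d-a)I}$.

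Next I would exploit the $q$-exchangeable structure to collapse the multi-sum over internal configurations into a one-parameter sum. Following the approach used to derive the stochastic higher-spin $R$-matrix in \cite{CP,borodin2018higher}, one picks a distinguished representative in each $q$-exchangeable equivalence class for the incoming distributions; with this choice, the arrow trajectories through the grid are controlled by a single integer $p$, namely the number of arrows crossing between two designated rectangular subregions of the grid, which is exactly the summation index in the stated formula. Each of the two factors $\Phi_{q^{-1}}$ then arises as the partition function of one of these subregions with prescribed marginals, and its identification with the explicit ratio of $q^{-1}$-Pochhammer symbols ultimately reduces to the $q$-Chu-Vandermonde summation.

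The main obstacle will be the algebraic bookkeeping required to consolidate the product of unfused weights (together with the $u$- and $q$-exponents coming from the geometric progression of spectral parameters) into the compact ratios appearing in $\Phi_{q^{-1}}$ and into the claimed single-index sum. A more efficient alternative route is to recognize that the formula above agrees, after a change of spectral and representation variables, with the explicit stochastic higher-spin $R$-matrix formulas of \cite{mangazeev2014yang,CP,borodin2018higher}; the task then reduces to verifying that our fusion scheme produces the same intertwiner (which is unique up to scalar between symmetric tensor representations of the underlying quantum affine algebra) and matching the parameter conventions, at which point the sum-of-products structure with two $\Phi_{q^{-1}}$ factors can simply be transcribed.
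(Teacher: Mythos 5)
Your ``more efficient alternative route'' is in fact exactly what the paper does: its proof is a two-line reduction observing that $\sR^I(u)$ coincides with the fused $R$ matrices of \cite{BGW} under $q\mapsto 1/q$, and then citing the explicit formula (6.2) there (originally due to \cite{BM16,kuniba2016stochastic}) specialized to two columns of width $I$. Your primary route --- a direct evaluation of the fused partition function over internal configurations of the $I\times I$ grid, collapsing to a single sum over $p$ via $q$-exchangeability and $q$-Chu--Vandermonde --- is a genuinely different and self-contained approach, but as written it is only a plan: the collapse to a one-parameter sum and the identification of each block's partition function with $\Phi_{q^{-1}}$ are precisely the hard content, and you defer them. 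If you pursue the literature-matching route instead, two points need care. First, the specific two-factor $\Phi_{q^{-1}}$ form you are asked to prove is the \cite{BGW}/\cite{BM16}/\cite{kuniba2016stochastic} formula; the expressions in \cite{mangazeev2014yang} and \cite{CP} are written as ${}_4\overline{\phi}_3$ hypergeometric sums and would require a further (nontrivial) identity to be transcribed into the stated form, so they are not interchangeable references here. Second, the appearance of $\Phi_{q^{-1}}$ rather than $\Phi_q$ reflects exactly the $q\mapsto 1/q$ discrepancy between the paper's fusion conventions and those of \cite{BGW}, so the convention-matching step you wave at is where the actual verification lives; the scalar ambiguity in the ``unique intertwiner'' argument is resolved by stochasticity of both normalizations.
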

\begin{proof}
    Observing from the fusion procedure in Section \ref{subsec:Fusion of vertex weights}, we notice that our $\sR^I(u)$ corresponds to the fused $R$ matrices in \cite{BGW} by $q\mapsto1/q$. The result follows from a specialization (for $N=2$ and $L=M=I$) of formula (6.2) in \cite{BGW} (this formula is originally due to \cite{BM16,kuniba2016stochastic}).
\end{proof}
\begin{remark}
There are other explicit formulas for $\sR^I(u)$. An explicit formula is given by the  ${}_4\overline{\phi}_3$ hyper-geometric functions in \cite{mangazeev2014yang} and \cite[Proposition 3.15]{CP}. Another explicit formula appears in \cite{Kuan_stochastic}.
\end{remark}
 
\begin{remark}
    An explicit formula for the fused $K$ matrices was obtained in \cite{ML}. Under a special parameter condition (when the $K$ matrices are upper-triangular), this formula was  proved by induction in \cite{Jimmy_He}. Our fusion procedure for the $K$ matrices is the essentially the same as in these works, modulo some change of parameters. We choose not to pursue the general formula for the fused $K$ matrices, since this is rather technically involved and orthogonal to the focus of this paper.
\end{remark}
\begin{remark}
    In the $I=2$ case, the fused matrices $\sR^2(u)$, $\sK^2(u)$ and $\sbK^2(u)$ are given by equations (3.27), (3.31) and (3.32) in \cite{Vanicat_fused_MPA}, where $z, t^2, a, b, c, d$ therein correspond to $u, q, \aaa, \bbb, \ccc, \ddd$ in this paper.
\end{remark}
\begin{remark}
    In our fusion of $K$ matrices, the spectral parameters  in the bulk take the form of $q$-geometric series involving  $u^2$, while on the diagonal, one needs to take their square roots. 
     This is because our reflection equation is slightly different from \cite[Proposition 2.5]{Jimmy_He}. Our notation is consistent with the literature on ZF and GZ relations  as mentioned in  Section \ref{subsec:preface}.
\end{remark}

Next we define the fused vertex model on a strip, whose bulk and boundary vertex weights are defined by specializing the spectral parameters in a  particular way in the fused operators $\sR^I(u)$, $\sK^I(u)$ and $\sbK^I(u)$. 
\begin{definition}[Fused vertex model on a strip]
\label{def: fused model intro}
    Assume $I\in\z_+$. Suppose we have a bulk parameter $q$, boundary parameters $\aaa,\bbb,\ccc,\ddd$ and a `spectral parameter' $\k$ satisfying:
\be\label{eq:condition irreducible intro}
    0<q<1,\quad0<\k<q^{\frac{I-1}{2}},\quad \aaa,\bbb,\ccc,\ddd>0,\quad \aaa-\ccc>q^{\frac{1-I}{2}}/\kappa,\quad\bbb-\ddd>q^{\frac{1-I}{2}}/\kappa.
\ee
Consider the fused $R$ matrix $\sR^I(u)\in\End\lb\C^{I+1}\o\C^{I+1}\rb$ and $K$ matrices $\sK^I(u), \sbK^I(u)\in\End\lb\C^{I+1}\rb$ (which are matrix-valued functions of $u\in\C$ depending on parameters $q,\aaa,\bbb,\ccc,\ddd$) that will be defined in  Definition \ref{defn:fused R matrix and K matrix} in Section \ref{subsec:Fusion of vertex weights}.
We define, for all $0\leq a,b,c,d\leq I$:
\be\label{eq:fused weights intro}
\R_{a,b}^{c,d}:=\sR^I\lb\k^2\rb_{b,a}^{d,c},\quad \lK_a^d:=\sK^I\lb\k\rb_a^d,\quad \rK_b^c:=\sbK^I(1/\k)_b^c.\ee 
The spin-$\frac{I}{2}$ vertex model on a strip   with these vertex weights will be referred to in this paper as the fused vertex model on a strip. Proposition \ref{prop:stochasticity irreducible} below guarantees that this model is stochastic and irreducible.
\end{definition}
\begin{remark}
    In our definition \eqref{eq:fused weights intro} of vertex weights in the fused vertex model, we are specializing the spectral parameters in a special way, and we are also swapping the indices in the bulk vertex weights. As will become clear later on, this choice is to make sure  that our  model is solvable by the fused matrix ansatz.
    \end{remark}
\begin{remark}
    It is possible that the fused vertex model on a strip could be obtained directly from some version of the inhomogeneous six-vertex model on a strip by treating every $I$ columns and $I$ rows in the model as one column and one row. However we choose not to adopt this approach in this paper.
\end{remark}
 
\begin{proposition}\label{prop:stochasticity irreducible}
    When the parameters $q,\k,\aaa,\bbb,\ccc,\ddd$ satisfy    \eqref{eq:condition irreducible intro}, the vertex weights $\R$, $\lK$ and $\rK$ defined by \eqref{eq:fused weights intro}
    satisfy conditions \eqref{eq:stochasticity intro}  and \eqref{eq:condition irreducible} (which guarantee stochasticity and irreducibility of the model).
\end{proposition}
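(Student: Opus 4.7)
The plan is to derive every assertion in \eqref{eq:stochasticity intro} and \eqref{eq:condition irreducible} from the corresponding property of the unfused building blocks $\sR(u)$ and $\sK(u)$ evaluated at the finite collection of specialized spectral parameters appearing in the fusion diagrams of Section \ref{subsec:Fusion of vertex weights}.

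The conservation and stochasticity properties follow almost mechanically. Inspection of the tables shows that $\sR(u)$ is stochastic and carries states with $a+b$ arrows to states with $c+d=a+b$, and the same is trivially true for $\sK(u)$ and $\sbK(u)$. Since $\sR^I$, $\sK^I$, $\sbK^I$ are defined as compositions of these unfused pieces acting on $\CoI$ restricted to the $q$-exchangeable subspace (shown in Section \ref{subsec:Fusion of vertex weights} to be invariant under the composition), both stochasticity and arrow conservation descend to the fused operators, and hence to $\R,\lK,\rK$ defined by \eqref{eq:fused weights intro}, modulo the harmless index swap in the bulk definition.

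The substantive step is strict positivity. It suffices to show that every unfused weight participating in a fusion composition is strictly positive at its prescribed spectral argument, since then every fused matrix entry is a sum of strictly positive contributions wherever the conservation law allows a nonzero value. For the bulk $\R$, the unfused $\sR(v)$ appears with $v=\k^{2}q^{-k}$ for $k$ in the finite range dictated by the fusion of $I$ rows and $I$ columns, and the hypothesis $0<\k<q^{(I-1)/2}$ together with $0<q<1$ forces $0<v<1$ throughout, which is precisely the range in which all five nontrivial entries of $\sR$ are strictly positive.

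The main obstacle will be the analogous sign analysis for the boundary operators. The entries of $\sK(u)$ are signed rational functions whose signs depend on the relation between $\aaa-\ccc$ and $1/u$ and on whether $u^{2}\lessgtr 1$, so positivity has to be checked case by case at every specialized $u$ produced by the boundary fusion diagram. The key input is the hypothesis $\aaa-\ccc>q^{(1-I)/2}/\k$, which I expect to rearrange to a uniform bound of the form $(\aaa-\ccc)\,u>1$ at every $u$ appearing in the fusion diagram producing $\sK^I(\k)$; this should simultaneously pin down the signs of the denominator $\ccc u^{2}+u-\aaa$ and of each numerator in the $\sK$ table. The mirror argument, using $\bbb-\ddd>q^{(1-I)/2}/\k$ together with the change of parameters relating $\sbK$ to $\sK$, should handle the right boundary, completing the proof.
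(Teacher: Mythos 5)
Your overall strategy is exactly the paper's: reduce stochasticity, arrow conservation, and strict positivity of the fused weights to the corresponding properties of each unfused vertex appearing in the fusion diagrams at its specialized spectral parameter, then use $0<\k<q^{\frac{I-1}{2}}$ for the bulk and $\aaa-\ccc,\ \bbb-\ddd>q^{\frac{1-I}{2}}/\k$ for the boundaries. Your identification of the bulk range ($u=\k^2q^i\in(0,1)$ for $|i|\leq I-1$) and the rearrangement of the hypothesis into $\aaa-\ccc>1/u$ at every $u=\k q^i$, $|i|\leq\frac{I-1}{2}$, occurring in the left-boundary diagram match Claim \ref{claim:in the proof of stochasticity} in the paper.

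Two points in your plan need repair before it is a proof. First, the right-boundary fused operator $\sbK^I(1/\k)$ is built via \eqref{eq:fusion bar K matrix} not only from $\sbK$ vertices but also from $\sR^{-1}$ vertices at spectral parameters $q^i/\k^2$ with $2-I\leq i\leq I-2$; your enumeration never mentions these, and their positivity holds in the regime $u>1$ (which does follow from $\k<q^{\frac{I-1}{2}}$, since $q^{I-2}/\k^2>q^{-1}>1$), not in the regime $u\in(0,1)$ you verify for $\sR$. (The left-boundary diagram likewise contains ordinary $\sR$ vertices at $\k^2q^i$, though those fall under your bulk check.) Second, deducing positivity of $\sbK$ from that of $\sK$ via the substitution $\aaa\mapsto-\ddd$, $\ccc\mapsto-\bbb$ does not work as stated: the substitution sends positive boundary parameters to negative ones, so the sign analysis for $\sK$ (which uses $\aaa,\ccc>0$ and $u\in(0,1)$) does not transfer, and the $\sbK$ vertices occur at $u=q^i/\k>1$. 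One must instead check directly that for $u>1$ and $\bbb-\ddd>u$ all four entries of $\sbK(u)$ are positive (e.g.\ $\bbb u^2-u-\ddd>\bbb-\ddd-u>0$ and $(\bbb-\ddd)u^2-u>u^3-u>0$), which is what the paper does. Finally, your parenthetical that $\sK$ and $\sbK$ carry states with $a+b$ arrows to states with the same arrow count is false --- the boundary matrices do not conserve arrows --- but this is harmless since \eqref{eq:stochasticity intro} imposes conservation only on $\R$.
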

 
  The above proposition will be proved in Section \ref{subsec:Proof of proposition stochasticity irreducible}.
 \begin{remark}
    Besides \eqref{eq:condition irreducible intro}, it is possible that there are other regions of the parameters on which the fused vertex weights are stochastic. We choose not to pursue this point and only study the model under \eqref{eq:condition irreducible intro}.
\end{remark}
\subsection{Fusion of ZF and GZ relations and stationary measure}
\label{subsec:fusion of ZF and GZ intro}
In order to study the stationary measure of the fused vertex model on a strip using the matrix product ansatz in Theorem \ref{thm:general matrix ansatz higher spin}, one needs to find concrete examples of
$\{M_{j}^\u,M_{j}^\r\}_{j=0}^I$, $\ll W|$ and $|V\rr$ that satisfy
the consistency relations:
\be\label{eq:consistency relations again}
    M_c^\u M_d^\r=\sum_{a,b=0}^I\R_{a,b}^{c,d}M_b^\r M_a^\u, \quad
    \ll W|M_d^\r=\sum_{a=0}^I\lb\lK_a^d\rb\ll W|M_a^\u, \quad
    M_c^\u|V\rr=\sum_{b=0}^I\lb\rK_b^c\rb M_b^\r|V\rr, 
\ee 
for the vertex weights $\R$, $\lK$ and $\rK$ given by \eqref{eq:fused weights intro}.
These relations look overwhelming to solve at first, however, since the vertex weights are obtained from the fusion procedure, our insight is to  
obtain the elements $\{M_{j}^\u,M_{j}^\r\}_{j=0}^I\subset\hA$ by defining a corresponding fusion procedure of the matrix product ansatz. Such fusion was developed in the physics work \cite{Vanicat_fused_MPA}, which was explicitly written only in the spin-$1$ ($I=2$) case but it actually works well in general spin. We will rigorously develop this fusion procedure in arbitrary spin.

More precisely, the fusion of matrix ansatz involves two steps. We first realize that the consistency relations \eqref{eq:consistency relations again} can be seen as a specialization (of the spectral parameters) in the so-called Zamolodchikov-Faddeev (ZF) and Ghoshal-Zamolodchikov (GZ) relations. It is known in the physics literature \cite{ZZ,F,SW97,GZ94,Review_and_progress} that these relations respectively govern the matrix  ansatz of  integrable systems in the bulk and at two  open boundaries. We then develop the fusion procedure of   ZF and GZ relations generalizing \cite{Vanicat_fused_MPA}.
The specialization of spectral parameters in the fused ZF and GZ relations give us \eqref{eq:consistency relations again}.

We first give the definition of the ZF and GZ relations:
\begin{definition}
Assume $\fV$ is a vector space over $\C$. Assume that $R(u)\in\End\lb\fV\o \fV\rb$ and $K(u),\bK(u)\in\End\lb\fV\rb$ are analytic functions of $u$, which respectively have singularities at finite subsets $\mathcal{P}_{R}$, $\mathcal{P}_{K}$ and $\mathcal{P}_{\bK}$ of $\C$. 
Suppose $\hA$ is an abstract algebra over $\mathbb{C}$ which admits a linear representation on a vector space $H$.  

   A function $\M(u)\in \fV\otimes\hA$ on $u\in\mathbb{C}^*:=\C\setminus\{0\}$ satisfies ZF relation with the $R$ matrix $R(u)$ if for all $x,y\in\mathbb{C}^*$ such that $x/y\notin\mathcal{P}_{R}$, we have:
    \be\label{eq: ZF relation}
    \M(y)\o\M(x)=\cR\left(\frac{x}{y}\right)\M(x)\o\M(y), 
    \ee
    where $\cR(u)=PR(u)$   and $P\in\End(\fV\otimes\fV)$ swaps the two factors of the space $\fV$. We remark that in the above ZF relation, the $R$ matrix only acts on   $\fV\o\fV$ component but leaves the $\hA$ factors alone, and the multiplication in the algebra $\hA$ is done implicitly so that both sides of \eqref{eq: ZF relation} are elements in $\fV\o \fV\o\hA$.
     
    Assume that $\ll W|\in H^*$ and $|V\rr\in H$, which we refer to as boundary vectors. 
    We say that $\M(u)$ satisfies the first GZ relation with the $K$ matrix $K(u)$  if for all $u\in\mathbb{C}^*\setminus\mathcal{P}_{K}$, 
    \be \label{eq:first GZ}
        \lw \left( K(u)\M\left(\frac{1}{u}\right)-\M(u) \right)=0. 
    \ee
    We say that $\M(u)$ satisfies the second GZ relation with the $K$ matrix $\bK(u)$  if for all $u\in\mathbb{C}^*\setminus\mathcal{P}_{\bK}$,
\be \label{eq:second GZ}
\left(\overline{K}(u)\M\left(\frac{1}{u}\right)-\M(u) \right) \rv = 0.
\ee
We remark that in the above GZ relations \eqref{eq:first GZ} and \eqref{eq:second GZ}, the $K$ matrices only act on the $\fV$ component but leaves $\hA$ alone, and both boundary vectors $\ll W|$ and $|V\rr$ are paired with the $\hA$ factor and leaves $\fV$ alone. 
\end{definition}
\begin{remark}
We will be interested in the case $\fV=\mathbb{C}^{I+1}$. Under its natural basis, the (vector-valued) function $\M(u)\in\fV\o\hA$ can be written explicitly as:
 $$\M(u)=[M_0(u),\dots,M_I(u)]^T,$$  where $M_j(u)$, $0\leq j\leq I$ are functions of $u\in\mathbb{C}^*$ with values in $\hA$. 
 
 The ZF relation \eqref{eq: ZF relation} can be written explicitly: For all $0\leq c,d\leq I$
 and all $x,y\in\C^*$ such that $x/y\notin\mathcal{P}_{R}$,
    \be \label{eq:ZF relation actually means}
   M_c(y)M_d(x)= \sum_{a,b=0}^I R_{b,a}^{d,c}\left(\frac{x}{y}\right)M_b(x)M_a(y),
    \ee  
where both sides of the equation are elements in $\hA$. We note that \eqref{eq:ZF relation actually means} does not involve tensor products between the $M$'s because, as explained below \eqref{eq: ZF relation}, the multiplication in the algebra $\hA$ is performed implicitly therein, so that both sides of \eqref{eq: ZF relation} are elements in $\fV \o \fV \o \hA$.

    The first GZ relation \eqref{eq:first GZ}  can  be written explicitly: For all $0\leq d\leq I$ and all $u\in\C^*\setminus\mathcal{P}_{K}$,
    \be\label{eq:first GZ relation actually means} 
       \ll W|M_d(u)= \sum_{a=0}^I K_a^d(u)\lw  M_a\left(\frac{1}{u}\right).  
    \ee

    The second GZ relation \eqref{eq:second GZ}  can  be written explicitly: For all $0\leq c\leq I$ and all $u\in\C^*\setminus\mathcal{P}_{\bK}$,
   \be\label{eq:second GZ relation actually means} 
   M_c(u)|V\rr= \sum_{b=0}^I\overline{K}_b^c(u)M_b\left(\frac{1}{u}\right) \rv.
    \ee
\end{remark}

The next result realizes  the consistency relations \eqref{eq:consistency relations again} as specializations of ZF and GZ relations.
 
\begin{proposition}
    \label{prop:solution of compatibility from ZF and GZ}
    Suppose $\fV=\mathbb{C}^{I+1}$.
    Suppose $\M(u)=[M_0(u),\dots,M_I(u)]^T\in \fV\otimes\hA$ satisfies the $\zf$ and $\gz$ relations   with the $R$ and $K$ matrices $R(u)$, $K(u)$ and $\bK(u)$. Then for any $\k\in\mathbb{C}^*$ such that $\k^2\notin\mathcal{P}_R$, $\k\notin\mathcal{P}_K$ and $1/\k\notin\mathcal{P}_{\bK}$, the following elements:
    $$M^\u_{j}=M_j\lb1/\k\rb,\quad
M^\r_{j}=M_j(\k),\quad 0\leq j\leq I$$
satisfy the consistency relations \eqref{eq:consistency relations again} with respect to the vertex weights:
$$
\R_{a,b}^{c,d}=R_{b,a}^{d,c}\lb\k^2\rb,\quad \lK_a^d=K_a^d(\k),\quad
\rK_b^c=\bK_b^c\lb1/\k\rb,\quad 0\leq a,b,c,d\leq I.
$$ 
\end{proposition}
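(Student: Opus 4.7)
The plan is that this is essentially a direct substitution: each of the three consistency relations in \eqref{eq:consistency relations again} arises from plugging a specific value of the spectral parameter into one of the relations \eqref{eq:ZF relation actually means}, \eqref{eq:first GZ relation actually means}, \eqref{eq:second GZ relation actually means}. The only real content is matching indices and arguments carefully, and verifying that the hypotheses on $\kappa$ ensure we avoid the singular sets $\mathcal{P}_R,\mathcal{P}_K,\mathcal{P}_{\bK}$ so that these specializations are legitimate.

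First I would handle the bulk relation. In the component form \eqref{eq:ZF relation actually means} of the ZF relation, set $y=1/\kappa$ and $x=\kappa$, so that $x/y=\kappa^2$. Since $\kappa^2 \notin \mathcal{P}_R$ by hypothesis, the relation yields
\[
M_c(1/\kappa)\,M_d(\kappa) \;=\; \sum_{a,b=0}^I R_{b,a}^{d,c}(\kappa^2)\,M_b(\kappa)\,M_a(1/\kappa),
\]
and this becomes $M^\u_c M^\r_d = \sum_{a,b} \R_{a,b}^{c,d} M^\r_b M^\u_a$ once the definitions $M_j^\u = M_j(1/\kappa)$, $M_j^\r = M_j(\kappa)$, and $\R_{a,b}^{c,d}=R_{b,a}^{d,c}(\kappa^2)$ are applied.

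Next I would specialize the first GZ relation \eqref{eq:first GZ relation actually means} at $u=\kappa$, which is permitted because $\kappa\notin\mathcal{P}_K$. Since $1/u = 1/\kappa$, this gives $\ll W|M_d(\kappa) = \sum_a K_a^d(\kappa)\ll W|M_a(1/\kappa)$, which is exactly $\ll W|M_d^\r = \sum_a \lK_a^d \ll W|M_a^\u$ after applying $\lK_a^d = K_a^d(\kappa)$. Finally I would specialize the second GZ relation \eqref{eq:second GZ relation actually means} at $u=1/\kappa$, using $1/\kappa\notin\mathcal{P}_{\bK}$. Then $1/u=\kappa$, and the relation becomes $M_c(1/\kappa)|V\rr = \sum_b \bK_b^c(1/\kappa) M_b(\kappa)|V\rr$, i.e.\ $M_c^\u |V\rr = \sum_b \rK_b^c M_b^\r |V\rr$ under $\rK_b^c = \bK_b^c(1/\kappa)$.

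There is no real obstacle here; the only place one can trip up is index bookkeeping in the ZF step, where the matrix $R_{b,a}^{d,c}(x/y)$ has its indices swapped relative to $\R_{a,b}^{c,d}$, matching precisely the definition in the statement. Consequently the proof is a three-line verification, and the proposition can be stated and proved in closed form without additional computation.
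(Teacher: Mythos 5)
Your proposal is correct and follows exactly the paper's own argument: specialize the ZF relation at $x=\k$, $y=1/\k$, the first GZ relation at $u=\k$, and the second GZ relation at $u=1/\k$, with the index bookkeeping matching the stated definitions of $\R$, $\lK$, $\rK$. The paper records this as a one-line observation; your version just spells out the same substitutions.
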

\begin{proof}
    This can be seen by taking $x=\k$ and $y=1/\k$ in the ZF relation \eqref{eq:ZF relation actually means}, taking $u=\k$ in the first  GZ relation \eqref{eq:first GZ relation actually means} and taking $u=1/\k$ in the second   GZ relation \eqref{eq:second GZ relation actually means}.
\end{proof}

We want to find the solutions of ZF and GZ relations for the fused  matrices $\sR^I(u)$, $\sK^I(u)$ and $\sbK^I(u)$. We start with a solution  for the unfused matrices (i.e. when $I=1$) known in the physics literature, then we perform fusion and obtain such solutions for the  fused matrices.
\begin{proposition}[Section 2.4 in \cite{Vanicat_fused_MPA}, for $t^2\mapsto q$]\label{prop:unfused ZF and GZ solution}
Assume $\d,\e\in\hA$, $\ll W|\in H^*$ and $|V\rr\in H$ satisfy:
\be\label{eq:d and e in intro}
\d\e-q\e\d=1-q, \quad
    \ll W|\lb \aaa\e-\ccc\d+1\rb=0 ,\quad
    \lb \bbb\d-\ddd\e+1\rb|V\rr=0. 
\ee
Define $\sM_0(u)=u+\e$ and $\sM_1(u)=\frac{1}{u}+\d$. Then the vector-valued function 
$\sbM(u)=[\sM_0(u),\sM_1(u)]^T\in\C^2\o\hA$ 
satisfies the $\zf$ and $\gz$ relations with the unfused matrices $\sR(u)$, $\sK(u)$ and $\sbK(u)$ given in Definition \ref{def:unfused R and K matrices}.
\end{proposition}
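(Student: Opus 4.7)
The plan is a direct verification: since $\sM_0(u) = u + \e$ and $\sM_1(u) = 1/u + \d$ are each affine in $\e$ and $\d$, every ZF and GZ identity expands into a short Laurent polynomial in the spectral parameters with coefficients in $\hA$, and the verification reduces to matching these coefficients using the bulk relation $\d\e = q\e\d + (1-q)$ together with the two boundary conditions in \eqref{eq:d and e in intro}.

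For the ZF relation \eqref{eq:ZF relation actually means}, the four cases $(c,d)\in\{0,1\}^2$ split into two groups. The diagonal cases $(0,0)$ and $(1,1)$ reduce to the trivial identities $(y+\e)(x+\e) = (x+\e)(y+\e)$ and $(1/y+\d)(1/x+\d) = (1/x+\d)(1/y+\d)$, since for these choices the only nonzero entry of $\sR(u)_{a,b}^{c,d}$ is the one with $(a,b)=(c,d)$, having value $1$. For the off-diagonal cases $(c,d) = (0,1)$ and $(1,0)$, I would substitute the nonzero entries of $\sR(u)$ at $u=x/y$ into \eqref{eq:ZF relation actually means}, expand both sides, use $\d\e = q\e\d + (1-q)$ to eliminate every occurrence of $\d\e$ on the right, and then compare the coefficients of $1$, $\d$, $\e$, and $\e\d$. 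After clearing the common denominator $1 - qu$, each coefficient identity becomes a polynomial equality in $x$ and $y$ that is verifiable by inspection.

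For the first GZ relation \eqref{eq:first GZ relation actually means}, I would pair both sides with $\ll W|$ and use $\ll W|\aaa\e = \ll W|(\ccc\d - 1)$, derived from $\ll W|(\aaa\e - \ccc\d + 1) = 0$, to rewrite every $\e$ in favor of $\d$. Each case $d \in \{0,1\}$ then reduces to an identity of the form $\ll W|[P(u) + Q(u)\d] = 0$ for explicit rational functions $P$ and $Q$ of $u$ with parameters $\aaa, \ccc, q$; clearing the common denominator $\ccc u^2 + u - \aaa$ of the entries of $\sK(u)$ yields a polynomial identity in $u$ that follows by direct computation. The second GZ relation \eqref{eq:second GZ relation actually means} is handled symmetrically, using $\bbb\d|V\rr = (\ddd\e-1)|V\rr$ to rewrite every $\d$ in favor of $\e$ on $|V\rr$, together with the entries of $\sbK(u)$.

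The main obstacle is not conceptual but organizational: one must carefully track which spectral parameter belongs to which factor on each side of \eqref{eq:ZF relation actually means} and apply the substitution $\d\e \mapsto q\e\d + (1-q)$ in precisely the right place, so that the rational weights $(1-q)/(1-qu)$ and $(1-u)/(1-qu)$ appearing in $\sR(u)$ combine into matching coefficients. The underlying algebraic identities are essentially those that validate the original DEHP matrix ansatz for open ASEP \cite{DEHP93}, adapted to the stochastic six-vertex normalization employed here; the corresponding checks are carried out for the specialization $t^2 = q$ in \cite[Section 2.4]{Vanicat_fused_MPA}, from which the present proposition is cited.
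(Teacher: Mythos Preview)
The paper does not give its own proof of this proposition; it is stated with the attribution ``Section 2.4 in \cite{Vanicat_fused_MPA}, for $t^2\mapsto q$'' and no argument is supplied. Your proposal therefore cannot be compared against a proof in the paper, but it is the natural direct verification and is correct: the diagonal ZF cases are trivial, the off-diagonal ones reduce after clearing $1-qu$ to polynomial identities in $x,y$ once $\d\e$ is replaced by $q\e\d+(1-q)$, and the two GZ relations become linear identities in $\d$ (resp.\ $\e$) after using the boundary conditions to eliminate $\e$ (resp.\ $\d$). This is exactly the kind of check that \cite{Vanicat_fused_MPA} performs, so your write-up effectively supplies what the paper leaves to the citation.
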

\begin{remark}
    We will later make use of
    a concrete example of $\d$, $\e$, $\ll W|$ and $|V\rr$ satisfying \eqref{eq:d and e in intro}, which is given by the USW representation introduced in \cite{USW04}, see Proposition \ref{prop: USW representation} and Proposition \ref{prop:relations of d e compared with dehp}.
\end{remark} 

\begin{theorem}[Fusion of ZF and GZ relations]
\label{thm:Fusion of ZF and GZ relations}
    Suppose that $\sM_0(u)$ and $\sM_1(u)$ are given by Proposition \ref{prop:unfused ZF and GZ solution} above. For any $I\in\z_+$, we define the functions $\sM^I_{\zeta}(u)$, $0\leq\zeta\leq I$ on $u\in\C$ with values in $\hA$:
    \be\label{eq:fused solution ZF and GZ intro}
    \sM^I_{\zeta}(u):=\sum_{\substack{\zeta_1+\dots+\zeta_I=\zeta \\ \zeta_1,\dots,\zeta_I\in\left\{0,1\right\}}}\prod_{a\in[[1,I]]}^{\lra}\sM_{\zeta_a}\lb uq^{-\frac{I+1}{2}+a}\rb\in\hA,\ee
where the right arrow means that the product is taken from left to right in increasing order of  index $a\in[[1,I]]$. 
Then the vector $\sbM^I(u):=\left[\sM_0^I(u),\dots\sM_I^I(u)\right]^I\in \C^{I+1}\o\hA$ 
satisfies the $\zf$ and $\gz$ relations with the fused matrices $\sR^I(u)\in\End\lb\C^{I+1}\o\C^{I+1}\rb$ and $\sK^I(u), \sbK^I(u)\in\End\lb\C^{I+1}\rb$ defined in Definition \ref{defn:fused R matrix and K matrix}. 
\end{theorem}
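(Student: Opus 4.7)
The overall strategy is to propagate the unfused $\zf$ and $\gz$ relations from Proposition \ref{prop:unfused ZF and GZ solution} through the ordered product structure of the fused ansatz \eqref{eq:fused solution ZF and GZ intro}. Note that $\sM^I_\zeta(u)$ is defined as a sum, over compositions $(\zeta_1,\dots,\zeta_I)\in\{0,1\}^I$ of $\zeta$, of products $\prod_{a=1}^{I}\sM_{\zeta_a}\lb u q^{-(I+1)/2+a}\rb$ whose spectral parameters form exactly the $q$-geometric sequence that indexes the rows and columns of the fusion diagrams for $\sR^I$ and $\sK^I$. We may therefore view $\sbM^I(u)\in\CI\o\hA$ as the restriction to the $q$-exchangeable subspace $\CI\subset\CoI$ of the tensor product $\bigotimes_{a=1}^{I}\sbM\lb u q^{-(I+1)/2+a}\rb\in\CoI\o\hA$.

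To obtain the fused $\zf$ relation, I would begin with the element $\bigotimes_{a=1}^{I}\sbM\lb y q^{-(I+1)/2+a}\rb \o \bigotimes_{b=1}^{I}\sbM\lb x q^{-(I+1)/2+b}\rb$ inside $\CoI\o\CoI\o\hA$, then apply the unfused $\zf$ relation \eqref{eq: ZF relation} exactly $I^2$ times to swap every $y$-component past every $x$-component. Each swap contributes a factor $\scR\lb(x/y)q^{b-a}\rb$ acting on the two appropriate copies of $\Co$, and the resulting ordered composition of these $I^2$ unfused $\scR$'s is precisely the arrangement that defines $\scR^I(x/y)$ via the fusion of the bulk vertex depicted in part (a) of the fusion procedure. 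Restricting both $\CoI$ factors to their $q$-exchangeable subspaces then yields the fused $\zf$ relation for $\sbM^I$, once we check that the sum over $(\zeta_1,\dots,\zeta_I)$ realizes this projection on each side.

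For the $\gz$ relations, I would combine the unfused $\zf$ and unfused $\gz$ relations following the boundary fusion diagram of part (b). Pairing $\lw$ with $\bigotimes_{a=1}^{I}\sbM\lb u q^{-(I+1)/2+a}\rb$, I would inductively apply the unfused first $\gz$ relation \eqref{eq:first GZ} to the factor nearest the boundary to flip $\sbM(v)\mapsto\sK(v)\sbM(1/v)$, and then use the unfused $\zf$ relation to commute the reversed factor outward past the remaining factors. Iterating this peel-and-commute procedure reproduces exactly the composition of unfused $\sK$'s on the diagonal and unfused $\scR$'s at off-diagonal crossings depicted in the $K^I$ fusion diagram; the square roots appearing on the diagonal entries in that diagram are calibrated so that the $a$-th boundary vertex is evaluated at $u q^{-(I+1)/2+a}$, matching the ansatz shift. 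Restricted to $\CI$, this composition equals $\sK^I(u)$, giving the first fused $\gz$. The second fused $\gz$ is handled symmetrically, pairing $\rv$ on the right and using $\sbK^I$.

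The main obstacle is the combinatorial bookkeeping: one must verify that the peel-and-commute procedure really produces the exact composition of unfused $\scR$ and $\sK$ operators prescribed by the fusion diagram, in the correct order, and also that the unweighted sum over $(\zeta_1,\dots,\zeta_I)$ in \eqref{eq:fused solution ZF and GZ intro} is compatible with the identification of $\CI$ with the $q$-exchangeable subspace of $\CoI$. Modulo these consistency checks, the $\zf$ part reduces cleanly to $I^2$ applications of the unfused $\zf$ identity, while the $\gz$ parts reduce to $I$ applications of the unfused $\gz$ identity interleaved with the unfused $\zf$ along the fusion geometry.
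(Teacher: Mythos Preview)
Your proposal is correct and follows essentially the same approach as the paper: the paper writes $\sbM^I(u)=\Pi\circ\bigotimes_{a=1}^{I}\sbM\lb uq^{-(I+1)/2+a}\rb$, invokes the braided forms of the fused matrices (Theorem~\ref{thm:braided version R K}) and the projector commutativity (Lemma~\ref{lem:q-exchangeble}) to strip away the $\Pi,\Pih$, and then applies the unfused $\zf$ relation $I^2$ times for the bulk and the unfused $\gz$ relation interleaved with unfused $\zf$'s for the boundary, exactly as you outline. The two obstacles you flag---the combinatorial bookkeeping and the compatibility of the unweighted sum with the $q$-exchangeable identification---are precisely what Theorem~\ref{thm:braided version R K} and Lemma~\ref{lem:q-exchangeble} settle.
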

\begin{remark}
    The above theorem will be proved in Section \ref{subsec:fusion of ZF and GZ relations proof of main theorem}. The key technical ingredient in this proof are alternative expressions for the fused $R$ and $K$ matrices in Theorem \ref{thm:braided version R K} referred to as the `braided' forms. We believe that  braid forms are interesting on their own and potentially have other uses (see Remark \ref{rmk:braided}). 
\end{remark}
 
\begin{remark}
    The unfused matrices $\sR(u)$, $\sK(u)$ and $\sbK(u)$ each have one or two singularities, as can be observed from \eqref{eq:unfused R and K matrix}. The singularities of the fused matrices $\sR^I(u)$, $\sK^I(u)$, and $\sbK^I(u)$ are precisely the finite set of points $u\in\C$ where certain unfused matrices in their definitions \eqref{eq:fusion R matrix}, \eqref{eq:fusion K matrix}, and \eqref{eq:fusion bar K matrix} become singular.
\end{remark}
We are now able to obtain a concrete matrix product ansatz for the fused vertex model on a strip:
\begin{theorem}[Stationary measure for the fused vertex model on a strip]
\label{thm:concrete matrix ansatz}
    Consider the fused vertex model on a strip with width $N$ in Definition \ref{def: fused model intro}, depending on parameters $q,\k,\aaa,\bbb,\ccc,\ddd$ satisfying: 
    \be\label{eq:condition irreducible intro again}
    0<q<1,\quad0<\k<q^{\frac{I-1}{2}},\quad \aaa,\bbb,\ccc,\ddd>0,\quad \aaa-\ccc>q^{\frac{1-I}{2}}/\kappa,\quad\bbb-\ddd>q^{\frac{1-I}{2}}/\kappa.\ee
On any down-right path $\hP$ with outgoing edges $p_i\in\left\{\u,\r\right\}$, $1\leq i\leq N$, the stationary measure is given by 
\be\label{eq:MPA again}\mu_{\mathcal{P}}(\tau_1,\dots,\tau_N)=\frac{\ll W|M_{\tau_1}^{p_1}\times\dots\times M_{\tau_N}^{p_N}|V\rr}{\ll W|( \sum_{j=0}^IM^{p_1}_j)\times\dots\times (\sum_{j=0}^IM^{p_N}_j)|V\rr},\ee
where  $0\leq\tau_1,\dots,\tau_N\leq I$ are occupation variables,  
$M_j^\u=\sM_j^I\lb1/\k\rb$ and $M_j^\r=\sM_j^I\lb\k\rb$  for $0\leq j\leq I$, where the functions $\sM_j^I(u)\in\hA$ of $u\in\C$ are given in Theorem \ref{thm:Fusion of ZF and GZ relations} above.
\end{theorem}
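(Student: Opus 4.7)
The proof plan is essentially a synthesis of the pieces already assembled in the preceding subsections: the fusion of the ZF/GZ relations (Theorem~\ref{thm:Fusion of ZF and GZ relations}) produces a candidate solution of the consistency relations at a specialized spectral parameter, and the general matrix product ansatz (Theorem~\ref{thm:general matrix ansatz higher spin}) then converts this into a statement about the stationary measure. So the whole argument is a short chain of applications, with one genuine technical point to verify at the end, namely that the denominator in \eqref{eq:MPA again} is nonzero.

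First I would invoke Theorem~\ref{thm:Fusion of ZF and GZ relations}: the vector $\sbM^I(u)=[\sM_0^I(u),\dots,\sM_I^I(u)]^T$ built from the unfused solution of Proposition~\ref{prop:unfused ZF and GZ solution} satisfies the ZF relation with $\sR^I(u)$ and the two GZ relations with $\sK^I(u)$ and $\sbK^I(u)$. Next, I would apply Proposition~\ref{prop:solution of compatibility from ZF and GZ} with $R=\sR^I$, $K=\sK^I$, $\bK=\sbK^I$ and with the specialization $x=\k$, $y=1/\k$ in the ZF relation, $u=\k$ in the first GZ relation, and $u=1/\k$ in the second GZ relation. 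Under the parameter restriction \eqref{eq:condition irreducible intro again} the three singularity conditions $\k^2\notin\mathcal{P}_{\sR^I}$, $\k\notin\mathcal{P}_{\sK^I}$, $1/\k\notin\mathcal{P}_{\sbK^I}$ hold (this just requires pointing at the finitely many singular $u$ of the unfused factors appearing in \eqref{eq:fusion R matrix}, \eqref{eq:fusion K matrix}, \eqref{eq:fusion bar K matrix} and checking none of them equal the relevant power of $\k$, possibly after a mild shrinking of the parameter region which is already implicit in \eqref{eq:condition irreducible intro again}). Then the elements $M_j^{\uparrow}=\sM_j^I(1/\k)$ and $M_j^{\rightarrow}=\sM_j^I(\k)$ satisfy the consistency relations \eqref{eq:compatibility relations in introduction} with respect to exactly the vertex weights $(\R,{}_{\ell}\mathrm{K},{}_{r}\mathrm{K})$ prescribed in \eqref{eq:fused weights intro}. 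Combined with Proposition~\ref{prop:stochasticity irreducible} (which guarantees that the stochasticity and irreducibility conditions \eqref{eq:stochasticity intro} and \eqref{eq:condition irreducible} of Theorem~\ref{thm:general matrix ansatz higher spin} hold), everything in the hypothesis of Theorem~\ref{thm:general matrix ansatz higher spin} is in place except the non-vanishing of the denominator in \eqref{eq:MPA again}.

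The main obstacle is thus verifying $\langle W|\bigl(\sum_{j=0}^{I}M_j^{p_1}\bigr)\cdots\bigl(\sum_{j=0}^{I}M_j^{p_N}\bigr)|V\rangle\neq 0$. For this I would use the concrete USW representation (Proposition~\ref{prop: USW representation} and Proposition~\ref{prop:relations of d e compared with dehp}) mentioned in the remark after Proposition~\ref{prop:unfused ZF and GZ solution}: in that representation the unfused $\d,\e$ act on the canonical basis of $H$ with entrywise nonnegative (in fact, under \eqref{eq:condition irreducible intro again}, strictly positive) matrix entries, and $\langle W|$ and $|V\rangle$ have nonnegative entries. Then $\sM_0(u)=u+\e$ and $\sM_1(u)=1/u+\d$ evaluated at $u=\k$ or $u=1/\k$ (both positive real by \eqref{eq:condition irreducible intro again}) have nonnegative entries, so each fused element $\sM_{\zeta}^I(\k)$ and $\sM_{\zeta}^I(1/\k)$, being a sum of products of such operators per the defining formula \eqref{eq:fused solution ZF and GZ intro}, also has nonnegative entries. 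Summing over $\zeta$ collapses the inner factors to $(\sum_\zeta \sM_\zeta(uq^{-(I+1)/2+a}))=\prod_a\bigl(\text{something}+\d+\e\bigr)$-type positive operators, so the full pairing is a sum of nonnegative terms including, say, the vacuum-to-vacuum contribution, which one can check is strictly positive. A cleaner alternative would be to invoke stochasticity: once the $M_j^{p_i}$ satisfy the consistency relations and give the \emph{unnormalized} weights of an invariant vector for the irreducible Markov chain, the denominator equals the sum of unnormalized weights over all states, which is automatically positive because any single outgoing configuration of $\hP$ can be reached from the empty one by the irreducibility in \eqref{eq:condition irreducible}. With the denominator in hand, the matrix product ansatz of Theorem~\ref{thm:general matrix ansatz higher spin} delivers \eqref{eq:MPA again} as the unique stationary measure on $\hP$, completing the proof.
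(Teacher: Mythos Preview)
Your proposal is correct and follows essentially the same route as the paper: invoke Theorem~\ref{thm:Fusion of ZF and GZ relations} to get the fused ZF/GZ solution, feed it through Proposition~\ref{prop:solution of compatibility from ZF and GZ} at the specialization $\k$, and then apply Theorem~\ref{thm:general matrix ansatz higher spin} together with Proposition~\ref{prop:stochasticity irreducible}. The paper's own proof is in fact just these three sentences and does not explicitly verify the nonvanishing of the denominator, which you are right to flag as the one hypothesis of Theorem~\ref{thm:general matrix ansatz higher spin} that is not automatic; your USW-positivity sketch is a reasonable way to fill this in (though the entrywise nonnegativity of the tridiagonal matrices in Proposition~\ref{prop: USW representation} deserves a line of justification), and your ``cleaner alternative'' via Perron--Frobenius works once you note that the unnormalized weights already form an eigenvector of eigenvalue $1$ for the irreducible transition matrix, so it suffices that a single numerator be nonzero.
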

\begin{remark}
    We remark that the elements $M_j^\u$ and $M_j^\r$ used in the matrix product ansatz above are degree $I$ polynomial-like expressions of $\d$ and $\e$ from \eqref{eq:d and e in intro} (where $\d$ and $\e$ may not be commutative). 
\end{remark}
\begin{proof}
    By Theorem \ref{thm:Fusion of ZF and GZ relations}, $\sbM^I(u):=\left[\sM_0^I(u),\dots\sM_I^I(u)\right]^T$ satisfies the ZF and GZ relations with the fused matrices $\sR^I(u)$, $\sK^I(u)$ and $\sbK^I(u)$. By Proposition \ref{prop:solution of compatibility from ZF and GZ}, the elements $M_j^\u$ and $M_j^\r$ appearing in the matrix product ansatz satisfy the consistency relations \eqref{eq:consistency relations again} with the vertex weights $\R$, $\lK$ and $\rK$ defined in Definition \ref{def: fused model intro}. 
    By Theorem \ref{thm:general matrix ansatz higher spin}, the  unique stationary measure of the system is given by \eqref{eq:MPA again}.
\end{proof}
 
\subsection{Stationary measure in terms of the Askey--Wilson process} 
\label{subsec:AW intro}
To study the asymptotic behavior of the stationary measure of the fused vertex model using the matrix product ansatz provided in Theorem \ref{thm:concrete matrix ansatz} above, we will largely adopt a particular method of handling the matrix ansatz which was developed within a line of research \cite{DEHP93,USW04,BW17} concerning the stationary measure of another model known as the open asymmetric simple exclusion process (ASEP). This method involves expressing the joint generating function of the stationary measure in terms of expectations of an auxiliary Markov process, commonly known as the Askey--Wilson process due to its connections with the so-called Askey--Wilson orthogonal polynomials. 

We will provide an introduction to the background of the Askey--Wilson process in Section \ref{subsec:backgrounds of askey wilson} and to certain aspects of this method in Section \ref{subsec:Matrix product states askey wilson},  however we will not introduce the open ASEP model itself since it is not explicitly needed for our purposes. We refer the interested reader to a nice survey  paper \cite{Corwin_survey}.

The next theorem expresses the joint generating function of the stationary measure for the fused vertex model on a strip in terms of expectations of the Askey--Wilson process:

 \begin{theorem}\label{thm:stationary measure of fused model as AW}
Consider the fused vertex model on a strip defined in Definition \ref{def: fused model intro}, with parameters $q,\k,\aaa,\bbb,\ccc,\ddd$ satisfying \eqref{eq:condition irreducible intro again}, where we recall that $q,\aaa,\bbb,\ccc,\ddd$ are parameters in the fused $R$ and $K$ matrices, and $\k$ plays the role of spectral parameters. 
We will use an alternative parametrization of the model by $q,\k,A,B,C,D$ defined in Definition \ref{def:alternate parametrization of fused model}. 
We assume that $AC<1$.
We use $\lb Y_t\rb_{ t>0 }$ to denote the  Askey--Wilson Markov process with parameters $(A, B, C, D, q)$, which will be defined in Section \ref{subsec:backgrounds of askey wilson}.

For any down-right path $\hP$ on the strip, recall that the outgoing edges of $\hP$ are labelled by $p_1,\dots,p_N\in\left\{\u,\r\right\}$ from the up-left  of $\hP$ to the down-right  of $\hP$.
We   define a set of numbers $\v_1,\dots,\v_N\in\left\{-1,1\right\}$   as $\v_i=1$ if $p_i=\u$ and $\v_i=-1$ if $p_i=\r$ for $1\leq i\leq N$.

The joint generating function of  stationary measure $\mu_{\hP}$ of the fused vertex model on a strip on the down-right path $\hP$ has the following expression in terms of the Askey--Wilson process: For any $0<t_1\leq\dots\leq t_N$:
\be\label{eq:askey wilson formula intro}
\mathbb{E}_{\mu_{\hP}}\lbe\prod_{i=1}^Nt_i^{\tau_i}\rbe=\frac{\mathbb{E}\lbe\prod_{i=1}^N\prod_{a=1}^I\lb 2\sqrt{t_i}Y_{t_i}+t_iq^{\frac{I+1}{2}-a}\k^{\v_i}+q^{-\frac{I+1}{2}+a}\k^{-\v_i}\rb\rbe}{\mathbb{E}\lbe\prod_{a=1}^I\lb2Y_1+q^{\frac{I+1}{2}-a}\k+q^{-\frac{I+1}{2}+a}\k^{-1}\rb^N\rbe},
\ee
where $0\leq\tau_1,\dots,\tau_N\leq I$ are occupation variables indicating the number of arrows occupying edges $p_1,\dots,p_N$,
and the expectations on the RHS of \eqref{eq:askey wilson formula intro}   are expectations under the Askey--Wilson process $\lb Y_t\rb_{ t>0  }$.
\end{theorem}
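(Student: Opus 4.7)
The plan is to reduce the joint generating function to a single ordered matrix element of $NI$ linear combinations of $\d$ and $\e$, and then to apply a multi-time Askey--Wilson integral identity adapted from \cite{BW17}.

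I will start from the matrix product ansatz in Theorem \ref{thm:concrete matrix ansatz}. Summing $\prod_{i=1}^N t_i^{\tau_i}$ against $\mu_{\hP}$ gives
\[
\mathbb{E}_{\mu_{\hP}}\!\lbe\prod_{i=1}^N t_i^{\tau_i}\rbe=\frac{\ll W|\prod_{i=1}^N\bigl(\sum_{j=0}^I t_i^j M_j^{p_i}\bigr)|V\rr}{\ll W|\prod_{i=1}^N\bigl(\sum_{j=0}^I M_j^{p_i}\bigr)|V\rr},
\]
so it suffices to factor each inner sum explicitly. Interchanging the order of summation in the definition \eqref{eq:fused solution ZF and GZ intro} yields
\[
\sum_{\zeta=0}^I t^\zeta\sM_\zeta^I(u)=\prod_{a=1}^I\!\lb\sM_0(uq^{-(I+1)/2+a})+t\,\sM_1(uq^{-(I+1)/2+a})\rb,
\]
with the product ordered left-to-right in $a$. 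Specializing $u=\k^{-\v_i}$ (so that this becomes $\sum_j t_i^j M_j^{p_i}$) and using $\sM_0(v)=v+\e$, $\sM_1(v)=1/v+\d$, each site factor takes the uniform form
\[
\prod_{a=1}^I\!\lb\e+t_i\d+q^{-(I+1)/2+a}\k^{-\v_i}+t_iq^{(I+1)/2-a}\k^{\v_i}\rb,
\]
regardless of whether $p_i=\u$ or $\r$. Collecting across $i=1,\dots,N$, the numerator becomes a single ordered matrix element of $NI$ linear factors $\e+t_i\d+\alpha_{i,a}+t_i\beta_{i,a}$.

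The second step, which is the heart of the proof, is the Askey--Wilson integral representation. Under the USW representation of Proposition \ref{prop: USW representation}, the algebra generated by $\d,\e$ with boundary vectors $\ll W|,|V\rr$ acts on polynomials with inner product given by integration against the Askey--Wilson weight with parameters $A,B,C,D$ from Definition \ref{def:alternate parametrization of fused model}. Combined with the Markov property of the Askey--Wilson process from \cite{BW17}, one obtains a multi-time identity: for any $0<t_1\le\dots\le t_M$ and any constants $\alpha_k,\beta_k$,
\[
\frac{\ll W|\prod_{k=1}^M(\e+t_k\d+\alpha_k+t_k\beta_k)|V\rr}{\ll W|V\rr}=\mathbb{E}\!\lbe\prod_{k=1}^M\!\lb2\sqrt{t_k}\,Y_{t_k}+\alpha_k+t_k\beta_k\rb\rbe,
\]
where the hypothesis $AC<1$ guarantees convergence of the integrals. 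Applied to the $NI$ factors above, indexed lexicographically by $(i,a)$ with shared time $t_i$ at site $i$, this produces exactly the numerator of \eqref{eq:askey wilson formula intro}. The denominator follows by specializing $t_1=\cdots=t_N=1$, which collapses the $I$ factors per site to an $N$-th power since all evaluations share $Y_1$.

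The main obstacle is establishing the multi-time identity cleanly. This requires showing that the operator $\e+t\d$, in the basis of Askey--Wilson polynomials with the appropriate time-dependent parameters, has a three-term form whose off-diagonal coefficients precisely encode the transition kernel of $(Y_t)$, so that the time-ordering $t_1\le\cdots\le t_N$ of the factors aligns with the Markov property of the process. This argument essentially follows \cite{BW17,wang2023askey}, but the fusion-induced shifts $q^{\pm(I+1)/2\mp a}\k^{\pm\v_i}$ appearing in $\alpha_{i,a},\beta_{i,a}$ must be tracked carefully, and the admissibility conditions \eqref{eq:condition irreducible intro again} must be matched to the admissibility region of the Askey--Wilson weight—this is precisely where the hypothesis $AC<1$ is used.
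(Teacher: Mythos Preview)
Your proposal is correct and follows essentially the same route as the paper's proof. Both start from the matrix ansatz of Theorem~\ref{thm:concrete matrix ansatz}, factorize $\sum_{\zeta} t^\zeta \sM^I_\zeta(u)$ into a product over $a\in[[1,I]]$ of linear expressions in $\d,\e$, and then invoke the multi-time Askey--Wilson identity from \cite{BW17}.

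The only notable difference is in how the key identity is packaged. You state it directly in terms of $\d,\e$; the paper instead passes through the linear change of variables $\D=\frac{1}{1-q}(1+\d)$, $\E=\frac{1}{1-q}(1+\e)$ (Proposition~\ref{prop:relations of d e compared with dehp}) so as to land exactly on the DEHP algebra, and then quotes \cite[Theorem~1]{BW17} verbatim together with a one-line expansion (Corollary~\ref{cor:matrix product state as Askey Wilson}). Your identity and theirs are equivalent once you substitute $\e+t\d=(1-q)(\E+t\D)-1-t$. In particular, the ``main obstacle'' you flag in your last paragraph---establishing the multi-time identity and tracking the fusion shifts $q^{\pm(I+1)/2\mp a}\k^{\pm\v_i}$---is not actually an obstacle: the shifts are absorbed as additive constants $v_k$ in Corollary~\ref{cor:matrix product state as Askey Wilson}, which follows from \cite{BW17} by expanding brackets, so no new analysis of the Askey--Wilson transition kernel is required. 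For the denominator you should also note explicitly (as the paper does) that at $t=1$ the product $\prod_{a=1}^I(2Y_1+q^{(I+1)/2-a}\k^{\v_i}+q^{-(I+1)/2+a}\k^{-\v_i})$ is independent of $\v_i$ via the reindexing $a\mapsto I+1-a$, which is what makes it collapse to an $N$-th power.
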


The above theorem will be proved in Section \ref{subsec:proof of stationary measure as AW}.
 
\subsection{Mean arrow density and phase diagram}
\label{subsec:mean density intro}
We study the asymptotic behavior of a basic (macroscopic) statistical quantity of the stationary measure as the system size $N\rightarrow\infty$. 
This statistical quantity is known as the `mean arrow density' defined as the total number of arrows within the system divided by the system size $N$. This quantity is parallel to the `mean particle density' in the context of open ASEP.  
We will show that, similar to the case of open ASEP (and of the six-vertex model on a strip in \cite{Y}), the limit of the mean arrow density of  fused vertex model on a strip exhibits a phase diagram involving the high density phase, the low density phase and the maximal current phase. We observe a novel and interesting phenomenon: corresponding to the (sequences of) down-right paths with different limit shapes, the stationary measures share the same phase diagram but have different limits for the mean arrow density.  
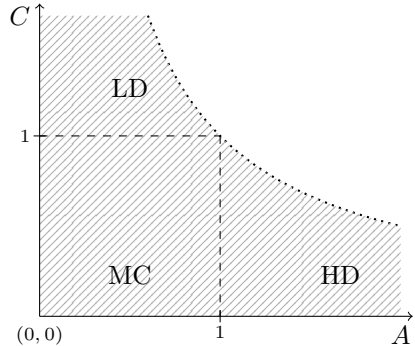
\begin{figure}[h]
\centering
\begin{tikzpicture}[scale=0.8]
\draw[scale = 1,domain=6.8:11,smooth,variable=\x,dotted,thick] plot ({\x},{1/((\x-7)*1/3+2/3)*3+5});
\fill[pattern=north east lines, pattern color=gray!60] (5,5)--(5,10) -- plot [domain=6.8:11]  ({\x},{1/((\x-7)*1/3+2/3)*3+5}) -- (11,5) -- cycle;
 \draw[->] (5,5) to (5,10.2);
 \draw[->] (5.,5) to (11.2,5);
   \draw[-, dashed] (5,8) to (8,8);
   \draw[-, dashed] (8,8) to (8,5); 
   \node [left] at (5,8) {\scriptsize$1$};
   \node[below] at (8,5) {\scriptsize $1$};
     \node [below] at (11,5) {$A$};
   \node [left] at (5,10) {$C$};
  \draw[-] (8,4.9) to (8,5.1);
   \draw[-] (4.9,8) to (5.1,8);
 \node [below] at (5,5) {\scriptsize$(0,0)$};
    \node [above] at (6.5,8.5) {LD}; 
    \node [below] at (10,6) {HD};  
 \node [below] at (6.5,6) {MC};
\end{tikzpicture} 
\caption{Phase diagram of the fused vertex model on a strip. LD, HD, MC respectively stand for
low density, high density and maximal current (phases).}
\label{fig:phase diagram}
\end{figure}
\begin{theorem}\label{thm:mean arrow density}
    Consider the fused vertex model on a strip in Definition \ref{def: fused model intro}. We fix all the parameters of the system but vary the system size (width of the strip) $N$. 
    Assume $AC<1$ and $A,C\notin\{q^{-l}:l\in\mathbb{N}_0\}$. 
    
    Consider a sequence of down-right paths $\{\hP^N\}_{N=1}^\infty$, where each $\hP^N$ lies on the strip with width $N$. We denote by $0\leq\phi_N\leq N$ the total number of horizontal edges within the down-right path  $\hP^N$. We assume that as $N\rightarrow\infty$, the sequence $\phi_N/N$ converges to some number $\lambda\in[0,1]$. 
    Then as $N\rightarrow\infty$, the limit of the mean arrow density of the stationary measure $\mu_{\hP^N}$ of fused vertex model on a strip on  $\hP^N$ is given by:
    \be\label{eq:mean density in theorem}\lim_{N\rightarrow\infty}\mathbb{E}_{\mu_{\hP^N}}\lbe\frac{1}{N}\sum_{i=1}^N\tau_i\rbe=\lambda G(\k)+(1-\lambda)G\lb\frac{1}{\k}\rb,\ee
    where $G$ is the following function:
    \be\label{eq:function eta intro}G(x)=
\begin{cases}
\sum_{a=1}^I \frac{x}{x+q^{-\frac{I+1}{2}+a}} , & A<1, C<1 \text{(maximal current phase)},\\
\sum_{a=1}^I \frac{Ax}{Ax+q^{-\frac{I+1}{2}+a}} , & A>1 \text{ (high density phase)},\\
 \sum_{a=1}^I \frac{x}{x+Cq^{-\frac{I+1}{2}+a}}, & C>1 \text{ (low density phase)}.
 \end{cases}\ee
\end{theorem}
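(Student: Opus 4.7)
My plan is to use the joint generating function from Theorem~\ref{thm:stationary measure of fused model as AW}, extract the mean density by differentiation at $t=1$, and carry out a Laplace-type asymptotic for the Askey--Wilson expectation. Setting $t_1=\dots=t_N=t$ in \eqref{eq:askey wilson formula intro} and grouping indices by $\v_i\in\{+1,-1\}$, a short combinatorial check on down-right paths shows that each horizontal (right) path step emits an outgoing $\u$-edge, while each vertical (down) path step emits an outgoing $\r$-edge; hence the count of $\v_i=+1$ equals $\phi_N$. Writing
\[
h_{\pm}(y,t):=\prod_{a=1}^{I}\bigl(2\sqrt{t}\,y+tq^{\frac{I+1}{2}-a}\k^{\pm 1}+q^{-\frac{I+1}{2}+a}\k^{\mp 1}\bigr),
\]
the numerator of \eqref{eq:askey wilson formula intro} collapses to $F_N(t):=\mathbb{E}[h_+(Y_t,t)^{\phi_N}h_-(Y_t,t)^{N-\phi_N}]$. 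The index change $a\leftrightarrow I+1-a$ in the denominator verifies $F_N(1)=G_N$ (as required by $\mathbb{E}[t^{\sum_i\tau_i}]|_{t=1}=1$), so
\[
\frac{1}{N}\mathbb{E}_{\mu_{\hP^N}}\Bigl[\textstyle\sum_i\tau_i\Bigr]=\frac{d}{dt}\bigg|_{t=1}\frac{1}{N}\log F_N(t).
\]

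Second, I apply Laplace's method. Each factor of $h_{\pm}(y,t)$ is positive and linear in $y$ with positive slope on the support of $Y_t$, so $y\mapsto h_{\pm}(y,t)$ is positive and strictly increasing there. The support of the Askey--Wilson distribution $Y_t$ (recalled in Section~\ref{subsec:backgrounds of askey wilson}) is $[-1,1]$ together with at most finitely many atoms, notably $y_A(t)=\tfrac{1}{2}(A\sqrt{t}+1/(A\sqrt{t}))$ when $A\sqrt{t}>1$ and $y_C(t)=\tfrac{1}{2}(C/\sqrt{t}+\sqrt{t}/C)$ when $C/\sqrt{t}>1$. A Laplace/Varadhan-type argument then yields
\[
\lim_{N\to\infty}\frac{1}{N}\log F_N(t)=L(t):=\lambda\log h_+(y^*(t),t)+(1-\lambda)\log h_-(y^*(t),t),
\]
where $y^*(t)$ is the supremum of $\mathrm{supp}(Y_t)$. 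Under the standing hypothesis $AC<1$, at most one of the atoms exceeds $1$ at $t=1$, giving $y^*(1)=1$ in MC, $y_A(1)$ in HD, and $y_C(1)$ in LD. Log-convexity of $t\mapsto F_N(t)$ together with pointwise convergence of $\tfrac{1}{N}\log F_N$ to the (convex) limit $L$ justifies passing $\tfrac{d}{dt}|_{t=1}$ through the $N\to\infty$ limit, giving $L'(1)=\lambda\,[(d/dt)\log h_+]_{t=1}+(1-\lambda)\,[(d/dt)\log h_-]_{t=1}$ evaluated at $y=y^*(1)$.

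Third, since $h_-$ equals $h_+$ under $\k\leftrightarrow\k^{-1}$, only the $h_+$ contribution needs verification; the identification with $G(\k)$ in each phase is then routine. Writing $c_a:=q^{(I+1)/2-a}$: in MC at $y=1$ one has $h_+(1,t)=\prod_a(\sqrt{c_a t\k}+\sqrt{c_a^{-1}/\k})^{2}$, whose logarithmic derivative at $t=1$ equals $\sum_a\tfrac{c_a\k}{c_a\k+1}=\sum_a\tfrac{\k}{\k+q^{-(I+1)/2+a}}=G(\k)$. In HD the identity $2\sqrt{t}\,y_A(t)=At+1/A$ gives $h_+(y_A(t),t)=\prod_a\bigl(t(A+c_a\k)+(1/A+c_a^{-1}/\k)\bigr)$, whose logarithmic derivative at $t=1$ is $\sum_a\tfrac{A+c_a\k}{(A+c_a\k)+(1/A+c_a^{-1}/\k)}$; multiplying numerator and denominator by $Ac_a\k$ factors the denominator as $(A+c_a\k)(Ac_a\k+1)$, and the sum collapses to $\sum_a\tfrac{Ac_a\k}{Ac_a\k+1}=G(\k)$ in HD. The LD case is analogous using $2\sqrt{t}\,y_C(t)=C+t/C$. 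Combining with the $h_-$ contributions produces $\lambda G(\k)+(1-\lambda)G(1/\k)$ in each phase, matching \eqref{eq:mean density in theorem} via the piecewise formula \eqref{eq:function eta intro}.

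The main technical obstacle I anticipate is the rigorous Laplace analysis when the Askey--Wilson measure splits between a continuous part on $[-1,1]$ and finitely many atoms outside: one must confirm that the mass of the dominant atom stays bounded away from $0$ uniformly in $N$ (so that $h(y^*(t),t)^{N}$ survives the $1/N$ log-normalisation and dominates the continuous-part contribution), and then justify interchanging $\tfrac{d}{dt}|_{t=1}$ with the $N\to\infty$ limit; the hypothesis $A,C\notin\{q^{-l}:l\in\mathbb{N}_0\}$ presumably enters precisely to avoid degeneracies of the atomic Askey--Wilson weights at $t=1$. Once this asymptotic is in hand, the algebraic matching with $G$ via the factorisations above is straightforward.
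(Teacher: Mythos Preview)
Your approach is correct and follows essentially the same strategy as the paper, but packaged somewhat differently. The paper also reduces to $Z_N(t)=\mathbb{E}[h^\u(t,Y_t)^{\phi_N}h^\r(t,Y_t)^{N-\phi_N}]$ (your $F_N$), then computes $\partial_t Z_N|_{t=1}/(NZ_N(1))$ directly, splitting the Askey--Wilson measure into its atomic and continuous parts and handling each phase separately; in the maximal-current phase the paper carries out an explicit Laplace substitution $y=1-u/(2N)$ to extract the $y\to 1$ contribution. Your route---first establishing $\tfrac{1}{N}\log F_N(t)\to L(t)$ by Laplace/Varadhan and then passing the derivative through via convexity---is cleaner and avoids the detailed endpoint analysis, at the price of requiring a convexity/differentiability justification that the paper's more hands-on argument does not need.

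Two small points worth tightening. First, $t\mapsto F_N(t)$ is proportional to the probability generating function $\mathbb{E}_{\mu}[t^{\sum\tau_i}]$, so what is convex is $s\mapsto\log F_N(e^{s})$ (the cumulant generating function), not $t\mapsto\log F_N(t)$; the exchange of limit and derivative should be phrased in the variable $\log t$, which causes no change at $t=1$. Second, your phrase ``$L'(1)$ evaluated at $y=y^*(1)$'' is slightly misleading: in the HD and LD phases you need (and in fact use) the \emph{total} derivative in $t$ of $h_\pm(y^*(t),t)$, which your subsequent algebra computes correctly via $2\sqrt{t}\,y_A(t)=At+1/A$. Your combinatorial observation that a horizontal path step emits an $\u$-outgoing edge (hence $\#\{\v_i=+1\}=\phi_N$) is correct and matches the paper's Lemma~\ref{lem:mean as partition}; the hypothesis $A,C\notin\{q^{-l}\}$ is used exactly as you surmise, to ensure the number and masses of atoms are smooth in $t$ near $1$.
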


The above theorem will be proved in Section \ref{subsec:mean arrow density and phase diagram}.
\begin{remark}
We note that our phase diagram (Figure \ref{fig:phase diagram}) corresponds to the phase diagram in \cite{Y} for the six-vertex model on a strip (i.e. the $I=1$ case) but with a different parameterization. 
\end{remark}
\begin{remark}
    We have made the assumption $AC<1$ in Theorem \ref{thm:stationary measure of fused model as AW} and Theorem \ref{thm:mean arrow density}. This corresponds to the shaded area in the phase diagram (Figure \ref{fig:phase diagram}), which we refer to as the `fan region' of the fused vertex model on a strip. This constraint is needed to guarantee the existence of the Askey--Wilson Markov process. 
    
    In a recent paper \cite{wang2023askey}, an extension of this method was provided for open ASEP. Instead of expectations of Askey--Wilson processes, the stationary measure of open ASEP was expressed therein in the `shock region' $AC>1$ as integrations with respect to the so-called Askey--Wilson signed measures.  Using such an expression, the density profiles and fluctuations of open ASEP were obtained therein in the shock region. 
    It is possible that, by similar methods, the stationary measure for the fused vertex model on a strip could  be characterized in the shock region $AC>1$ by Askey--Wilson signed measures, from which the limits of mean arrow density could be studied. We choose not to pursue this direction  
    and leave it for future works.
\end{remark}

\subsection{Integrable discrete-time two-step Floquet dynamics}
\label{subsec: Integrable discrete-time two-step Floquet dynamics intro}
We demonstrate that one of the `two-step Floquet dynamics' models introduced in \cite{Vanicat_fused_MPA} can be considered as a special case of the fused vertex model on a strip defined in this paper (Definition \ref{def:interacting particle system} and Definition \ref{def: fused model intro} above). 

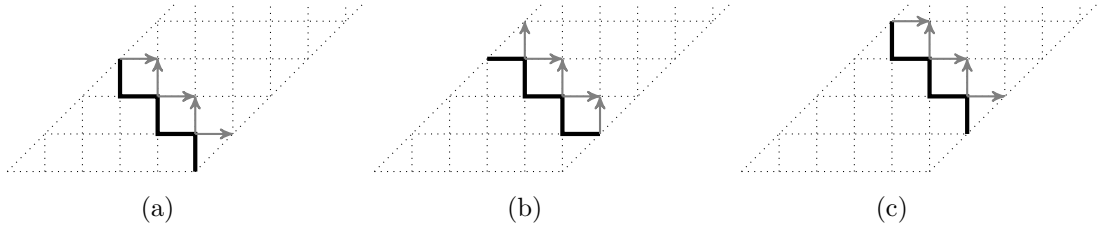
\begin{figure}[h]
\centering
\begin{tikzpicture}[scale=0.5]
\draw[dotted] (0,0)--(4.5,4.5);
\draw[dotted] (0,0)--(5,0);
\draw[dotted] (1,1)--(6,1);
\draw[dotted] (2,2)--(7,2);
\draw[dotted] (3,3)--(8,3);
\draw[dotted] (4,4)--(9,4);
\draw[dotted] (1,0)--(1,1);
\draw[dotted] (2,0)--(2,2);
\draw[dotted] (3,0)--(3,3);
\draw[dotted] (4,0)--(4,4);
\draw[dotted] (5,0)--(5,4.5);
\draw[dotted] (6,1)--(6,4.5);
\draw[dotted] (7,2)--(7,4.5);
\draw[dotted] (8,3)--(8,4.5);
\draw[dotted] (9,4)--(9,4.5);
\draw[dotted] (5,0)--(9.5,4.5);
\draw[ultra thick] (3,3)--(3,2)--(4,2)--(4,1)--(5,1)--(5,0);
\draw[path, gray] (3,3)--(4,3);
\draw[path, gray] (4,2)--(4,3);
\draw[path, gray] (4,2)--(5,2);
\draw[path, gray] (5,1)--(5,2);
\draw[path, gray] (5,1)--(6,1);
\node at (4,-1) {(a)};
\end{tikzpicture}
\begin{tikzpicture}[scale=0.5]
\draw[dotted] (0,0)--(4.5,4.5);
\draw[dotted] (0,0)--(5,0);
\draw[dotted] (1,1)--(6,1);
\draw[dotted] (2,2)--(7,2);
\draw[dotted] (3,3)--(8,3);
\draw[dotted] (4,4)--(9,4);
\draw[dotted] (1,0)--(1,1);
\draw[dotted] (2,0)--(2,2);
\draw[dotted] (3,0)--(3,3);
\draw[dotted] (4,0)--(4,4);
\draw[dotted] (5,0)--(5,4.5);
\draw[dotted] (6,1)--(6,4.5);
\draw[dotted] (7,2)--(7,4.5);
\draw[dotted] (8,3)--(8,4.5);
\draw[dotted] (9,4)--(9,4.5);
\draw[dotted] (5,0)--(9.5,4.5);
\draw[ultra thick] (3,3)--(4,3)--(4,2)--(5,2)--(5,1)--(6,1);
\draw[path, gray] (4,3)--(4,4);
\draw[path, gray] (4,3)--(5,3);
\draw[path, gray] (5,2)--(5,3);
\draw[path, gray] (5,2)--(6,2);
\draw[path, gray] (6,1)--(6,2);
\node at (4,-1) {(b)};
\end{tikzpicture}
\begin{tikzpicture}[scale=0.5]
\draw[dotted] (0,0)--(4.5,4.5);
\draw[dotted] (0,0)--(5,0);
\draw[dotted] (1,1)--(6,1);
\draw[dotted] (2,2)--(7,2);
\draw[dotted] (3,3)--(8,3);
\draw[dotted] (4,4)--(9,4);
\draw[dotted] (1,0)--(1,1);
\draw[dotted] (2,0)--(2,2);
\draw[dotted] (3,0)--(3,3);
\draw[dotted] (4,0)--(4,4);
\draw[dotted] (5,0)--(5,4.5);
\draw[dotted] (6,1)--(6,4.5);
\draw[dotted] (7,2)--(7,4.5);
\draw[dotted] (8,3)--(8,4.5);
\draw[dotted] (9,4)--(9,4.5);
\draw[dotted] (5,0)--(9.5,4.5);
\draw[ultra thick] (4,4)--(4,3)--(5,3)--(5,2)--(6,2)--(6,1);
\draw[path, gray] (4,4)--(5,4);
\draw[path, gray] (5,3)--(5,4);
\draw[path, gray] (5,3)--(6,3);
\draw[path, gray] (6,2)--(6,3);
\draw[path, gray] (6,2)--(7,2);
\node at (4,-1) {(c)};
\end{tikzpicture} 
\caption{Two-step update for a down-right zig-zag path, when $N=5$. The thick paths are the down-right paths, and the gray arrows denote the outgoing edges on those paths.}
\label{fig:Two-step update on a down-right zig-zag path}
\end{figure}

Suppose that the width $N$ of the strip is an odd number, and that the down-right path $\hP$ is the down-right zig-zag path starting from a vertical edge, see (a) in Figure \ref{fig:Two-step update on a down-right zig-zag path} above for the $N=5$ case. The transition at each time $k\mapsto k+1$ in the fused vertex model on a strip updates the down-right path from (a) to (c). This can be understood as a two-step update: first from (a) to (b), and then from (b) to (c). Therefore the transition matrix of this Markov chain can be written as (where $\mathcal{V}=\C^{I+1}$ is the state space of each edge):
\be\label{eq:two step}\UU^{e}\UU^{o}
=B_1 U_{23}U_{45}\dots U_{N-1,N}\, U_{12}U_{34} \dots U_{N-2,N-1} \overline{B}_N\in\End\lb\mathcal{V}^{\o N}\rb,\ee
where 
\be\label{eq:two steps explicit}\UU^{o} = \prod_{k=1}^{(N-1)/2} U_{2k-1,2k}\overline{B}_N,\quad \UU^{e} =  B_1 \prod_{k=1}^{(N-1)/2} U_{2k,2k+1},\ee
correspond to the two steps of the updates discussed above, 
\be\label{eq:weights two step}U=\R P \in\End(\mathcal{V}\o\mathcal{V}),\quad B=\lK\in\End(\mathcal{V}),\quad \overline{B}=\rK\in\End(\mathcal{V}),\ee
are the local operators (where $\R$, $\lK$ and $\rK$ are the fused vertex weights and $P$ swaps the two factors of $\mathcal{V}$),
and $U_{i,j}$ (resp. $B_i$ or $\overline{B}_i$) in \eqref{eq:two step} and \eqref{eq:two steps explicit} above denote the operators $U$ (resp. $B$ or $\overline{B}$) acting on the $(i,j)$-th (resp. $i$-th) copies of $\mathcal{V}$ in $\mathcal{V}^{\o N}$. By \eqref{eq:weights two step} and Definition \ref{def: fused model intro}, we have
 \be\label{eq:weights two}U=\R P=P\sR^I(\k^2),\quad B=\lK=\sK^I(\k),\quad \overline{B}=\rK=\sbK^I\lb1/\k\rb.\ee
To summarize, on the down-right zig-zag path, the Markov chain of the fused vertex model on a strip  can be alternatively defined by transfer matrices by \eqref{eq:two step} and \eqref{eq:weights two}. This Markov chain has been previously studied by \cite{Vanicat_fused_MPA} for $I\in\{1,2\}$ and referred to therein as the `two-step Floquet dynamics' (in the open boundary and asymmetric case). See Section 2.2 and Section 3.1 of \cite{Vanicat_fused_MPA} for the definition, in particular (3.35) and (3.36) therein correspond to \eqref{eq:weights two} and \eqref{eq:two step}. The fusion procedure of vertex weights and of the matrix product ansatz was explicitly given therein in the $I=2$ case.
The partition function and the mean current of the stationary measure was calculated  by the matrix ansatz, see (3.62) and (3.63) of \cite{Vanicat_fused_MPA}.  

As a special case of Theorem \ref{thm:mean arrow density} for the down-right zig-zag path $\hP$, we are able to obtain the asymptotics of  mean density and the phase diagram of the two-step Floquet dynamics with open boundaries (in the asymmetric case). This in particular answers an open question in \cite[Section 4]{Vanicat_fused_MPA}.
\begin{theorem}
Consider the integrable two-step Floquet dynamics with open boundaries (in the asymmetric case) introduced by \cite{Vanicat_fused_MPA}, which is an interacting particle system on the lattice $\{1,\dots,L\}$, where up to $I\in\{1,2\}$ many particles can occupy the same site.  The definition appears in \cite{Vanicat_fused_MPA} in pages 306-307, together with pages 312-313 for $I=1$ case and pages 320-323 for $I=2$ case. The parameters $t^2,\k,a,b,c,d$ in \cite{Vanicat_fused_MPA} correspond to our parameters $q,\k,\aaa,\bbb,\ccc,\ddd$.

As the size of the system $L\rightarrow\infty$, the limit of the mean particle density is given by \eqref{eq:mean density in theorem} for $\lambda=1/2$, and the phase diagram   is given by Figure \ref{fig:phase diagram}, where we recall the definition of $A$ and $C$ in Definition \ref{def:alternate parametrization of fused model}.
\end{theorem}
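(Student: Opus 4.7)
The plan is to reduce the theorem directly to Theorem \ref{thm:mean arrow density} by identifying the two-step Floquet dynamics of \cite{Vanicat_fused_MPA} with the fused vertex model on a strip restricted to the down-right zig-zag path, and then computing the corresponding $\lambda$.

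First I would establish the model identification. The groundwork is essentially laid out in \eqref{eq:two step}--\eqref{eq:weights two}: the one-step transition operator of the fused vertex model on a strip on the down-right zig-zag path factors as $\UU^{e}\UU^{o}$ with the transfer matrices built from $U=P\sR^I(\k^2)$, $B=\sK^I(\k)$, and $\overline{B}=\sbK^I(1/\k)$. This matches (3.35)--(3.36) of \cite{Vanicat_fused_MPA} under the parameter correspondence $(t^2,a,b,c,d)\leftrightarrow(q,\aaa,\bbb,\ccc,\ddd)$ indicated in the theorem. Thus the state space $[[0,I]]^N$ of occupation variables on the outgoing edges of the zig-zag path is identified with the configuration space of the Floquet dynamics on $L=N$ sites, the stationary measure of the Floquet dynamics coincides with $\mu_{\hP^N}$, and the Floquet mean particle density equals the fused-model mean arrow density.

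Second I would compute $\lambda$. For $N$ odd, the zig-zag path starting from a vertical edge produces outgoing edges $p_1=\u,p_2=\r,p_3=\u,\dots,p_N=\u$, so the number of horizontal outgoing edges is $\phi_N=(N-1)/2$ and hence $\phi_N/N\to 1/2$. (For the analogous setup with $N$ even, or zig-zag starting from a horizontal edge, the limit is still $1/2$, so the argument is insensitive to parity conventions.) Therefore the hypothesis of Theorem \ref{thm:mean arrow density} is satisfied with $\lambda=1/2$, and the conclusion \eqref{eq:mean density in theorem} gives the claimed limit $\tfrac{1}{2}G(\k)+\tfrac{1}{2}G(1/\k)$.

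Finally, the phase diagram assertion is automatic: the case distinctions in \eqref{eq:function eta intro} depend only on the parameters $A$ and $C$ (not on $\lambda$), so the three regions \emph{MC}, \emph{HD}, \emph{LD} of Figure \ref{fig:phase diagram} are inherited verbatim. There is no substantive obstacle beyond verifying the model correspondence; the only point requiring care is that the two-step Floquet update in \cite{Vanicat_fused_MPA} indeed coincides, operator-by-operator and with matching parameters, with the one-step transition of the fused vertex model on the zig-zag path — which is exactly what Section \ref{subsec: Integrable discrete-time two-step Floquet dynamics intro} has already spelled out.
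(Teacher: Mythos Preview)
Your proposal is correct and matches the paper's approach exactly: the theorem is stated as an immediate corollary of Theorem \ref{thm:mean arrow density} applied to the zig-zag path, using the model identification spelled out in Section \ref{subsec: Integrable discrete-time two-step Floquet dynamics intro}. One small slip: for the zig-zag path starting with a vertical (down) step, the outgoing edges actually read $p_1=\r,p_2=\u,\dots,p_N=\r$ (a down step yields a horizontal outgoing edge), so $\phi_N=(N+1)/2$ rather than $(N-1)/2$ --- but as you note, this is irrelevant since $\phi_N/N\to 1/2$ either way.
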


\subsection*{Outline of the paper}
In Section \ref{sec:2} we define the fusion for $R$ and $K$ matrices and  fusion for ZF and GZ relations. The fused $R$ and $K$ matrices allow the definition of  fused vertex model on a strip in Definition \ref{def: fused model intro}, and the fused ZF and GZ relations allow the construction of its stationary measure by matrix product ansatz in Theorem \ref{thm:concrete matrix ansatz}. In Section \ref{sec:3} we first prove Theorem \ref{thm:stationary measure of fused model as AW} expressing the stationary measure in terms of Askey--Wilson processes. We then use this expression to obtain limits of mean density in Theorem \ref{thm:mean arrow density}.

\subsection*{Acknowledgements} 
The author thanks his advisor, Ivan Corwin, for suggesting this problem and for helpful discussions. The author thanks Amol Aggarwal, Ivan Corwin, Zoe Himwich, and Alisa Knizel for carefully reading the draft and providing valuable suggestions.
The author was supported by Ivan Corwin’s NSF grant DMS-1811143 as well as the Fernholz Foundation’s “Summer Minerva Fellows” program.

\section{Higher spin vertex model on a strip and matrix ansatz}
\label{sec:2}
In Section \ref{subsec:Fusion of vertex weights} we introduce the fusion procedure and define the fused $R$ and $K$ matrices. In Section \ref{subsec:Proof of proposition stochasticity irreducible} we prove Proposition \ref{prop:stochasticity irreducible} that guarantee the stochasticity and irreducibility of the fused vertex model on a strip. In Section \ref{subsec:fusion of ZF and GZ relations proof of main theorem} we develop the fusion procedure for ZF and GZ relations, proving Theorem \ref{thm:Fusion of ZF and GZ relations}. In Section \ref{subsec:proof of MPA} we prove Theorem \ref{thm:general matrix ansatz higher spin} on the matrix product ansatz for higher spin vertex models on a strip.

\subsection{Fusion of vertex weights} 
\label{subsec:Fusion of vertex weights}  
In this subsection, we introduce the fusion procedure and define the  spin-$\frac{I}{2}$ fused $R$ and $K$ matrices from their spin-$\frac{1}{2}$ (unfused) counterparts.

\begin{definition}\label{def:fusion procedure}
We first define some notions that will be useful in fusion.
\begin{enumerate}
\item [(1)] We denote the natural basis of $\CI$ by $\{\eb_0,\dots,\eb_I\}$. 
We identify $\CI$   with the state space of an edge of   spin-$\frac{I}{2}$ vertex model, with $\eb_j$ corresponding to the state with $j$ arrows occupying that edge.
    \item [(2)] We define a surjection $\Pi:\CoI\rightarrow \CI$ by $$\Pi(\eb_{a_1}\o\dots\o\eb_{a_I}):=\eb_{\sum a_j}$$ for all $a_1,\dots,a_I\in\left\{0,1\right\}$. 
    \item [(3)] A vector 
    \be \label{eq: a general vector in VoI}
    v=\sum_{a_1,\dots,a_I\in\left\{0,1\right\}}v(a_1,\dots,a_I)\eb_{a_1}\o\dots\o\eb_{a_I}\in\CoI
    \ee
    is called $q$-exchangeable if its coefficients satisfy
    $$
    qv(\dots,0,1,\dots)= v(\dots,1,0,\dots)
    $$
    on any two nearby positions. The subspace of $q$-exchangeable vectors in $\CoI$ is called the $q$-exchangeable subspace, denoted by $\sy$. 
    We can regard the coefficients $v(a_1,\dots,a_I)$ in vector \eqref{eq: a general vector in VoI} as a $\mathbb{C}$-valued measure on $\left\{0,1\right\}^I$. This measure is $q$-exchangeable if $v$ is $q$-exchangeable.
    \item [(4)] For any $I$-tuple $(a_1,\dots,a_I)\in\left\{0,1\right\}^I$, define
 $$
\inv(a_1,\dots, a_I)=\#\{1\leq i<j\leq I: a_i>a_j\},\quad \tinv(a_1,\dots, a_I)=\#\{1\leq i<j\leq I: a_i<a_j\}.
 $$
It is well-known that, for $0\leq a\leq I$,
$$
Z_{q}(I;a)=\sum_{\substack{a_1,\dots, a_I\in\left\{0,1\right\}\\\sum a_j=a}}q^{\inv(a_1,\dots, a_I)}=\sum_{\substack{a_1,\dots, a_I\in\left\{0,1\right\}\\\sum a_j=a}}q^{\tinv(a_1,\dots, a_I)}=\frac{(q;q)_I}{(q;q)_a (q;q)_{I-a}}.
$$
    \item [(5)] We define an injection $\Pih: \CI\rightarrow\CoI$ by
     $$
    \Pih(\eb_a):=\frac{1}{Z_{q}(I;a)}\sum_{\substack{a_1,\dots, a_I\in\left\{0,1\right\}\\\sum a_j=a}}q^{\inv(a_1,\dots, a_I)}\eb_{a_1}\o\dots\o\eb_{a_I},
     $$
   for  all $0\leq a\leq I$.
    This injection maps $\CI$ onto the $q$-exchangeable subspace $\sy$ of $\CoI$. 
    
    We can observe that $\Pi\circ\Pih=\id_{\CI}$, and that 
    $\Pih\circ\Pi=F\in\End\lb\CoI\rb$ is the projector (i.e. $F^2=F$) onto the subspace $\sy$.
\end{enumerate} 
\end{definition}   
\begin{definition}\label{def:unfused R and K matrices}
    We will use the following unfused (spin-$\frac{1}{2}$) $R$ and $K$ matrices
    $\sR(u)\in\End\lb\Co\o\Co\rb$ and $\sK(u),\sbK(u)\in\End\lb\Co\rb$ with parameters $q,\aaa,\bbb,\ccc,\ddd$:   
\be \label{eq:unfused R and K matrix}
\sR(u)=\begin{bmatrix}
    1 & 0 & 0 & 0\\
    0 & \frac{q(1-u)}{1-qu} & \frac{u(1-q)}{1-qu} & 0 \\
    0 & \frac{1-q}{1-qu} & \frac{1-u}{1-qu} & 0\\
    0 & 0 & 0 & 1
\end{bmatrix},
\quad
\sK(u) = \begin{bmatrix}
          \frac{(\ccc-\aaa)u^2+u}{\ccc u^2+u-\aaa} & \frac{\ccc(u^2-1)}{\ccc u^2+u-\aaa} \\
          \frac{\aaa(u^2-1)}{\ccc u^2+u-\aaa} & \frac{\ccc-\aaa+u}{\ccc u^2+u-\aaa}
         \end{bmatrix}, \quad 
  \sbK(u) = \begin{bmatrix}
          \frac{(\bbb-\ddd)u^2-u}{\bbb u^2-u-\ddd} & \frac{\bbb(u^2-1)}{\bbb u^2-u-\ddd} \\
          \frac{\ddd(u^2-1)}{\bbb u^2-u-\ddd} & \frac{\bbb -\ddd-u}{\bbb u^2-u-\ddd}
         \end{bmatrix}.
\ee
The $R$ matrix $\sR(u)$ above is written under the basis $\left\{\eb_0\o\eb_0, \eb_0\o\eb_1, \eb_1\o\eb_0, \eb_1\o\eb_1\right\}$ of $\Co\o\Co$. The $K$ matrices $\sK(u)$ and $\sbK(u)$ above are written under the basis $\left\{\eb_0,\eb_1\right\}$ of $\Co$.
\end{definition}

From the unfused operators $\sR(u)$, $\sK(u)$ and $\sbK(u)$ above, we will next define the fused operators $\soR(u)\in\End\lb\CoI\o\CoI\rb$ and $\soK(u),\sobK(u)\in\End\lb\CoI\rb$. These are not yet the  spin-$\frac{I}{2}$ fused  $R$ and $K$ matrices that we will make use of, because they are not yet projected to   $q$-exchangeable subspaces.  
\begin{definition}\label{defn:fusion operators}
    We write left/right arrows on product signs to mean that the products are taken from left to right in decreasing/increasing orders of the index. When there are two product signs next to each other we always perform the inner one first.
    We write $\sR_{ij}(u)$ to mean $\sR(u)\in\End\lb\Co\o\Co\rb$ acting on the $i$-th and $j$-th factors of $\CoI$. We write $\sK_i(u)$ (resp. $\sbK_i(u)$) to mean $\sK(u)\in\End\lb\Co\rb$ (resp. $\sbK(u)\in\End\lb\Co\rb$) acting on the $i$-th factor of $\CoI$.
    We define the   permutation operator  $P_{\omega}\in\End\lb\CoI\rb$ corresponding to any $\omega$ in the symmetric group of $\{1,\dots,I\}$.  

We define the fused $R$ operator $\soR(u)\in\End\lb\CoI\o\CoI\rb$ as:
    \be\label{eq:fusion R matrix}
    \soR(u)=\prod_{a\in[[1,I]]}^{\longleftarrow}\prod_{b\in[[1,I]]}^{\longrightarrow}\sR_{b,a+I}\lb uq^{b-a}\rb=\tikz{0.8}{
         
        \foreach\x in {0,...,3}{
		\draw[lgray,line width=1.5pt,>->] (\x,-1) -- (\x,4);
	}
        
        \foreach\y in {0,...,3}{
		\draw[lgray,line width=1.5pt,>->] (-1,\y) -- (4,\y);
	}
         
        \node[below] at (0,-1) {\small $I$};
        \node[below] at (3,-1) {\small$1$};
        \node[below] at (1.5,-1) {\small$\dots$};
        \node[above] at (0,4)  {\small$I$};
        \node[above] at (3,4)  {\small$1$};
        \node[above] at (1.5,4) {\small$\dots$};
         
	\node[left] at (-1,0)  {\small$1$};
        \node[left] at (-1,3)  {\small$I$};
        \node[left] at (-1,1.5)  {\small$\vdots$};
        \node[right] at (4,0)  {\small$1$};
        \node[right] at (4,3)  {\small$I$};
        \node[right] at (4,1.5)  {\small$\vdots$};
       
        \node at (0,0)  {\small$uq^{I-1}$};
        \node at (0,2)  {\small$uq$};
        \node at (1,2)  {\small$u$};
        \node at (3,0)  {\small $u$};
        \node at (3,1)  {\small $uq^{-1}$};
        \node at (2,0)  {\small $uq$};
        \node at (2,1)  {\small$u$};
        \node at (0,3)  {\small$u$};
        \node at (1,3)  {\small$uq^{-1}$};
        \node at (3,3) {\small $uq^{1-I}$};
}  
    \ee
In the above graph the vertical (or horizontal) path labelled $i$ represent the $i$-th (or $(I+i)$-th) factor in $\CoI\o\CoI$ respectively, for $1\leq i\leq I$. The vertex at the intersection of a vertical path labelled $i$ and a horizontal path labelled $j$ represents an  operator $\sR_{i,j+I}$, with spectral parameter given in the graph.

    We define the fused $K$ operator $\soK(u)\in\End\lb\CoI\rb$ as:
    \be\label{eq:fusion K matrix}
        \soK(u)= P_{\binom{1,\dots,I}{I,\dots,1}}\circ\lb\prod_{a\in[[1,I]]}^{\longleftarrow}\sK_a\lb uq^{\frac{I+1}{2}-a}\rb
        \prod_{b\in[[1,a-1]]}^{\longleftarrow}R_{b,a}\lb u^2q^{I+1-a-b}\rb\rb=P_{\binom{1,\dots,I}{I,\dots,1}}\circ\tikz{0.8}{
	\draw[lgray,line width=1.5pt,>->] (-1,0)--(0,0)--(0,4);
	\draw[lgray,line width=1.5pt,>->] (-1,1)--(1,1)--(1,4);
	\draw[lgray,line width=1.5pt,>->] (-1,2)--(2,2)--(2,4);
	\draw[lgray,line width=1.5pt,>->] (-1,3)--(3,3)--(3,4);
        
        \node[left] at (-1,0) {\small $1$};
        \node[left] at (-1,1) {\small $2$};
        \node[left] at (-1,3) {\small $I$};
	\node[left] at (-1,2) {\small $\vdots$};
        
        \node[above] at (0,4) {\small $1$};
        \node[above] at (1,4) {\small $2$};
        \node[above] at (3,4) {\small $I$};
        \node[above] at (2,4) {\small $\dots$};
        
        \node at (0.2,0) {\small $\sqrt{ u^2 q^{I-1}}$};
        \node at (3.2,3) {\small $\sqrt{ u^2 q^{1-I}}$};
        \node at (0,3) {\small $u^2$};
        \node at (0,2) {\small$u^2q$};
        \node at (1,3) {\small $u^2q^{-1}$};
}
    \ee
In the above graph the path labelled $i$ (which first goes rightwards, hit the diagonal and then goes upwards) represent the $i$-th factor in $\CoI$, for $1\leq i\leq I$. The vertex at the intersection of the paths labelled $i$ and $j$ represents an operator $\sR_{i,j}$, for $1\leq i<j\leq I$. The vertex at the turning point of the path labelled by $i$ represents an  operator $\sK_i$. The spectral parameters of these operators are given in the graph.

We define the fused $K$ operator $\sobK(u)\in\End\lb\CoI\rb$ as:
    \be\label{eq:fusion bar K matrix}
\sobK(u)=  P_{\binom{1,\dots,I}{I,\dots,1}}\circ\lb\prod_{a\in[[1,I]]}^{\longrightarrow}\sbK_a\lb uq^{\frac{I+1}{2}-a}\rb
\prod_{b\in[[a+1,I]]}^{\longrightarrow}\sR^{-1}_{b,a}\lb u^2q^{I+1-a-b}\rb\rb=P_{\binom{1,\dots,I}{I,\dots,1}}\circ\tikz{0.8}{
	\draw[lgray,line width=1.5pt,>->] (-1,0)--(0,0)--(0,4);
	\draw[lgray,line width=1.5pt,>->] (-1,1)--(1,1)--(1,4);
	\draw[lgray,line width=1.5pt,>->] (-1,2)--(2,2)--(2,4);
	\draw[lgray,line width=1.5pt,>->] (-1,3)--(3,3)--(3,4);
         
        \node[left] at (-1,0) {\small $I$};
        \node[left] at (-1,2) {\small $2$};
        \node[left] at (-1,3) {\small $1$};
	\node[left] at (-1,1) {\small $\vdots$};
        
        \node[above] at (0,4) {\small $I$};
        \node[above] at (2,4) {\small $2$};
        \node[above] at (3,4) {\small $1$};
        \node[above] at (1,4) {\small $\dots$};
        
        \node at (0.2,0) {\small $\sqrt{u^2q^{1-I}}$};
        \node at (3.2,3) {\small $\sqrt{u^2q^{I-1}}$};
        \node at (0,3) {\small $u^2$};
        \node at (0,2) {\small $u^2q^{-1}$};
        \node at (1,3) {\small $u^2q$};
}
    \ee
In the above graph the path labelled $i$  represent the $i$-th factor in $\CoI$, for $1\leq i\leq I$. The vertex at the intersection of paths labelled $i$ and $j$ represents an  operator $\sR^{-1}_{i,j}$, for $1\leq j<i\leq I$. The vertex at the turning point of path labelled by $i$ represents an operator $\sbK_i$. Spectral parameters are given in the graph.
\end{definition}
\begin{lemma}\label{lem:q-exchangeble} The operator $\soR(u)$ has invariant subspace $\sy\o\sy$, and $\soK(u)$ and $\sobK(u)$ have invariant subspace $\sy$.
\end{lemma}
\begin{proof}
    This is a standard result that was proved for the $R$ matrix, for example, in \cite[Appendix B]{BW}, and for the $K$ matrix in \cite[Section 6.2]{Jimmy_He}, although they appear in slightly different notations.  For this reason, we will provide a sketch of the proof.

    We first sketch the proof for the $K$ matrix. To prove that the $q$-exchangeable subspace $\sy$ is invariant under $\soK(u)$, we need to show that if a $q$-exchangeable probability distribution enters the graph in \eqref{eq:fusion K matrix}, then the outgoing distribution is again $q$-exchangeable.
    We remark that $q$-exchangeability depends on the labeling of edges, and the operator $P_{\binom{1,\dots,I}{I,\dots,1}}$ in \eqref{eq:fusion K matrix} re-labels the outgoing edges at the end.
    Note that 
 $$
\sR(q)_{0,1}^{1,0}=\sR(q)_{1,0}^{1,0}=\frac{1}{q+1},\quad
\sR(q)_{0,1}^{0,1}=\sR(q)_{1,0}^{0,1}=\frac{q}{q+1},
 $$
i.e., for a Yang–Baxter vertex with spectral parameter $q$, the outgoing distribution is always $q$-exchangeable, regardless of the incoming distribution.
We can insert a Yang–Baxter vertex $\sR(q)$ on the left at rows $(i, i+1)$ without changing the distribution entering the graph. We then move this vertex to the right and subsequently to the top, using a Yang–Baxter or reflection equation in Proposition \ref{prop:YBE and RE} at each step. In the end, there is a Yang–Baxter vertex $\sR(q)$ at the top at columns $(i, i+1)$. 
After re-labeling the outgoing edges at the end using the operator $P_{\binom{1,\dots,I}{I,\dots,1}}$, the distribution at outgoing edges $(I - i, I - i + 1)$ is $q$-exchangeable. Letting $i$ range from $1$ to $I - 1$, we conclude that the full outgoing distribution of the operator $\soK(u)$ is $q$-exchangeable. 
This procedure for the case $I = 3$, $i = 1$ is shown in the picture below:
 $$
\tikz{0.9}{
	\draw[lgray,line width=1.5pt,>->] (-1,0)--(-0.75,0)--(-0.25,1)--(1,1)--(1,3);
	\draw[lgray,line width=1.5pt,>->] (-1,1)--(-0.75,1)--(-0.25,0)--(0,0)--(0,3);
	\draw[lgray,line width=1.5pt,>->] (-1,2)--(2,2)--(2,3);
       \node at (0,2) {\small $u^2$};
        \node at (0,1) {\small $u^2q$};
        \node at (1,2) {\small $u^2q^{-1}$};
        \node at (0,0) {\small $uq$};
        \node at (2,2) {\small $uq^{-1}$};
        \node at (1,1) {\small $u$};
        \node at (-0.5,0.5) {\small $q$}; 
}
\quad=\quad\tikz{0.9}{
	\draw[lgray,line width=1.5pt,>->] (-1,0)--(0,0)--(0,1.25)--(1,1.75)--(1,3);
	\draw[lgray,line width=1.5pt,>->] (-1,1)--(1,1)--(1,1.25)--(0,1.75)--(0,3);
	\draw[lgray,line width=1.5pt,>->] (-1,2)--(2,2)--(2,3);
         \node at (0,2) {\small $u^2$};
        \node at (0,1) {\small $u^2q$};
        \node at (1,2) {\small $u^2q^{-1}$};
        \node at (0,0) {\small $u$};
        \node at (2,2) {\small $uq^{-1}$};
        \node at (1,1) {\small $uq$};
        \node at (0.5,1.5) {\small $q$};
}
\quad=\quad\tikz{0.9}{
	\draw[lgray,line width=1.5pt,>->] (-1,0)--(0,0)--(0,2.25)--(1,2.75)--(1,3);
	\draw[lgray,line width=1.5pt,>->] (-1,1)--(1,1)--(1,2.25)--(0,2.75)--(0,3);
	\draw[lgray,line width=1.5pt,>->] (-1,2)--(2,2)--(2,3);
         \node at (0,2) {\small $u^2q^{-1}$};
        \node at (0,1) {\small $u^2q$};
        \node at (1,2) {\small $u^2$};
        \node at (0,0) {\small $u$};
        \node at (2,2) {\small $uq^{-1}$};
        \node at (1,1) {\small $uq$};
        \node at (0.5,2.5) {\small $q$};
}
 $$

The proof for the $R$ matrix is similar to the one above, using only the Yang--Baxter equation and not the reflection equation.
An example of this procedure for $I=3$ is shown in the picture below:
$$
\tikz{0.8}{
       
        \draw[lgray,line width=1.5pt,>->] (0,-1) -- (0,-0.75)--(1,-0.25)-- (1,3);
        \draw[lgray,line width=1.5pt,>->] (1,-1) -- (1,-0.75)--(0,-0.25)-- (0,3);
        \draw[lgray,line width=1.5pt,>->] (2,-1) -- (2,3);
       
        \draw[lgray,line width=1.5pt,>->] (-1,0) -- (3,0);
        \draw[lgray,line width=1.5pt,>->] (-1,1) -- (3,1);
        \draw[lgray,line width=1.5pt,>->] (-1,2) -- (3,2); 
       
        \node at (0,0)  {\small$uq^{2}$};
        \node at (1,0) {\small $uq$};
        \node at (2,0) {\small $u$};
         \node at (0,1)  {\small$uq$};
        \node at (1,1) {\small $u$};
        \node at (2,1) {\small $uq^{-1}$};
         \node at (0,2)  {\small$u$};
        \node at (1,2) {\small $uq^{-1}$};
        \node at (2,2) {\small $uq^{-2}$};
        \node at (0.5,-0.5) {\small $q$};
}  
\quad=\quad
\tikz{0.8}{
       
        \draw[lgray,line width=1.5pt,>->] (0,-1) --(0,0.25)--(1,0.75)-- (1,3);
        \draw[lgray,line width=1.5pt,>->] (1,-1) --(1,0.25)--(0,0.75)-- (0,3);
        \draw[lgray,line width=1.5pt,>->] (2,-1) -- (2,3);
        
        \draw[lgray,line width=1.5pt,>->] (-1,0) -- (3,0);
        \draw[lgray,line width=1.5pt,>->] (-1,1) -- (3,1);
        \draw[lgray,line width=1.5pt,>->] (-1,2) -- (3,2); 
        
        \node at (0,0)  {\small$uq$};
        \node at (1,0) {\small $uq^2$};
        \node at (2,0) {\small $u$};
         \node at (0,1)  {\small$uq$};
        \node at (1,1) {\small $u$};
        \node at (2,1) {\small $uq^{-1}$};
         \node at (0,2)  {\small$u$};
        \node at (1,2) {\small $uq^{-1}$};
        \node at (2,2) {\small $uq^{-2}$};
        \node at (0.5,0.5) {\small $q$};
}  
\quad=\quad
\tikz{0.8}{
       
        \draw[lgray,line width=1.5pt,>->] (0,-1) --(0,1.25)--(1,1.75)-- (1,3);
        \draw[lgray,line width=1.5pt,>->] (1,-1) --(1,1.25)--(0,1.75)-- (0,3);
        \draw[lgray,line width=1.5pt,>->] (2,-1) -- (2,3);
        
        \draw[lgray,line width=1.5pt,>->] (-1,0) -- (3,0);
        \draw[lgray,line width=1.5pt,>->] (-1,1) -- (3,1);
        \draw[lgray,line width=1.5pt,>->] (-1,2) -- (3,2); 
   
        \node at (0,0)  {\small$uq$};
        \node at (1,0) {\small $uq^2$};
        \node at (2,0) {\small $u$};
         \node at (0,1)  {\small$u$};
        \node at (1,1) {\small $uq$};
        \node at (2,1) {\small $uq^{-1}$};
         \node at (0,2)  {\small$u$};
        \node at (1,2) {\small $uq^{-1}$};
        \node at (2,2) {\small $uq^{-2}$};
        \node at (0.5,1.5) {\small $q$};
}  
\quad=\quad
\tikz{0.8}{
        
        \draw[lgray,line width=1.5pt,>->] (0,-1) --(0,2.25)--(1,2.75)-- (1,3);
        \draw[lgray,line width=1.5pt,>->] (1,-1) --(1,2.25)--(0,2.75)-- (0,3);
        \draw[lgray,line width=1.5pt,>->] (2,-1) -- (2,3);
      
        \draw[lgray,line width=1.5pt,>->] (-1,0) -- (3,0);
        \draw[lgray,line width=1.5pt,>->] (-1,1) -- (3,1);
        \draw[lgray,line width=1.5pt,>->] (-1,2) -- (3,2); 
      
        \node at (0,0)  {\small$uq$};
        \node at (1,0) {\small $uq^2$};
        \node at (2,0) {\small $u$};
         \node at (0,1)  {\small$u$};
        \node at (1,1) {\small $uq$};
        \node at (2,1) {\small $uq^{-1}$};
         \node at (0,2)  {\small$uq^{-1}$};
        \node at (1,2) {\small $u$};
        \node at (2,2) {\small $uq^{-2}$};
        \node at (0.5,2.5) {\small $q$};
}  
$$

 Regarding the proof for the $\bK$ matrix, observe from \eqref{eq:unfused R matrices} and \eqref{eq:unfused K matrices} that
$\sR^{-1}(u) = \sR(u)_{q \mapsto 1/q}$ and $\sbK(u) = \sK(u)_{\aaa \mapsto -\ddd, \bbb \mapsto -\ccc}$.
Hence, the result follows from the corresponding result for the $K$ matrix by the change of variables $\aaa \mapsto -\ddd$, $\bbb \mapsto -\ccc$, and $q \mapsto 1/q$.
This concludes the proof. 
\end{proof}

\begin{definition}\label{defn:fused R matrix and K matrix}
    Define  the fused matrices $\sR^I(u)\in\End\lb\CI\o \CI\rb$ and $\sK^I(u),\sbK^I(u)\in\End\lb\CI\rb$:
$$\sR^I(u)=(\Pi_1 \otimes \Pi_2)\circ\soR(u)\circ(\Pih_1 \otimes \Pih_2),\quad \sK^I(u)=\Pi\circ\soK(u)\circ\Pih,\quad \sbK^I(u)=\Pi\circ\sobK(u)\circ\Pih,$$
    i.e.
    \begin{align}
        &\sR^I(u)=(\Pi_1 \otimes \Pi_2)\circ\lb\prod_{a\in[[1,I]]}^{\longleftarrow}\prod_{b\in[[1,I]]}^{\longrightarrow}\sR_{b,a+I}\lb uq^{b-a}\rb\rb\circ(\Pih_1  \otimes \Pih_2 ),\label{eq: actual fusion R}\\
        &\sK^I(u)=\Pi\circ P_{\binom{1,\dots,I}{I,\dots,1}} \circ\lb\prod_{a\in[[1,I]]}^{\longleftarrow}\sK_a\lb uq^{\frac{I+1}{2}-a}\rb
        \prod_{b\in[[1,a-1]]}^{\longleftarrow}\sR_{b,a}\lb u^2q^{I+1-a-b}\rb\rb\circ\Pih,\label{eq: actual fusion K}\\
        &\sbK^I(u)=\Pi\circ P_{\binom{1,\dots,I}{I,\dots,1}}\circ\lb\prod_{a\in[[1,I]]}^{\longrightarrow}\sbK_a\lb uq^{\frac{I+1}{2}-a}\rb
\prod_{b\in[[a+1,I]]}^{\longrightarrow}\sR^{-1}_{b,a}\lb u^2q^{I+1-a-b}\rb\rb\circ\Pih.\label{eq: actual fusion bK} 
    \end{align}

    Here and below, for a linear operator $\Psi$ from a vector space $\mathcal{U}$ to a vector space $\mathcal{W}$, we write $\Psi_1 \otimes \Psi_2$ to denote the linear operator from $\mathcal{U} \otimes \mathcal{U}$ to $\mathcal{W} \otimes \mathcal{W}$, where $\Psi_i$ acts on the $i$-th tensor factor for $i = 1, 2$.
\end{definition}
\begin{remark} 
One can observe from the fusion procedure that    $\sbK^I(u)=\sK^I(u)_{q\mapsto 1/q,\aaa\mapsto-\ddd,\bbb\mapsto-\ccc}$. 
\end{remark}

\subsection{Stochasticity and irreducibility of the model: Proof of Proposition \ref{prop:stochasticity irreducible}}
\label{subsec:Proof of proposition stochasticity irreducible}
In this subsection we prove Proposition \ref{prop:stochasticity irreducible} which states that under the following condition on parameters: 
$$0<q<1,\quad0<\k<q^{\frac{I-1}{2}},\quad \aaa,\bbb,\ccc,\ddd>0,\quad \aaa-\ccc>q^{\frac{1-I}{2}}/\kappa,\quad\bbb-\ddd>q^{\frac{1-I}{2}}/\kappa,$$
  the vertex weights for the fused vertex model:  
$$\R_{a,b}^{c,d}:=\sR^I\lb\k^2\rb_{b,a}^{d,c},\quad \lK_a^d:=\sK^I\lb\k\rb_a^d,\quad \rK_b^c:=\sbK^I(1/\k)_b^c,\quad \text{ for all } 0\leq a,b,c,d\leq I,$$
satisfy the conditions \eqref{eq:stochasticity intro} and \eqref{eq:condition irreducible}, i.e. that the matrices $\R$, $\lK$ and $\rK$ are stochastic, and that all of their entries are strictly positive unless $a+b\neq c+d$ in the matrix $\R$, in which case the entry equals $0$.  
We recall that the matrices $\sR^I\lb\k^2\rb$, $\sK^I\lb\k\rb$ and $\sbK^I(1/\k)$ have been defined by fusion.
One can observe that we only need to prove these properties for all of the (unfused) $R$ and $K$ matrices that appear  as vertices in the graphs \eqref{eq:fusion R matrix} \eqref{eq:fusion K matrix} and \eqref{eq:fusion bar K matrix} in Definition \ref{defn:fusion operators}.
We begin by recalling the unfused matrices:
\begin{align} 
    &\sR(u)=\begin{bmatrix}
    1 & 0 & 0 & 0\\
    0 & q\frac{1-u}{1-qu} & u\frac{1-q}{1-qu} & 0 \\
    0 & \frac{1-q}{1-qu} & \frac{1-u}{1-qu} & 0\\
    0 & 0 & 0 & 1
\end{bmatrix},\quad
\sR^{-1}(u)=\begin{bmatrix}
    1 & 0 & 0 & 0\\
    0 & \frac{1-u}{q-u} & u\frac{q-1}{q-u} & 0 \\
    0 & \frac{q-1}{q-u} & q\frac{1-u}{q-u} & 0\\
    0 & 0 & 0 & 1
\end{bmatrix},\label{eq:unfused R matrices}\\
&\sK(u) = \begin{bmatrix}
          \frac{(\ccc-\aaa)u^2+u}{\ccc u^2+u-\aaa} & \frac{\ccc(u^2-1)}{\ccc u^2+u-\aaa} \\
          \frac{\aaa(u^2-1)}{\ccc u^2+u-\aaa} & \frac{\ccc-\aaa+u}{\ccc u^2+u-\aaa}
         \end{bmatrix}, \quad 
  \sbK(u) = \begin{bmatrix}
          \frac{(\bbb-\ddd)u^2-u}{\bbb u^2-u-\ddd} & \frac{\bbb(u^2-1)}{\bbb u^2-u-\ddd} \\
          \frac{\ddd(u^2-1)}{\bbb u^2-u-\ddd} & \frac{\bbb -\ddd-u}{\bbb u^2-u-\ddd}
         \end{bmatrix}. \label{eq:unfused K matrices}
\end{align}

         We first prove the following claim for the unfused $R$ and $K$ matrices:
\begin{claim}\label{claim:in the proof of stochasticity} Suppose $q\in(0,1)$ and $\aaa,\bbb,\ccc,\ddd>0$. We have the following:
         \begin{enumerate}
              \item [$\bullet$] The middle $4$ entries of $\sR(u)$ are positive if $u\in(0,1)$. 
              \item [$\bullet$] The middle $4$ entries of $\sR^{-1}(u)$ are positive if $u>1$. 
              \item [$\bullet$] The entries of $\sK(u)$ are positive if $u\in(0,1)$ and $\aaa-\ccc>1/u$.
              \item [$\bullet$] The entries of $\sbK(u)$ are positive if $u>1$ and $\bbb-\ddd>u$.
         \end{enumerate}
     \end{claim}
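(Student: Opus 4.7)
The plan is to prove the claim by direct case-by-case inspection of the rational entries of $\sR(u)$, $\sR^{-1}(u)$, $\sK(u)$, $\sbK(u)$ from \eqref{eq:unfused R matrices} and \eqref{eq:unfused K matrices}, checking the signs of numerators and denominators under each set of hypotheses. The arguments for $\sR(u)$ and $\sR^{-1}(u)$ are essentially immediate, while the $K$-matrix cases require a brief sign analysis that combines the range of $u$ with the boundary-parameter inequality.

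First, for $\sR(u)$ with $u\in(0,1)$ and $q\in(0,1)$, I would note that the common denominator $1-qu$ is strictly positive, and each of the numerators $q(1-u)$, $u(1-q)$, $1-q$, $1-u$ is a product of strictly positive factors. For $\sR^{-1}(u)$ with $u>1$, the denominator $q-u$ is strictly negative, and each numerator $1-u$, $u(q-1)$, $q-1$, $q(1-u)$ is also strictly negative, so all four ratios are strictly positive.

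For $\sK(u)$ with $u\in(0,1)$ and $\aaa-\ccc>1/u$, the only delicate step is to show the denominator $\ccc u^2+u-\aaa$ is strictly negative. The hypothesis rearranges to $\aaa>\ccc+1/u$, and since $u<1$ one has $1/u>u$, whence $\aaa>\ccc+1/u\geq\ccc u^2+u$, proving the denominator is negative. Each numerator is then also negative: $(\ccc-\aaa)u^2+u=u\bigl(1+(\ccc-\aaa)u\bigr)<0$ because $(\aaa-\ccc)u>1$; the factors $\ccc(u^2-1)$ and $\aaa(u^2-1)$ are negative since $u^2<1$; and $\ccc-\aaa+u<0$ follows from $\aaa>\ccc+1/u>\ccc+u$. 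Dividing a negative numerator by a negative denominator yields positivity in each entry. The case of $\sbK(u)$ with $u>1$ and $\bbb-\ddd>u$ is entirely parallel: one shows the denominator $\bbb u^2-u-\ddd>0$ using $\bbb u^2>(\ddd+u)u^2=\ddd u^2+u^3>\ddd+u$ (since $u>1$ forces $u^3>u$ and $\ddd u^2>\ddd$), and then checks that each numerator is positive by the same kind of factorization, e.g.\ $(\bbb-\ddd)u^2-u=u\bigl((\bbb-\ddd)u-1\bigr)>0$ because $(\bbb-\ddd)u>u^2>1$.

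The only place where any real care is required is verifying the sign of the $K$-matrix denominators; all four items then reduce to elementary sign bookkeeping, so I do not anticipate any substantive obstacle. Once the claim is established, Proposition \ref{prop:stochasticity irreducible} follows by inserting these positive unfused weights into the compositions \eqref{eq:fusion R matrix}, \eqref{eq:fusion K matrix}, \eqref{eq:fusion bar K matrix} evaluated at the parameters dictated by \eqref{eq:fused weights intro} (which, under \eqref{eq:condition irreducible intro}, lie in the ranges covered by the claim for every vertex appearing in those graphs); stochasticity is preserved by fusion as a convex combination, and strict positivity propagates except where forbidden by conservation of arrows.
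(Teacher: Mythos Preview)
Your proof is correct and follows essentially the same approach as the paper: direct sign inspection of each numerator and denominator under the stated hypotheses. The minor differences are only in how the sign of the $K$-matrix denominator is bounded (you compare $\aaa$ to $\ccc u^2+u$ via $\ccc+1/u$, while the paper compares $\ccc u^2+u-\aaa$ to $\ccc u^2+u-\aaa u^2$), but both arguments are elementary and equivalent.
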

     \begin{proof}[Proof of Claim \ref{claim:in the proof of stochasticity}]
         The first and second can be directly observed. When $\aaa-\ccc>1/u$ and $u\in(0,1)$ we have $(\ccc-\aaa)u+1<0$. Hence $(\ccc-\aaa)u^2+u<0$, $\ccc u^2+u-\aaa<\ccc u^2+u-\aaa u^2<0$ and $\ccc-\aaa+u<\ccc-\aaa+1/u<0$. Moreover $\aaa(u^2-1)<0$ and $\ccc(u^2-1)<0$, therefore the entries of $K(u)$ are positive. When $\bbb-\ddd>u$ and $u>1$ we have $\bbb-\ddd-u>0$, $\bbb u^2-u-\ddd>\bbb-\ddd-u>0$ and $(\bbb-\ddd)u^2-u>u^3-u>0$. Moreover $\bbb(u^2-1)>0$ and $\ddd(u^2-1)>0$, therefore the entries of $\bK(u)$ are positive. We conclude the proof.
     \end{proof}

We record the matrices appearing as vertices of $\sR^I(\kappa^2)$, $\sK^I(\kappa)$ and $\sbK^I(\kappa)$ as follows:
\begin{enumerate}
    \item [$\bullet$] In $\sR^I(\k^2)$ there are $\sR(u)$ for $u=\k^2q^i$, $1-I\leq i\leq I-1$.
    \item [$\bullet$] In $\sK^I(\k)$ there are $\sR(u)$ for $u=\k^2q^i$, $2-I\leq i\leq I-2$ and $\sK(u)$ for $u=\k q^i$, $\frac{1-I}{2}\leq i\leq\frac{I-1}{2}$.
    \item [$\bullet$] In $\sK^I(1/\k)$ there are $\sR^{-1}(u)$ for $u=q^i/\k^2$, $2-I\leq i\leq I-2$ and $\sbK(u)$ for $u=q^i/\k$, $\frac{1-I}{2}\leq i\leq\frac{I-1}{2}$.
\end{enumerate}
Since $\kappa\in(0,q^{\frac{I-1}{2}})$, those $\sR(u)$ that appear have $u\in(0,1)$ and those $\sR^{-1}(u)$ that appear have $u>1$.  
Since $\aaa-\ccc>q^{\frac{1-I}{2}}/\kappa$ and $\bbb-\ddd>q^{\frac{1-I}{2}}/\kappa$, those $\sK(u)$ that appear have $u\in(0,1)$, $\aaa-\ccc>1/u$ and those $\sbK(u)$ that appear have $u>1$, $\bbb-\ddd>u$. By Claim \ref{claim:in the proof of stochasticity} we conclude the proof.

\subsection{Fusion of ZF and GZ relations: Proof of Theorem \ref{thm:Fusion of ZF and GZ relations}} \label{subsec:fusion of ZF and GZ relations proof of main theorem}
In this subsection we prove Theorem \ref{thm:Fusion of ZF and GZ relations} on the fusion of ZF and GZ relations. 

In the proof we will need the following alternative form of the fused $R$ and $K$ matrices.

\begin{theorem}\label{thm:braided version R K} 
We denote by $P_{I,I}$ the operator that swaps the first and second factors in $\CI\o \CI$
or in $\CoI\o\CoI$. Let $\scR^I(u)=P_{I,I}\sR^I(u)$ and $\soRb(u)=P_{I,I}\soR(u)$. Recall that $\scR(u)=P\sR(u)\in\End\lb\Co\o \Co\rb$.
We write $\scR_a(u)$ to mean   operator $\scR(u)$ acting on the $(a,a+1)$ factors of $\CoI\o\CoI$, for $1\leq a\leq 2I-1$. We write $\sK_a(u)$ or $\sbK_a(u)$ to mean   operator $\sK(u)$ or $\sbK(u)$ acting on $a$-th factor of $\CoI$, for $1\leq a\leq I$. Then we have the following expressions of the fused operators introduced in Definition \ref{defn:fusion operators}:
\begin{align}
   &\soRb(u)=\prod_{a\in[[1,I]]}^{\longleftarrow}\prod_{b\in[[a,a+I-1]]}^{\longrightarrow}\scR_b\lb uq^{b+1-2a}\rb,\label{eq:braided fusion oR}\\
   &\soK(u)=\prod^{\longleftarrow}_{a\in[[1,I]]}\sK_1\lb uq^{\frac{I+1}{2}-a}\rb\prod_{b\in[[1,a-1]]}^{\longleftarrow}\scR_{a-b}\lb u^2q^{I+1-a-b}\rb,\label{eq:braided fusion oK}\\
   &\sobK(u)=\prod_{a\in[[1,I]]}^{\longrightarrow}\sbK_I\lb uq^{\frac{I+1}{2}-a}\rb\prod_{b\in[[a+1,I]]}^{\longrightarrow}\scR^{-1}_{I+a-b}\lb u^2q^{I+1-a-b}\rb.\label{eq:braided fusion obK} 
\end{align}  

As a result, the fused $R$ and $K$ matrices in Definition \ref{defn:fused R matrix and K matrix} admit the following alternative expressions:
\begin{align}
    &\scR^I(u)=(\Pi_1 \otimes\Pi_2)\circ\lb\prod_{a\in[[1,I]]}^{\longleftarrow}\prod_{b\in[[a,a+I-1]]}^{\longrightarrow}\scR_b\lb uq^{b+1-2a}\rb\rb\circ(\Pih_1 \otimes\Pih_2), \label{eq:braided fusion R}\\
    &\sK^I(u)=\Pi\circ\lb\prod^{\longleftarrow}_{a\in[[1,I]]}\sK_1\lb uq^{\frac{I+1}{2}-a}\rb\prod_{b\in[[1,a-1]]}^{\longleftarrow}\scR_{a-b}\lb u^2q^{I+1-a-b}\rb\rb\circ\Pih,\label{eq:braided fusion K}\\
    &\sbK^I(u)=\Pi\circ\lb\prod_{a\in[[1,I]]}^{\longrightarrow}\sbK_I\lb uq^{\frac{I+1}{2}-a}\rb\prod_{b\in[[a+1,I]]}^{\longrightarrow}\scR^{-1}_{I+a-b}\lb u^2q^{I+1-a-b}\rb\rb\circ\Pih.\label{eq:braided fusion bK} 
\end{align} 
\end{theorem}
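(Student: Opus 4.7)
The plan is to establish the three operator identities \eqref{eq:braided fusion oR}, \eqref{eq:braided fusion oK}, \eqref{eq:braided fusion obK} directly on $\End\lb\CoI\o\CoI\rb$ or $\End\lb\CoI\rb$; the braided matrix identities \eqref{eq:braided fusion R}, \eqref{eq:braided fusion K}, \eqref{eq:braided fusion bK} then follow at once by sandwiching with the projectors $\Pi$ and $\Pih$, invoking Definition \ref{defn:fused R matrix and K matrix} and the preservation of the $q$-exchangeable subspace (Lemma \ref{lem:q-exchangeble}). The key conceptual point is that the grid expressions in Definition \ref{defn:fusion operators} and the braided expressions here are two decompositions of the same graphical composition: the grid form uses $\sR_{i,j}$ acting on non-adjacent factors and a single block permutation ($P_{I,I}$ or $P_{\binom{1,\dots,I}{I,\dots,1}}$), while the braided form absorbs every elementary transposition $P_{a,a+1}$ into a neighboring $\sR$ to form $\scR_a = P_{a,a+1}\sR_{a,a+1}$.

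For \eqref{eq:braided fusion oR}, I would expand each $\scR_b(u) = P_{b,b+1}\sR_{b,b+1}(u)$ on the right-hand side and separately track the accumulated permutation and the residual product of $\sR$'s. The accumulated permutation, arranged in the pattern of the braided form, equals the block swap $P_{I,I}$ via the classical reduced-word identity for the permutation that interchanges the two blocks $\{1,\dots,I\}$ and $\{I+1,\dots,2I\}$. After moving all transpositions to the left using the intertwining rule $P_{a,a+1}\sR_{b,c}(u) = \sR_{b',c'}(u)P_{a,a+1}$ with $b',c'$ the images under the transposition, the remaining $\sR$'s can be reshuffled via the Yang-Baxter equation $\sR_{12}(x)\sR_{13}(xy)\sR_{23}(y) = \sR_{23}(y)\sR_{13}(xy)\sR_{12}(x)$ to match the grid pattern of \eqref{eq:fusion R matrix}, with the parameters $uq^{b+1-2a}$ becoming exactly $uq^{b-a}$ after the permutation push-through. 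For \eqref{eq:braided fusion oK} and \eqref{eq:braided fusion obK}, a parallel strategy applies, with the extra step that each $\sK_a$ in the grid form (acting on factor $a$) has to be transported to factor $1$ (respectively to factor $I$) via conjugation $\sK_a = P_{a-1,a}\cdots P_{1,2}\sK_1 P_{1,2}\cdots P_{a-1,a}$. Performing this transport systematically for $a$ from $I$ down to $1$ (respectively $1$ up to $I$) and absorbing the conjugating transpositions into neighboring $\sR$'s produces the braided form; the leftover accumulated permutation is precisely the full reversal $P_{\binom{1,\dots,I}{I,\dots,1}}$.

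The main technical obstacle is the meticulous bookkeeping of spectral parameters: the shifts $q^{b-a}$, $q^{b+1-2a}$, and $q^{I+1-a-b}$ are tuned precisely so that every Yang-Baxter move and every permutation push-through is consistent, and this tuning must be verified by explicit indexed checks, or equivalently by induction on $I$ that peels off the outermost row or column at each step. No new ingredient beyond the Yang-Baxter equation, the definition $\scR_a = P_{a,a+1}\sR_{a,a+1}$, and (for the $K$ case) the conjugation identity above is required, but the combinatorial accounting is delicate and constitutes the bulk of the work. Once the three operator identities are proved, applying $(\Pi_1\Pi_2)$ and $(\Pih_1\Pih_2)$ on the outside (for the $R$ case) or $\Pi$ and $\Pih$ on the outside (for the $K$ cases) directly yields the matrix identities stated in the theorem.
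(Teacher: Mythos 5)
Your overall strategy is the one the paper uses: expand each $\scR_b(u)=P_{b,b+1}\sR_{b,b+1}(u)$, push the elementary transpositions through the product by conjugation, identify the accumulated permutation with the block swap $P_{I,I}$ (resp.\ the reversal $P_{\binom{1,\dots,I}{I,\dots,1}}$ in the $K$ cases), and then sandwich with $\Pi$, $\Pih$ to pass from \eqref{eq:braided fusion oR}--\eqref{eq:braided fusion obK} to \eqref{eq:braided fusion R}--\eqref{eq:braided fusion bK}. The paper organizes this bookkeeping as two short inductions (first along a single row of $\scR$'s, then on the number of rows), and handles the $K$ case by transporting $\sK_a$ to position $1$ (resp.\ $\sbK_a$ to position $I$), which is exactly your conjugation idea $\sK_a=P_{a-1,a}\cdots P_{1,2}\sK_1P_{1,2}\cdots P_{a-1,a}$.

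However, one step in your plan is both unnecessary and unjustified as written: the claim that, after moving the transpositions to the left, the remaining $\sR$'s ``can be reshuffled via the Yang--Baxter equation to match the grid pattern.'' No Yang--Baxter move is needed, and none would be available in general: the YBE only applies to triples of operators whose spectral parameters sit in the pattern $x,xy,y$, so ``reshuffling'' an arbitrary product would itself require proof. The actual mechanism is purely the conjugation identities $P_{\omega}\sR_{ab}(u)P_{\omega^{-1}}=\sR_{\omega(a)\omega(b)}(u)$ and $P_{\omega}\sK_a(u)P_{\omega^{-1}}=\sK_{\omega(a)}(u)$: conjugation never alters a spectral parameter, and after the push-through the residual $\sR$'s occur in exactly the same order as in \eqref{eq:fusion R matrix}, with parameters agreeing simply by the re-indexing $b\mapsto b-a+1$ (for fixed $a$, the braided parameter $uq^{b+1-2a}$ with $b\in[[a,a+I-1]]$ is literally $uq^{c-a}$ with $c=b-a+1\in[[1,I]]$); the parameters do not ``become'' anything through the push-through, contrary to your phrasing. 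Relatedly, for the final sandwiching you do not need Lemma \ref{lem:q-exchangeble}: it suffices that the block swap (resp.\ reversal) commutes with the factor-wise projections, together with Definition \ref{defn:fused R matrix and K matrix}. With the YBE step deleted and the index bookkeeping carried out by the two inductions, your plan coincides with the paper's proof.
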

\begin{remark}\label{rmk:braided}
The alternative forms for   fused $R$ and $K$ matrices provided in \eqref{eq:braided fusion R}, \eqref{eq:braided fusion K} and \eqref{eq:braided fusion bK} above are known as the `braided' forms.  The braid form for   fused $R$ matrix can be found in \cite[Section 5.2]{dA22} in different notations. We are unable to find the braid form for the fused $K$ matrix in the literature. 
An explicit formula was obtained in \cite{Kuan_stochastic} for the fused $R$ matrix using the braided form combined with techniques of Hecke algebras. We expect that  the braid form for   fused $K$ matrix could also yield such explicit formula.
\end{remark}
\begin{proof}
 The proof is essentially a multiple (inductive) use of two basic identities:
$$P_{\omega}\sR_{ab}(u)P_{\omega^{-1}}=\sR_{\omega(a)\omega(b)}(u),\quad
P_{\omega}\sK_a(u)P_{\omega^{-1}}=\sK_{\omega(a)}(u),$$
for any $\omega$ in the symmetric group and indices $a,b$.
We can see the spectral parameters are always unchanged, so they
do not play a role as long as they are in the correct order. We  omit spectral parameters in the proof. 

We begin with the proof of the braided form \eqref{eq:braided fusion oR} for the fused $R$ matrix. Note that the left-hand side of \eqref{eq:braided fusion oR} is $\soRb(u)=P_{I,I}\soR(u)$, where $P_{I,I}\in\End\lb\CoI\o\CoI\rb$ swaps the two factors of $\CoI$, and hence can also be written as  $P_{I,I}=P_{\binom{1,\dots,I,I+1,\dots,2I}{I+1,\dots,2I,1,\dots,I}}\in\End\lb\lb\C^2\rb^{\otimes 2I}\rb$.   
 We compare the formula \eqref{eq:braided fusion oR} for $\soRb(u)=P_{I,I}\soR(u)$ with the formula \eqref{eq:fusion R matrix} for $\soR(u)$, and
we only need to prove:
\be \label{eq:barided R matrix we need to prove}
\prod_{a\in[[1,I]]}^{\longleftarrow}\prod_{b\in[[a,a+I-1]]}^{\longrightarrow}\scR_b=P_{\binom{1,\dots,I,I+1,\dots,2I}{I+1,\dots,2I,1,\dots,I}}\prod_{a\in[[1,I]]}^{\longleftarrow}\prod_{b\in[[1,I]]}^{\longrightarrow}\sR_{b,a+I}.
\ee
In the following we will use $(i,j|k)$ for $i\leq j$ to denote the element $\omega$ in the symmetric group such that $\omega(i)=i+k, \dots, \omega(j)=j+k$, $\omega(j+1)=i,\dots,\omega(j+k)=i+k-1$. 
\begin{claim} 
    For any $i\leq j$ 
    we have
    $$\prod_{a\in[[i,j]]}^{\lra}\scR_a=P_{(i,j|1)}\prod_{a\in[[i,j]]}^{\lra}\sR_{a,j+1}.$$
\end{claim}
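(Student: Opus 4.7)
My plan is to prove the claim by induction on $j - i$, exploiting the two basic conjugation identities $P_{\omega} \sR_{ab} P_{\omega^{-1}} = \sR_{\omega(a)\omega(b)}$ (stated in the proof) together with the definition $\scR_a = P_{a,a+1}\sR_{a,a+1}$. Throughout I will suppress spectral parameters, as the proof author does, since they are carried along unchanged.

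The base case $j = i$ reduces to $\scR_i = P_{i,i+1}\sR_{i,i+1}$, and the permutation $(i,i|1)$ is exactly the transposition $(i,i+1)$, so there is nothing to prove. For the inductive step, I would split off the last factor:
\[
\prod_{a\in[[i,j]]}^{\lra}\scR_a \;=\; \Bigl(\prod_{a\in[[i,j-1]]}^{\lra}\scR_a\Bigr)\,\scR_j \;=\; P_{(i,j-1|1)}\,\Bigl(\prod_{a\in[[i,j-1]]}^{\lra}\sR_{a,j}\Bigr)\,P_{j,j+1}\,\sR_{j,j+1},
\]
using the inductive hypothesis on the first factor. Then I would push $P_{j,j+1}$ leftward through the product using $\sR_{a,j}\,P_{j,j+1} = P_{j,j+1}\,\sR_{a,j+1}$ (for $a < j$, so that $P_{j,j+1}$ commutes trivially with the other index of $\sR_{a,j}$); this converts each $\sR_{a,j}$ into $\sR_{a,j+1}$ and leaves
\[
P_{(i,j-1|1)}\,P_{j,j+1}\,\prod_{a\in[[i,j]]}^{\lra}\sR_{a,j+1}.
\]

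It then remains to check the permutation identity $P_{(i,j-1|1)} P_{j,j+1} = P_{(i,j|1)}$. Reading the composition right to left: $P_{j,j+1}$ acts first as the transposition swapping $j$ and $j+1$, and then $(i,j-1|1)$ (which cycles $i \mapsto i+1 \mapsto \dots \mapsto j-1 \mapsto j \mapsto i$ and fixes $j+1$) is applied. Tracing indices, $i\mapsto i\mapsto i+1$, $\dots$, $j-1 \mapsto j-1\mapsto j$, $j\mapsto j+1\mapsto j+1$, $j+1 \mapsto j\mapsto i$, giving exactly the cycle $(i,j|1)$. This closes the induction.

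The main obstacle, if any, is purely notational: keeping straight the composition convention for $P_\omega P_\sigma$ (so that the cycle $(i,j|1)$ appears with the correct orientation) and making sure the commutation of $P_{j,j+1}$ past $\sR_{a,j}$ uses the fact that $a \in [[i,j-1]]$ so that $j+1$ is an unused index for those $\sR$'s. Everything else is routine bookkeeping with the conjugation identity.
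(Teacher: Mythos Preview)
Your proof is correct. The paper takes the mirror-image route: it inducts backwards on $i$ (with $j$ fixed), splitting off the \emph{leftmost} factor $\scR_i$ rather than the rightmost $\scR_j$. Concretely, the paper writes
\[
\prod_{a\in[[i,j]]}^{\lra}\scR_a=\scR_i\,\prod_{a\in[[i+1,j]]}^{\lra}\scR_a=P_{i,i+1}\,\sR_{i,i+1}\,P_{(i+1,j|1)}\prod_{a\in[[i+1,j]]}^{\lra}\sR_{a,j+1},
\]
then pushes the single $\sR_{i,i+1}$ rightward through $P_{(i+1,j|1)}$ (one conjugation, turning it into $\sR_{i,j+1}$) and combines $P_{i,i+1}P_{(i+1,j|1)}=P_{(i,j|1)}$. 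Your version instead pushes the transposition $P_{j,j+1}$ leftward through $j-i$ many $\sR_{a,j}$'s. Both arguments rest on exactly the same conjugation identity and the same permutation bookkeeping; the paper's choice is marginally more economical (one conjugation per inductive step versus $j-i$), but there is no substantive difference.
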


We prove this claim by backwards induction on $i$. When $i=j$ it reduces to $\scR_j=P_{j,j+1}\sR_{j,j+1}$. Suppose the claim holds for $i+1\in[[2,j]]$ and we prove that it holds for $i$:
\begin{equation*}
    \begin{split}
        \prod_{a\in[[i,j]]}^{\lra}\scR_a &=\scR_i\prod_{a\in[[i+1,j]]}^{\lra}\scR_a
        =P_{i,i+1}\sR_{i,i+1}P_{(i+1,j|1)}\prod_{a\in[[i+1,j]]}^{\lra}\sR_{a,j+1}\\
        &=P_{i,i+1}P_{(i+1,j|1)}\sR_{i,j+1}\prod_{a\in[[i+1,j]]}^{\lra}\sR_{a,j+1}
        =P_{(i,j|1)}\prod_{a\in[[i,j]]}^{\lra}\sR_{a,j+1}.
    \end{split}
\end{equation*}
Hence the claim holds for $i$ and we conclude its proof. Take $j=i+I-1$ in the claim and we get
\be \label{eq:claim R first} 
\prod_{a\in[[i,i+I-1]]}^{\lra}\scR_a=P_{(i,i+I-1|1)}\prod_{a\in[[i,i+I-1]]}^{\lra}\sR_{a,i+I}.
\ee
\begin{claim}
   For $1\leq i\leq I$ we have
    $$\prod_{a\in[[1,i]]}^{\lla}\prod_{b\in[[a,a+I-1]]}^{\lra}\scR_b=P_{(1,I|i)}\prod_{a\in[[1,i]]}^{\lla}\prod_{b\in[[1,I]]}^{\lra} \sR_{b,a+I}.$$
\end{claim}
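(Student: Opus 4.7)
The plan is to proceed by induction on $i \in [[1, I]]$, using the first claim above both as the base case and as the key input in the inductive step. For the base case $i = 1$, the outer $\lla$-product has a single factor $a = 1$, and the statement becomes $\prod_{b \in [[1, I]]}^{\lra} \scR_b = P_{(1, I | 1)} \prod_{b \in [[1, I]]}^{\lra} \sR_{b, I+1}$, which is exactly the first claim at $(i, j) = (1, I)$.

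For the inductive step, assume the statement for $i - 1$ and peel off the leftmost block $a = i$ from the outer $\lla$-product. The first claim applied with endpoints $i$ and $i + I - 1$ rewrites this block as $P_{(i, i+I-1 | 1)} \prod_{b \in [[i, i+I-1]]}^{\lra} \sR_{b, i+I}$, while the remaining $i - 1$ blocks are handled by the induction hypothesis. This yields
$$
\prod_{a \in [[1, i]]}^{\lla} \prod_{b \in [[a, a+I-1]]}^{\lra} \scR_b = P_{(i, i+I-1 | 1)} \left( \prod_{b \in [[i, i+I-1]]}^{\lra} \sR_{b, i+I} \right) P_{(1, I | i-1)} \prod_{a \in [[1, i-1]]}^{\lla} \prod_{b \in [[1, I]]}^{\lra} \sR_{b, a+I}.
$$

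The remaining task is to commute $P_{(1, I | i-1)}$ to the left past the product $\prod_{b \in [[i, i+I-1]]}^{\lra} \sR_{b, i+I}$, and then to merge the two permutations. Writing $\omega = (1, I | i - 1)$, the conjugation identity $P_{\omega} \sR_{ab} P_{\omega^{-1}} = \sR_{\omega(a), \omega(b)}$ combined with the observations that $\omega^{-1}(b) = b - i + 1$ for $b \in [[i, i+I-1]]$ and $\omega^{-1}(i + I) = i + I$ shows that this commutation turns $\prod_{b \in [[i, i+I-1]]}^{\lra} \sR_{b, i+I}$ into $\prod_{b \in [[1, I]]}^{\lra} \sR_{b, i+I}$; prepending this factor to the $\lla$-product over $a \in [[1, i-1]]$ extends it to $a \in [[1, i]]$. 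Finally, one verifies directly that $(i, i+I-1 | 1) \circ (1, I | i - 1) = (1, I | i)$ by tracking the images of $1, 2, \ldots, I + i$: the right factor sends $\{1, \ldots, I\}$ into $\{i, \ldots, I+i-1\}$, the left factor further shifts the block $\{i, \ldots, i + I\}$ forward by one (with $i + I$ absorbed back to $i$), and the resulting permutation matches the definition of $(1, I | i)$. The main obstacle is simply keeping the permutation conventions and their supports straight; beyond the conjugation identity between $P_\omega$ and $\sR$, no nontrivial commutation among the $\scR$ or $\sR$ factors is invoked.
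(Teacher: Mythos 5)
Your proof is correct and follows essentially the same route as the paper: induction on $i$ with the base case given by the first claim, peeling off the $a=i$ block via the first claim at endpoints $(i,i+I-1)$, invoking the induction hypothesis, commuting $P_{(1,I|i-1)}$ leftwards using the conjugation identity (which reindexes $\sR_{b,i+I}$ to $\sR_{b-i+1,i+I}$ while fixing $i+I$), and composing $(i,i+I-1|1)\circ(1,I|i-1)=(1,I|i)$. No gaps; the permutation bookkeeping you describe is exactly what the paper's displayed computation carries out.
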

We prove this claim by induction on $i$. When $i=1$ it reduces to \eqref{eq:claim R first}. Suppose it is true for $i-1\geq 1$. 
\begin{equation*}
    \begin{split}
        \prod_{a\in[[1,i]]}^{\lla}\prod_{b\in[[a,a+I-1]]}^{\lra}\scR_b &=\prod_{b\in[[i,i+I-1]]}^{\lra}\scR_b\times\prod_{a\in[[1,i-1]]}^{\lla}\prod_{b\in[[a,a+I-1]]}^{\lra}\scR_b\\
        &=P_{(i,i+I-1|1)}\sR_{i,i+I}\dots \sR_{i+I-1,i+I} P_{(1,I|i-1)} \prod_{a\in[[1,i-1]]}^{\lla}\prod_{b\in[[1,I]]}^{\lra} \sR_{b,a+I}\\
        &=P_{(i,i+I-1|1)}P_{(1,I|i-1)}\sR_{1,i+I}\dots \sR_{I,i+I}\prod_{a\in[[1,i-1]]}^{\lla}\prod_{b\in[[1,I]]}^{\lra} \sR_{b,a+I}\\
        &=P_{(1,I|i)}\prod_{a\in[[1,i]]}^{\lla}\prod_{b\in[[1,I]]}^{\lra} \sR_{b,a+I}.
    \end{split}
\end{equation*}
Hence the claim holds for $i$ and we conclude its proof. We take $i=I$ in this claim and get equation \eqref{eq:barided R matrix we need to prove}. We conclude the proof of the braided form \eqref{eq:braided fusion oR} of fused $R$ matrix.

To prove the braided form \eqref{eq:braided fusion oK} for fused $K$ matrix, compare it with  \eqref{eq:fusion K matrix} and we only need to prove:
$$
\prod_{a\in[[1,I]]}^{\lla}\sK_1\scR_1\dots\scR_{a-1}=P_{\binom{1,\dots,I}{I,\dots,1}}\prod_{a\in[[1,I]]}^{\lla}\sK_a\sR_{a-1,a}\dots \sR_{1,a}.
$$
This can be seen as starting from $\sK_1=\sK_1$ and multiplying the following identity through $2\leq a\leq I$:
\be\label{eq:identity in the proof of braided fusion K}
\sK_1\scR_1\dots\scR_{a-1}P_{\binom{1,\dots,a-1}{a-1,\dots,1}}=P_{\binom{1,\dots,a}{a,\dots,1}}\sK_{a}\sR_{a-1,a}\dots \sR_{1,a}.
\ee
We re-write the right hand side of the above identity as $$\sK_1\sR_{21}\dots \sR_{a,1}P_{\binom{1,\dots,a}{a,\dots,1}},$$
so the identity \eqref{eq:identity in the proof of braided fusion K} is equivalent to
$$\scR_1\dots\scR_{a-1}=\sR_{21}\dots \sR_{a,1}P_{(1,a-1|1)}.$$
This can be seen as starting with $\scR_1=\sR_{21}P_{12}$ and multiplying
 the following identity through $2\leq i\leq a-1$:
$$P_{(1,i-1|1)}\scR_i=\sR_{i+1,1}P_{(1,i|1)}.$$
The above identity reduces to $\scR_i=P_{i,i+1}\sR_{i,i+1}$. Hence we conclude the proof of \eqref{eq:braided fusion oK}.

To prove the braided form of fused $\bK$ matrix \eqref{eq:braided fusion obK} we note that
$$\scR_a^{-1}=\lb P_{a,a+1}\sR_{a,a+1}\rb^{-1}=\sR_{a,a+1}^{-1}P_{a,a+1}=P_{a,a+1}\sR_{a+1,a}^{-1}$$
the  proof is then word-by-word parallel with the proof of braided form of fused $K$ matrix \eqref{eq:braided fusion oK} if we replace each index $a$ by $I+1-a$ for $1\leq a\leq I$ and each operator $\sR$ by $\sR^{-1}$. 
\end{proof}

We now begin the proof of Theorem \ref{thm:Fusion of ZF and GZ relations}.
\begin{proof}[Proof of Theorem \ref{thm:Fusion of ZF and GZ relations}]
    We want to show that   $\sbM^I(u):=\left[\sM_0^I(u),\dots\sM_I^I(u)\right]^I\in \C^{I+1}\o\hA$ defined as: 
    \be\label{eq:fused solution ZF and GZ in proof}\sM^I_{\zeta}(u):=\sum_{\substack{\zeta_1+\dots+\zeta_I=\zeta \\ \zeta_1,\dots,\zeta_I\in\left\{0,1\right\}}}\prod_{a\in[[1,I]]}^{\lra}\sM_{\zeta_a}\lb uq^{-\frac{I+1}{2}+a}\rb\in\hA,\quad 0\leq\zeta\leq I,\ee
    satisfies the ZF and GZ relations with the fused matrices $\sR^I(u)\in\End\lb\C^{I+1}\o\C^{I+1}\rb$ and $\sK^I(u), \sbK^I(u)\in\End\lb\C^{I+1}\rb$ defined in Definition \ref{defn:fused R matrix and K matrix}.
    We are able to write $\sbM^I(u)$ in an alternate form:
    \be\label{eq:fused_q_M_vector}
\sbM^I(u)=\Pi\circ\bigotimes_{a\in[[1,I]]}^{\longrightarrow}\sbM\lb uq^{-\frac{I+1}{2}+a}\rb\in \CI\o \hA,
\ee
where $\sbM(u)=[\sM_0(u),\sM_1(u)]^T\in\C^2\o\hA$.  
We remark that in the above equation, the factor $\hA^{\o I}$ get (implicitly) contracted to $\hA$ by the multiplication in the algebra $\hA$, and  then $\Pi$ projects $\CoI$ to $\CI$.

We next show that 
\begin{equation}\label{eq:fused M in q exchangable subspace}
    \bigotimes_{a\in[[1,I]]}^{\longrightarrow}\sbM\lb uq^{-\frac{I+1}{2}+a}\rb\in \sy\o \hA,
\end{equation}
where we recall that $\sy$ is the $q$-exchangeable subspace of $\CoI$. In view of Definition \ref{def:fusion procedure} (3), it suffices to prove that for any $a\in[[1,I-1]]$, we have \[q\sM_0\left( uq^{-\frac{I+1}{2}+a}\right)\sM_1\left(uq^{-\frac{I+1}{2}+a+1}\right)=\sM_1\left( uq^{-\frac{I+1}{2}+a}\right)\sM_0\left(uq^{-\frac{I+1}{2}+a+1}\right).\]
This identity can be verified directly using $\sM_0(u)=u+\e$, $\sM_1(u)=\frac{1}{u}+\d$, and the relation $\d\e-q\e\d=1-q$.

We recall from Proposition \ref{prop:unfused ZF and GZ solution} that $\sbM(u)$ 
satisfies the $\zf$ and $\gz$ relations with the unfused matrices $\sR(u)\in\End\lb\Co\o\Co\rb$ and $\sK(u),\sbK(u)\in\End\lb\Co\rb$.

We make use of the braided forms of the fused $R$ and $K$ matrices given in Theorem \ref{thm:braided version R K}. 
In view of Lemma \ref{lem:q-exchangeble}, the operator $\soR(u)$ has the invariant subspace $\sy\o\sy$, while $\soK(u)$ and $\sobK(u)$ have the invariant subspace $\sy$. Note also that the projector $F=\Pih\circ\Pi$ acts as the identity operator on $\mathrm{Im}(F)=\sy\subset\CoI$. Using \eqref{eq:fused M in q exchangable subspace}, one can therefore get rid of the various $\Pi$ and $\Pih$. Hence, in order to prove the fused ZF relation  
    \be\label{eq:fused_ZF_relation}
    \scR^I\left(\frac{x}{y}\right)\sbM^I(x)\o\sbM^I(y)=\sbM^I(y)\o\sbM^I(x),
    \ee
    we only need to prove 
    \begin{equation*}
        \begin{split}
            \prod_{i\in[[1,I]]}^{\lla}\scR_i\lb\frac{x}{y}q^{1-i}\rb\dots\scR_{i+I-1}\lb\frac{x}{y}q^{I-i}\rb\bigotimes_{a\in[[1,I]]}^{\longrightarrow}\sbM\lb xq^{-\frac{I+1}{2}+a}\rb\o\bigotimes_{b\in[[1,I]]}^{\longrightarrow}\sbM\lb yq^{-\frac{I+1}{2}+b}\rb\\
            =\bigotimes_{b\in[[1,I]]}^{\longrightarrow}\sbM\lb yq^{-\frac{I+1}{2}+b}\rb\o\bigotimes_{a\in[[1,I]]}^{\longrightarrow}\sbM\lb xq^{-\frac{I+1}{2}+a}\rb.
        \end{split}
    \end{equation*}
This can be seen as an induction through the following identity for $1\leq i\leq I$:
\begin{equation*}
        \begin{split}
            \scR_i\lb\frac{x}{y}q^{1-i}\rb\dots\scR_{i+I-1}\lb\frac{x}{y}q^{I-i}\rb\bigotimes_{b\in[[1,i-1]]}^{\longrightarrow}\sbM\lb yq^{-\frac{I+1}{2}+b}\rb\o\bigotimes_{a\in[[1,I]]}^{\longrightarrow}\sbM\lb xq^{-\frac{I+1}{2}+a}\rb\o\bigotimes_{b\in[[i,I]]}^{\longrightarrow}\sbM\lb yq^{-\frac{I+1}{2}+b}\rb\\
            =\bigotimes_{b\in[[1,i]]}^{\longrightarrow}\sbM\lb yq^{-\frac{I+1}{2}+b}\rb\o\bigotimes_{a\in[[1,I]]}^{\longrightarrow}\sbM\lb xq^{-\frac{I+1}{2}+a}\rb\o\bigotimes_{b\in[[i+1,I]]}^{\longrightarrow}\sbM\lb yq^{-\frac{I+1}{2}+b}\rb,
        \end{split}
    \end{equation*}
which is itself a result of using of the unfused ZF relation:
\be\label{eq:unfused ZF relation in the proof}
\scR\left(\frac{x}{y}\right)\sbM(x)\o\sbM(y)=\sbM(y)\o\sbM(x)
\ee
$I$ times, where
each step transfers the $(I+i)$-th factor  $\sbM\lb yq^{-\frac{I+1}{2}+i}\rb$ in the tensor product leftwards by one. We conclude the proof of the identity and hence the proof of fused ZF relation \eqref{eq:fused_ZF_relation}.

To prove the fused GZ relation
    \be\label{eq: fused GZ relation in the proof}
    \lw \sK^I(u)\sbM^I\lb\frac{1}{u}\rb=\ll W|\sbM^I(u),
    \ee
we only need to prove
\begin{equation*}
        \begin{split}
            \lw\prod_{i\in[[1,I]]}^{\lla}\sK_1\lb uq^{\frac{I+1}{2}-i}\rb\scR_1\lb u^2q^{I+2-2i}\rb\dots\scR_{i-1}\lb u^2q^{I-i}\rb\bigotimes_{a\in[[1,I]]}^{\lra}\sbM\lb\frac{1}{u}q^{-\frac{I+1}{2}+a}\rb\\
            =\lw\bigotimes_{a\in[[1,I]]}^{\lra}\sbM\lb uq^{-\frac{I+1}{2}+a}\rb.
        \end{split}
    \end{equation*}
This can be seen as an induction through the following identity for $1\leq i\leq I$:
\begin{equation}\label{eq:identity in the proof of fused GZ relation}
    \begin{split}
        \lw \sK_1\lb uq^{\frac{I+1}{2}-i}\rb\scR_1\lb u^2q^{I+2-2i}\rb\dots\scR_{i-1}\lb u^2q^{I-i}\rb\bigotimes_{a\in[[1,i-1]]}^{\lla}\sbM\lb uq^{\frac{I+1}{2}-a}\rb\o\bigotimes_{a\in[[i,I]]}^{\lra}\sbM\lb\frac{1}{u}q^{-\frac{I+1}{2}+a}\rb\\
        =\lw \bigotimes_{a\in[[1,i]]}^{\lla}\sbM\lb uq^{\frac{I+1}{2}-a}\rb\o\bigotimes_{a\in[[i+1,I]]}^{\lra}\sbM\lb\frac{1}{u}q^{-\frac{I+1}{2}+a}\rb
    \end{split}
\end{equation}
The above equation can be seen as follows. Firstly the $R$ matrices transfer the $i$-th factor $\sbM\lb\frac{1}{u}q^{-\frac{I+1}{2}+i}\rb$ in the tensor product all the way to the left, where each step uses the unfused ZF relation \eqref{eq:unfused ZF relation in the proof}, i.e.
\begin{equation*}
    \begin{split}
        \scR_1\lb u^2q^{I+2-2i}\rb\dots\scR_{i-1}\lb u^2q^{I-i}\rb\bigotimes_{a\in[[1,i-1]]}^{\lla}\sbM\lb uq^{\frac{I+1}{2}-a}\rb\o\bigotimes_{a\in[[i,I]]}^{\lra}\sbM\lb\frac{1}{u}q^{-\frac{I+1}{2}+a}\rb\\
        =\sbM\lb\frac{1}{u}q^{-\frac{I+1}{2}+i}\rb\o\bigotimes_{a\in[[1,i-1]]}^{\lla}\sbM\lb uq^{\frac{I+1}{2}-a}\rb\o  \bigotimes_{a\in[[i+1,I]]}^{\lra}\sbM\lb\frac{1}{u}q^{-\frac{I+1}{2}+a}\rb
    \end{split}
\end{equation*}
Then the $K$ matrix $\sK_1\lb uq^{\frac{I+1}{2}-i}\rb$ acts on the first factor $\sbM\lb\frac{1}{u}q^{-\frac{I+1}{2}+i}\rb$ of the tensor product and converts the spectral parameter to its inverse $\sbM\lb uq^{\frac{I+1}{2}-i}\rb$ by the unfused GZ relation:
\be \label{eq:unfused GZ relation in the proof}
   \lw \sK(u)\sbM\lb\frac{1}{u}\rb=\ll W|\sbM(u).
\ee
i.e. 
\begin{equation*}
    \begin{split}
        \lw \sK_1\lb uq^{\frac{I+1}{2}-i}\rb\sbM\lb\frac{1}{u}q^{-\frac{I+1}{2}+i}\rb\o\bigotimes_{a\in[[1,i-1]]}^{\lla}\sbM\lb uq^{\frac{I+1}{2}-a}\rb\o  \bigotimes_{a\in[[i+1,I]]}^{\lra}\sbM\lb\frac{1}{u}q^{-\frac{I+1}{2}+a}\rb\\
        =\lw \bigotimes_{a\in[[1,i]]}^{\lla}\sbM\lb uq^{\frac{I+1}{2}-a}\rb\o\bigotimes_{a\in[[i+1,I]]}^{\lra}\sbM\lb\frac{1}{u}q^{-\frac{I+1}{2}+a}\rb.
    \end{split}
\end{equation*}
We conclude the proof of the identity \eqref{eq:identity in the proof of fused GZ relation} and hence the proof of the fused GZ relation \eqref{eq: fused GZ relation in the proof}.

We remark that in the above proof we have implicitly used two basic facts:
\begin{enumerate}
    \item [$\bullet$] If $m_1,m_2\in \mathcal{V}\o\hA$ and $\ll W|m_1=\ll W|m_2$ then $\ll W|\Phi m_1=\ll W|\Phi m_2$ for any $\Phi\in\End(\mathcal{V})$.
    \item [$\bullet$] If $m_1,m_2\in \mathcal{V}\o\hA$, $m\in \mathcal{V}'\o\hA$ and $\ll W|m_1=\ll W|m_2$ then $\ll W|m_1\o m=\ll W|m_2\o m$.
\end{enumerate}
The first fact allows us to induct on the identity \eqref{eq:identity in the proof of fused GZ relation} and the second fact allows us to apply the unfused GZ equation \eqref{eq:unfused GZ relation in the proof} only on the first factor of a tensor product.

The proof of the second fused GZ relation
\be\label{eq:second_fused_GZ_relation}
\sbK^I(u)\sbM^I\lb\frac{1}{u}\rb \rv=\sbM^I(u) |V\rr
\ee
follows from an induction that is parallel with the above proof of the first fused GZ relation.  
\end{proof}
\subsection{Proof of Theorem \ref{thm:general matrix ansatz higher spin}}
  \label{subsec:proof of MPA}
In this subsection we offer the deferred proof of Theorem \ref{thm:general matrix ansatz higher spin}. 
This proof is very similar with the proof of \cite[Theorem 2.6]{Y} in the spin-$\frac{1}{2}$ ($I=1$) case. 

Assume  there are elements $M_{j}^\u$ and $M_{j}^\r$, $0\leq j\leq I$ in the $\C$-algebra $\hA$ 
and  boundary vectors $\ll W|\in H^*$ and $|V\rr\in H$ (recall that $H$ is a linear representation space of $\hA$) satisfying consistency relations:  
\be\label{eq:consistency relations in proof}M_c^\u M_d^\r=\sum_{a,b=0}^I\R_{a,b}^{c,d}M_b^\r M_a^\u, \quad
    \ll W|M_d^\r=\sum_{a=0}^I\lb\lK_a^d\rb\ll W|M_a^\u, \quad
    M_c^\u|V\rr=\sum_{b=0}^I\lb\rK_b^c\rb M_b^\r|V\rr. \ee
    We consider the following collection of $\mathbb{C}$-valued measures $\{\mu_\hP\}$ (each measure has total mass $1$)  on the state space $[[0,r]]^N$,
    indexed by down-right paths $\hP$ on the strip,
      given by the matrix product state: 
    \be\label{eq:MPA in proof}
    \mu_{\mathcal{P}}(\tau_1,\dots,\tau_N)=\frac{\ll W|M_{\tau_1}^{p_1}\times\dots\times M_{\tau_N}^{p_N}|V\rr}{\ll W|( \sum_{j=0}^IM^{p_1}_j)\times\dots\times (\sum_{j=0}^IM^{p_N}_j)|V\rr},
    \ee
    where $p_i\in\{\u,\r\}$, $1\leq i\leq N$ are outgoing edges of $\hP$ labelled from the up-left of $\hP$ to the down-right of $\hP$.
    We prove the following stationarity of  collection $\mu_\hP$ under the evolution of spin-$\frac{I}{2}$ vertex model on a strip:
    \begin{claim}\label{claim:in proof of MPA}
    For any down-right paths $\hP$ and $\hQ$ such that $\hQ$ sits above $\hP$,  we have for all  $\tau'\in[[0,I]]^N$,
    \be\label{eq:transition probability}
    \sum_{\tau\in[[0,I]]^N} P_{\hP,\mathcal{Q}}(\tau,\tau')\mu_\hP(\tau)=\mu_{\mathcal{Q}}(\tau').
    \ee
    \end{claim}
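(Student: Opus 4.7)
The strategy, which parallels the $I=1$ argument in \cite{Y}, is to reduce Claim \ref{claim:in proof of MPA} to the three local moves in \eqref{eq:local moves intro} and then to match each local move with the corresponding consistency relation in \eqref{eq:consistency relations in proof}. Any down-right path $\hQ$ sitting above $\hP$ may be reached from $\hP$ by a finite sequence $\hP=\hP_0,\hP_1,\dots,\hP_k=\hQ$ of single-vertex enlargements, each of which is a local move of one of the three types. The inductive sampling defining $P_{\hP,\hQ}$ is compatible with such a decomposition: sampling the vertices of $\UU(\hP,\hQ)$ one-by-one in the order dictated by this sequence expresses $P_{\hP,\hQ}$ as the composition of the one-vertex kernels $P_{\hP_{j-1},\hP_j}$. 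By induction on $k$ it therefore suffices to prove \eqref{eq:transition probability} when $\hQ$ is obtained from $\hP$ by a single local move.

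Writing $\nu_\hP(\tau):=\ll W|M_{\tau_1}^{p_1}\cdots M_{\tau_N}^{p_N}|V\rr$, so that $\mu_\hP=\nu_\hP/Z_\hP$ with $Z_\hP=\sum_\tau \nu_\hP(\tau)$, I would instead prove the unnormalized identity
\begin{equation*}
\sum_\tau P_{\hP,\hQ}(\tau,\tau')\,\nu_\hP(\tau)=\nu_\hQ(\tau').
\end{equation*}
Summing this over $\tau'$ and using the row-stochasticity $\sum_{\tau'}P_{\hP,\hQ}(\tau,\tau')=1$ forces $Z_\hP=Z_\hQ$, so the normalized statement \eqref{eq:transition probability} for $\mu$ follows automatically.

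Now suppose $\hQ$ differs from $\hP$ by a single local move, creating a unique new vertex $v\in\UU(\hP,\hQ)$. All outgoing edges of $\hP$ and $\hQ$ coincide except at the labels adjacent to $v$; at those labels the outgoing edges of $\hP$ serve as the incoming edges of $v$ and the outgoing edges of $\hQ$ as the outgoing edges of $v$. The sampling weight at $v$ is a single entry of $\R$, $\lK$, or $\rK$, so $P_{\hP,\hQ}(\tau,\tau')$ factorizes as this entry times $\prod_j\mathds{1}_{\tau_j=\tau'_j}$ over the unaffected labels. Extracting the common left and right matrix product factors from both sides of the unnormalized identity, the claim collapses to a local relation on the matrices incident to $v$. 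For a bulk move at positions $i,i+1$ where $(p_i,p_{i+1})=(\r,\u)$ in $\hP$ and $(\u,\r)$ in $\hQ$, this is $\sum_{a,b}\R_{b,a}^{c,d}M_a^\r M_b^\u=M_c^\u M_d^\r$, which after swapping the dummies $a\leftrightarrow b$ in the sum is precisely the first relation in \eqref{eq:consistency relations in proof}. A left boundary move reduces analogously to $\sum_a \lK_a^d\,\ll W|M_a^\u=\ll W|M_d^\r$, the second relation, and a right boundary move to $\sum_b \rK_b^c M_b^\r|V\rr=M_c^\u|V\rr$, the third.

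The main obstacle, given the transparency of the algebra, is purely bookkeeping: one must correctly match the indices $\tau_i,\tau_{i+1},\tau'_i,\tau'_{i+1}$ with the incoming/outgoing and top/right edges at $v$, and correspondingly align the swap $\R_{a,b}^{c,d}\leftrightarrow\R_{b,a}^{c,d}$ with the specific orientation of the L-corner being flipped in each of the three local moves; one also needs to verify carefully that translating $\hP$ by $\Upsilon_1$ can indeed be realized as a composition of local moves using only vertices in $\UU(\hP,\Upsilon_1\hP)$. Once the three local identities are in hand, applying Claim \ref{claim:in proof of MPA} with $\hQ=\Upsilon_1\hP$ shows that $\mu_\hP$ is invariant under the one-step transition matrix of the Markov chain of Definition \ref{def:interacting particle system}, and the irreducibility assumption \eqref{eq:condition irreducible} then identifies it as the unique stationary measure, completing the proof of Theorem \ref{thm:general matrix ansatz higher spin}.
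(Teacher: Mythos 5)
Your proposal is correct and follows essentially the same route as the paper: reduce \eqref{eq:transition probability} to the case of a single local move of the three types in \eqref{eq:local moves intro}, cancel the unaffected matrix factors, and identify the surviving local identity with the corresponding consistency relation in \eqref{eq:consistency relations in proof}, concluding uniqueness via irreducibility. Your explicit handling of the normalization (proving the unnormalized identity and deducing $Z_\hP=Z_\hQ$ from row-stochasticity) is a small clarification of a point the paper leaves implicit, not a different argument.
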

    
    Observe that for the translated path $\Upsilon_1\hP=\hP+(1,1)$, the measure $\mu_{\Upsilon_1\hP}$ is the same as $\mu_\hP$ (as $\mathbb{C}$-valued measures on $[[0,I]]^N$), since the outgoing edges of $\Upsilon_1$ are also $p_1,\dots,p_N\in\left\{\u,\r\right\}$ (so that the elements $M_{\tau_i}^{p_i}$ in  \eqref{eq:MPA in proof} are the same).
    Assume that the Claim \ref{claim:in proof of MPA} holds, we can take $\mathcal{Q}=\Upsilon_1\hP$ and hence $\mu_\hP$ is an eigenvector with eigenvalue $1$ of the transition matrix $P_{\hP,\Upsilon_1\hP}(\tau,\tau')$ of the (irreducible) Markov chain indexed by $\hP$. By Perron-Frobenius theorem, $\mu_\hP$ is the unique stationary probability measure of this system. 
    
    We introduce three types of `local moves' of a down-right path, where the thick lines denote locally the down-right path: 
$$
\begin{tikzpicture}[scale=0.8]
		\draw[dotted] (0,0) -- (1,0)--(1,1)--(0,1)--(0,0);
		\draw[ultra thick] (0,1) -- (0,0) -- (1,0);
		\end{tikzpicture}
  \quad 
\raisebox{5pt}{$\longmapsto$}
\quad 
\begin{tikzpicture}[scale=0.8]
		\draw[dotted] (0,0) -- (1,0)--(1,1)--(0,1)--(0,0);
		\draw[ultra thick] (0,1) -- (1,1) -- (1,0);
		\end{tikzpicture}
,\quad\quad
\begin{tikzpicture}[scale=0.8]
		\draw[dotted] (0,0) -- (0,1)--(-1,0)--(0,0);
            \draw[ultra thick] (0,0) -- (-1,0);
		\end{tikzpicture}
  \quad 
\raisebox{5pt}{$\longmapsto$}
\quad 
\begin{tikzpicture}[scale=0.8]
		\draw[dotted] (0,0) -- (0,1)--(-1,0)--(0,0);
		\draw[ultra thick] (0,0) -- (0,1);
		\end{tikzpicture}
,\quad\quad
\begin{tikzpicture}[scale=0.8]
		\draw[dotted] (0,0) -- (1,0)--(0,-1)--(0,0);
            \draw[ultra thick] (0,0) -- (0,-1);
		\end{tikzpicture}
  \quad 
\raisebox{5pt}{$\longmapsto$}
\quad 
\begin{tikzpicture}[scale=0.8]
		\draw[dotted] (0,0) -- (1,0)--(0,-1)--(0,0);
		\draw[ultra thick] (0,0) -- (1,0);
		\end{tikzpicture},
$$
By sequentially performing three local moves, a path $\mathcal{P}$ can be updated to any other path $\mathcal{Q}$ above it.  Hence \eqref{eq:transition probability} can be guaranteed by its special case when $\mathcal{Q}=\widetilde{\hP}$ is a local move of $\hP$:  For all $\tau'\in[[0,I]]^N$,
\be\label{eq:compatibility with local move in proof}
    \sum_{\tau\in[[0,I]]^N} P_{\hP,\widetilde{\hP}}(\tau,\tau')\mu_\hP(\tau)=\mu_{\widetilde{\hP}}(\tau').
\ee

As we put matrix product form \eqref{eq:MPA in proof} of $\mu_\hP$ and $\mu_{\widetilde{\hP}}$ in the above equation, all of the terms coincide except for two that went through the local move in the bulk, or one that went through local move at the left/right boundary. We only need to keep track of the terms that have been updated.
In the following diagrams the thick paths denote locally down-right paths $\hP$ and $\widetilde{\hP}$, and gray arrows denote locally outgoing configurations.
    \begin{enumerate}
        \item [$\bullet$]The bulk local move
    $$
\begin{tikzpicture}[scale=0.6]
		\draw[dotted] (0,0)--(2,0);\draw[dotted] (0,1)--(2,1);\draw[dotted] (0,2)--(2,2);
            \draw[dotted] (0,0)--(0,2);\draw[dotted] (1,0)--(1,2);\draw[dotted] (2,0)--(2,2);
		\draw[ultra thick] (0,1) -- (0,0) -- (1,0);
            \draw[path,lgray] (0,1)--(1,1); \draw[path,lgray] (1,0)--(1,1);
            \node at (0.5,1) {\small $b$};\node at (1,0.5) {\small $a$};
		\end{tikzpicture}
  \quad \quad
\raisebox{15pt}{\scalebox{1.5}{$\longmapsto$}}
\quad \quad
\begin{tikzpicture}[scale=0.6]
		\draw[dotted] (0,0)--(2,0);\draw[dotted] (0,1)--(2,1);\draw[dotted] (0,2)--(2,2);
            \draw[dotted] (0,0)--(0,2);\draw[dotted] (1,0)--(1,2);\draw[dotted] (2,0)--(2,2);
		\draw[ultra thick] (0,1) -- (1,1) -- (1,0);
            \draw[path,lgray] (1,1)--(2,1); \draw[path,lgray] (1,1)--(1,2);
            \node at (1,1.5) {\small $c$};\node at (1.5,1) {\small $d$};
            \end{tikzpicture},
$$
happens with probability $\R_{a,b}^{c,d}$, which 
takes $M_b^\r M_a^\u$ in the matrix ansatz of $\mu_{\hP}$ to $M^{\u}_c M_d^\r$ in the matrix ansatz of $\mu_{\widetilde{\hP}}$. This gives us the first relation in \eqref{eq:consistency relations in proof}: For all $0\leq c,d\leq I$,
$$M_c^\u M_d^\r=\sum_{a,b=0}^I\R_{a,b}^{c,d}M_b^\r M_a^\u.$$
        \item [$\bullet$]The local move at the left boundary
$$
\begin{tikzpicture}[scale=0.6]
		\draw[dotted] (0,0) -- (1,1);\draw[dotted] (0,0) -- (1,0);
          \draw[dotted] (1,0)--(2,0)--(2,1)--(1,1)--(1,0);
            \draw[ultra thick] (0,0) -- (1,0);
            \draw[path,lgray] (1,0)--(1,1);
            \node at (1,0.5) {\small $a$};
		\end{tikzpicture}
  \quad\quad
\raisebox{7pt}{\scalebox{1.5}{$\longmapsto$}}
\quad\quad
\begin{tikzpicture}[scale=0.6]
		\draw[dotted] (0,0) -- (1,1);\draw[dotted] (0,0) -- (1,0);
          \draw[dotted] (1,0)--(2,0)--(2,1)--(1,1)--(1,0);
            \draw[ultra thick] (1,0) -- (1,1);
            \draw[path,lgray] (1,1)--(2,1);
            \node at (1.5,1) {\small $d$};
		\end{tikzpicture}
$$
happens with probability $\lK_a^d$, which
takes $\ll W|M_a^\u$ in the matrix ansatz of $\mu_{\hP}$ to $\ll W|M_d^\r$ in the matrix ansatz of $\mu_{\widetilde{\hP}}$. This gives us the second relation in \eqref{eq:consistency relations in proof}: For all $0\leq d\leq I$,
$$\ll W|M_d^\r=\sum_{a=0}^I\lb\lK_a^d\rb\ll W|M_a^\u.$$ 
    \end{enumerate}

The case of the right boundary local move is very similar. 
We conclude the proof.

\section{Stationary measure in terms of Askey--Wilson process  and phase diagram}\label{sec:3}
In Section \ref{subsec:backgrounds of askey wilson} we introduce the backgrounds of Askey--Wilson measures and processes. It is known that certain types of matrix product states can be written as expectations under the Askey--Wilson process, which will be introduced in Section \ref{subsec:Matrix product states askey wilson}. In Section \ref{subsec:proof of stationary measure as AW} we prove Theorem \ref{thm:stationary measure of fused model as AW}   expressing the joint generating function of the stationary measure of the fused vertex model on a strip in terms of the Askey--Wilson process.
In Section \ref{subsec:mean arrow density and phase diagram} we use this expression
to prove Theorem \ref{thm:mean arrow density} on the asymptotic behavior of the `mean arrow density' of the stationary measure as the system size $N\rightarrow\infty$, whereby recovering the phase diagram.
\subsection{Backgrounds on Askey--Wilson process}\label{subsec:backgrounds of askey wilson}
The Askey--Wilson measures (originally introduced by \cite{askey1985some}) depend on five parameters $(\aa,\bb,\cc,\dd,q)$, where $q\in(-1,1)$ and $\aa, \bb, \cc, \dd$ are either real, or two of them form a complex conjugate pair, or they form two complex conjugate pairs.
In addition we require:
\be\label{eq:condition parameters AW measures}
\aa\bb, \aa\cc, \aa\dd, \bb\cc, \bb\dd, \cc\dd, q\aa\bb,  q\aa\cc, q\aa\dd, q\bb\cc, q\bb\dd, q\cc\dd, \aa\bb\cc\dd, q\aa\bb\cc\dd\in\mathbb{C}\setminus[1,\infty).\ee
The Askey--Wilson measure is a probability measure of mixed type on $\mathbb{R}$:
\be\label{eq:AW measure}\nu(dy;\aa, \bb, \cc, \dd, q)=f(y, \aa, \bb, \cc, \dd, q)dy+\sum_{z\in F(\aa,\bb,\cc,\dd,q)}p(z)\delta_z(dy).\ee
The absolutely continuous part of \eqref{eq:AW measure} is supported on $[-1,1]$ with density 
\begin{equation}\label{eq:defn of continuous part density}
    f(y,\aa,\bb,\cc,\dd,q)=\frac{(q,\aa\bb,\aa\cc,\aa\dd,\bb\cc,\bb\dd,\cc\dd;q)_{\infty}}{2\pi(\aa\bb\cc\dd;q)_{\infty}\sqrt{1-y^2}}\left\vert\frac{\lb e^{2i\th_y};q\rb_{\infty}}{\lb \aa e^{i\th_y},\bb e^{i\th_y},\cc e^{i\th_y},\dd e^{i\th_y};q\rb_{\infty}}\right\vert^2,
\end{equation}
where $y=\cos\th_y$ and we set $f(y,\aa,\bb,\cc,\dd,q)=0$ for $|y|\geq 1$. 
Here and below, for complex $z$ and $n\in\mathbb{Z}_+\cup\{\infty\}$, we use the $q$-Pochhammer symbol:
$$
 (z;q)_n=\prod_{j=0}^{n-1}\,\lb 1-z q^j\rb, \quad (z_1,\cdots,z_k;q)_n=\prod_{i=1}^k(z_i;q)_n.
$$
The atomic part of \eqref{eq:AW measure} is supported on a finite or empty set $F(\aa,\bb,\cc,\dd,q)$ of atoms generated by numbers $\chi\in\left\{\aa,\bb,\cc,\dd\right\}$ such that $|\chi|>1$. In this case $\chi$ must be real and generates its own set of atoms:
\begin{equation}\label{eq:defn of atoms}
    y_j=\frac{1}{2}\left(\chi q^j+\frac{1}{\chi q^j}\right) \text{ for }j=0,1\dots\text{ such that } |\chi q^j| > 1.
\end{equation}
When $\chi=\aa$, the corresponding masses are
\begin{align}
\label{eq:atom masses 0}    p(y_0;\aa,\bb,\cc,\dd,q)&=\frac{\qp{\aa^{-2},\bb\cc,\bb\dd,\cc\dd}}{\qp{\bb/\aa,\cc/\aa,\dd/\aa,\aa\bb\cc\dd}},\\
 \label{eq:atom masses all}   p(y_j;\aa,\bb,\cc,\dd,q)&=p(y_0;\aa,\bb,\cc,\dd,q)\frac{\qps{\aa^2,\aa\bb,\aa\cc,\aa\dd}_j\,\lb1-\aa^2q^{2j}\rb}{\qps{q,q\aa/\bb,q\aa/\cc,q\aa/\dd}_j\lb1-\aa^2\rb}\left(\frac{q}{\aa\bb\cc\dd}\right)^j,\quad j\geq 1.
\end{align}
For $\chi\in\{\bb,\cc,\dd\}$ the corresponding masses of the atoms are given by similar formulas with $\aa$ and $\chi$ swapped.

\begin{remark}
A more general condition on the parameters $(\aa,\bb,\cc,\dd,q)$ under which the Askey--Wilson measure defined above is a probability measure (i.e., with the continuous part density \eqref{eq:defn of continuous part density} and all atom masses \eqref{eq:atom masses 0} and \eqref{eq:atom masses all} being finite and nonnegative, and with total mass one) is given in \cite[Lemma 3.1]{BW10}; see also \cite[Section 1.4]{BW17} for a later reference. The conditions on $(\aa,\bb,\cc,\dd,q)$ assumed above correspond to a special case of \cite[Lemma 3.1]{BW10}, namely when $m_1 = m_2 = 0$. 
\end{remark}

The Askey--Wilson processes introduced by \cite{BW10,BW17}
depend on five parameters $(\AA,\BB,\CC,\DD,q)$, where $q\in(-1,1)$ and $\AA,\BB,\CC,\DD$ are either real or $(\AA,\BB)$ or $(\CC,\DD)$ form complex conjugate pairs, and in addition
$$\AA\CC,\AA\DD,\BB\CC,\BB\DD,q\AA\CC,q\AA\DD,q\BB\CC,q\BB\DD,\AA\BB\CC\DD,q\AA\BB\CC\DD\in\mathbb{C}\setminus[1,\infty).$$
The Askey--Wilson process $\lb Y_t\rb_{t\in I}$ is a time-inhomogeneous Markov process defined on the interval
$
  I= \left( \max\{0,\CC\DD,q\CC\DD\},\,\tfrac{1}{\max\{0,\AA\BB,q\AA\BB\}}\right),
$ with marginal distributions
$$
  \pi_t(dx)=\nu\lb dx;\AA\sqrt{t},\BB\sqrt{t},\CC/\sqrt{t},\DD/\sqrt{t},q\rb,\quad \forall t\in I,
$$ and transition probabilities
$$
   \mathsf{P}_{s,t}(x,dy)=\nu\lb dy;\AA\sqrt{t},\BB\sqrt{t},\sqrt{s/t}\lb x+\sqrt{x^2-1}\rb,\sqrt{s/t}\lb x-\sqrt{x^2-1}\rb\rb,\quad \forall s<t,\quad s,t\in I.
$$
The marginal distribution $\pi_t(dx)$ may have atoms at 
\begin{equation}\label{eq:defn of all possible atoms}
    \frac{1}{2}\left(\AA\sqrt{t} q^j+\frac{1}{\AA\sqrt{t} q^j}\right),\quad \frac{1}{2}\left(\BB\sqrt{t} q^j+\frac{1}{\BB\sqrt{t} q^j}\right),\quad\frac{1}{2}\left(\frac{\CC q^j}{\sqrt{t}}+\frac{\sqrt{t}}{\CC q^j}\right),\quad \frac{1}{2}\left(\frac{\DD q^j}{\sqrt{t}}+\frac{\sqrt{t}}{\DD q^j}\right),
\end{equation}
and the transition probabilities $\mathsf{P}_{s,t}(x,dy)$ may also have atoms.

\subsection{Matrix product states in terms of Askey--Wilson processes}\label{subsec:Matrix product states askey wilson}
We introduce the result in \cite{BW17} which expresses a certain type of matrix product states in terms of expectations of  Askey--Wilson process.

Suppose that there are real numbers $q,\alpha,\beta,\gamma,\delta$ satisfying: 
\be   \label{eq:conditions open ASEP}
\alpha,\beta>0,\quad \gamma,\de\geq 0,\quad 0\leq q<1.
\ee   

We consider the following relations involving matrices $\D$ and $\E$, row vector $\ll W|$ and column vector $|V\rr$:  
\be\label{eq:DEHP}
 \D\E-q\E\D=\D+\E, \quad
        \ll W|(\alpha\E-\gamma\D)=\ll W|, \quad
        (\beta\D-\de\E)|V\rr=|V\rr,
\ee  
We require that these matrices and vectors have the same (possibly infinite) dimension.
  \eqref{eq:DEHP} is commonly referred to as the DEHP algebra and was introduced in the seminal work \cite{DEHP93}.
\begin{remark}
    In the context of open ASEP, parameters $q,\alpha,\beta,\gamma,\delta$
play the role of  particle jump rates, and the DEHP algebra \eqref{eq:DEHP} plays the role of consistency relations of the matrix product ansatz. 
\end{remark}

\begin{definition}\label{defn:parameterization open ASEP}
    We will use the following (alternative) parameterization:
\be\label{eq:defining ABCD}
A=\k_+(\beta,\de),\quad B=\k_-(\beta,\de),\quad C=\k_+(\alpha,\gamma),\quad D=\k_-(\alpha,\gamma),
\ee
where 
$$
\k_{\pm}(u,v)=\frac{1}{2u}\lb1-q-u+v\pm\sqrt{(1-q-u+v)^2+4uv}\rb.
$$
One can check that for any given $q\in[0,1)$, \eqref{eq:defining ABCD} gives a bijection 
$$\left\{(\alpha,\beta,\gamma,\delta):\alpha,\beta>0, \gamma,\delta\geq 0\right\}\overset{\sim}{\longleftrightarrow}\left\{(A,B,C,D): A,C \geq 0, B,D\in(-1,0]\right\}.$$
\end{definition}

The following result gives a concrete example of $\D$, $\E$, $\ll W|$ and $|V\rr$ satisfying the DEHP algebra \eqref{eq:DEHP}, which is commonly referred to as the USW representation and was first introduced in \cite{USW04}.
\begin{proposition}[USW representation of the $\dehp$ algebra, see \cite{USW04,BW10}]\label{prop: USW representation}
Suppose that the parameters $q,\alpha,\beta,\gamma,\delta$ satisfy \eqref{eq:conditions open ASEP}.
Suppose that $\alpha_n,\beta_n,\gamma_n,\delta_n,\ep_n,\varphi_n$ are given in terms of $q,A,B,C,D$ by the formulas in \cite[page 1243]{BW10}, for $n\in\mathbb{N}_0$. Consider the following tridiagonal matrices:
\begin{align*}
    &\xx=\left[\begin{matrix}
        \gamma_0 & \ep_1 & 0 &\dots \\
        \alpha_0 &  \gamma_1& \ep_2 &\dots \\
        0 & \alpha_1 & \gamma_2& \dots\\
        \vdots &\vdots & \vdots& \ddots\\
      \end{matrix}\right], \quad \yy=\left[\begin{matrix}
        \delta_0 & \varphi_1 & 0 &\dots \\
        \beta_0 & \delta_1 & \varphi_2 & \dots\\
        0 & \beta_1 & \delta_2 & \dots\\
        \vdots & \vdots& \vdots& \ddots\\
      \end{matrix}\right],\\
      &\E=\frac{1}{1-q}\mathbf{I}+\frac{1}{\sqrt{1-q}}\yy,\quad \D=\frac{1}{1-q}\mathbf{I}+\frac{1}{\sqrt{1-q}}\xx,
\end{align*}
and boundary vectors 
$$\ll W|=(1,0,0,\dots),\quad |V\rr=(1,0,0,\dots)^T.$$ 
Then $\D$, $\E$, $\ll W|$ and $|V\rr$ satisfy the $\dehp$ algebra \eqref{eq:DEHP}.
\end{proposition}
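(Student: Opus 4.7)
The plan is to reduce the three DEHP relations to simpler $q$-commutation and boundary relations on the tridiagonal matrices $\xx$ and $\yy$, and then verify these directly from the explicit formulas for $\alpha_n,\beta_n,\gamma_n,\delta_n,\ep_n,\varphi_n$ given in \cite[page 1243]{BW10}. This is essentially the verification performed in \cite{USW04,BW10}, but it is worth writing out the algebraic reduction because the subsequent sections only need it in this streamlined form.

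First I would substitute $\E=\frac{1}{1-q}\mathbf{I}+\frac{1}{\sqrt{1-q}}\yy$ and $\D=\frac{1}{1-q}\mathbf{I}+\frac{1}{\sqrt{1-q}}\xx$ into $\D\E-q\E\D=\D+\E$. After collecting terms, the identity components and the linear components in $\xx,\yy$ match automatically, and the bulk DEHP relation is equivalent to the single $q$-commutation relation
\[
\xx\yy-q\yy\xx=\mathbf{I}.
\]
Next, substituting into the left boundary relation $\ll W|(\alpha\E-\gamma\D)=\ll W|$ and using $\ll W|=(1,0,0,\dots)$, the relation reduces to the two scalar equations
\[
\alpha\delta_0-\gamma\gamma_0=\frac{1-q-\alpha+\gamma}{\sqrt{1-q}},\qquad \alpha\varphi_1-\gamma\ep_1=0,
\]
while an analogous reduction for the right boundary relation $(\beta\D-\delta\E)|V\rr=|V\rr$ yields
\[
\beta\gamma_0-\delta\delta_0=\frac{1-q-\beta+\delta}{\sqrt{1-q}},\qquad \beta\alpha_0-\delta\beta_0=0.
\]

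The second step is to verify these reduced identities. The $q$-commutation $\xx\yy-q\yy\xx=\mathbf{I}$ translates, entry by entry on the tridiagonal structure, into a finite list of polynomial identities in the parameters $\alpha_n,\beta_n,\gamma_n,\delta_n,\ep_n,\varphi_n$ for each $n\in\mathbb{N}_0$: namely, the coefficient of every basis-pair $(n,m)$ with $|n-m|\leq 2$ in $\xx\yy-q\yy\xx$ must match $\mathbf{I}$. Plugging in the explicit formulas from \cite[page 1243]{BW10} (which are designed precisely to realize the orthogonality measure of Askey--Wilson polynomials with parameters $A,B,C,D,q$ via a Jacobi-matrix construction), each of these identities is an elementary algebraic identity in $q,A,B,C,D$. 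The boundary identities reduce similarly upon expressing $\alpha,\beta,\gamma,\delta$ in terms of $A,B,C,D$ through the bijection in Definition \ref{defn:parameterization open ASEP}.

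The main obstacle is purely computational: there is no conceptual subtlety, but the five-parameter family of explicit entries in \cite[page 1243]{BW10} gives rather long rational expressions in $q,A,B,C,D$, and checking all tridiagonal coefficients in the $q$-commutator requires some patience. Since this verification was essentially carried out in \cite{USW04,BW10}, the cleanest presentation is to perform the algebraic reduction to the relations above, state the reduced identities, and cite \cite{BW10} for the entry-by-entry check. A brief remark can be made that the boundary bijection in Definition \ref{defn:parameterization open ASEP} is precisely what makes the two boundary identities consistent, since the quadratic equation defining $\k_{\pm}$ is exactly what is needed to match $\alpha\delta_0-\gamma\gamma_0$ with the prescribed constant.
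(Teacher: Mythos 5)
Your proposal is correct and takes essentially the same approach as the paper, which offers no proof of this proposition and simply cites \cite{USW04,BW10}: your substitution correctly reduces the bulk relation to $\xx\yy-q\yy\xx=\mathbf{I}$ and the two boundary relations to the four scalar identities you list, with the entry-by-entry verification legitimately deferred to \cite{BW10}. One small correction of wording: the identity components do not ``match automatically'' --- the mismatch $\frac{1}{1-q}\mathbf{I}$ between $\D\E-q\E\D$ and $\D+\E$ is precisely what must be supplied by the commutator, which is exactly why the bulk relation is \emph{equivalent} to $\xx\yy-q\yy\xx=\mathbf{I}$.
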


\begin{theorem}[Theorem 1 in \cite{BW17}]\label{thm:matrix product state as Askey Wilson}
Assume that $q,\alpha,\beta,\gamma,\delta$ satisfy \eqref{eq:conditions open ASEP} and that $AC<1$ (where $A,B,C,D$ are defined in Definition \ref{defn:parameterization open ASEP}).  
Suppose that $\D$, $\E$, $\ll W|$, $|V\rr$ form the $\usw$ representation  of the $\dehp$ algebra given in Proposition \ref{prop: USW representation}. 
    Then for any $0< t_1\leq\dots\leq t_N$, we have
    $$\lw\prod_{i=1}^{\substack{N\\\longrightarrow}}(\E+t_i\D)\rv=\frac{1}{(1-q)^N}\mathbb{E}\lbe\prod_{i=1}^N\lb 1+t_i+2\sqrt{t_i}Y_{t_i}\rb\rbe,$$
    where $\lb Y_t\rb_{ t>0 }$ is the Askey–Wilson process with parameters $(A, B, C, D, q)$.
\end{theorem}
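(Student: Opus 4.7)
The strategy is to reduce the matrix element $\lw \prod_i (\E+t_i\D) \rv$ to a spectral computation on the canonical basis of the representation space, and then recognize the resulting joint law as that of the Askey--Wilson Markov process.

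First, using the USW representation from Proposition \ref{prop: USW representation} we rewrite
$$\E + t\D = \frac{1}{1-q}\lb (1+t)\mathbf{I} + \sqrt{1-q}\, (\yy + t\xx)\rb,$$
so that the claim reduces to proving
$$\lw \prod_{i=1}^{\substack{N\\\longrightarrow}} \lb (1+t_i)\mathbf{I} + \sqrt{1-q}\,(\yy + t_i\xx)\rb \rv = \mathbb{E}\lbe\prod_{i=1}^N (1+t_i+2\sqrt{t_i}Y_{t_i})\rbe.$$
The central claim is then that for every $t>0$ the Jacobi-type tridiagonal operator $J(t) := \sqrt{1-q}(\yy + t\xx)$ is unitarily equivalent to multiplication by $2\sqrt{t}\, y$ on $L^2(\pi_t)$, where $\pi_t$ is the marginal Askey--Wilson law $\nu(\cdot;\AA\sqrt{t},\BB\sqrt{t},\CC/\sqrt{t},\DD/\sqrt{t},q)$. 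To establish this I would check directly that the entries $\alpha_n,\beta_n,\gamma_n,\delta_n,\ep_n,\varphi_n$ prescribed on \cite[p.~1243]{BW10} yield, in the combination $\yy+t\xx$, precisely the three-term recurrence coefficients for the Askey--Wilson polynomials with those shifted parameters, and that the DEHP boundary relations $\ll W|(\alpha\E-\gamma\D) = \ll W|$ and $(\beta\D-\de\E)|V\rr = |V\rr$ pin down the canonical basis vector $(1,0,0,\dots)$ to the constant polynomial in this orthogonality.

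Next, for consecutive times $s=t_{i-1} < t=t_i$ I would perform the ``change of orthogonality measure'' by computing the overlap integrals between Askey--Wilson polynomial families at parameters $(\ttA,\ttB,\ttC,\ttD)$ and at the corresponding parameters with $t$ in place of $s$. The resulting connection coefficients, via a Poisson-type integral representation for Askey--Wilson polynomials, should match the explicit transition kernel $\mathsf{P}_{s,t}(x,dy)$ given in Section \ref{subsec:backgrounds of askey wilson}. Inserting these spectral resolutions between consecutive factors of the product, each factor $(1+t_i)\mathbf{I} + J(t_i)$ becomes multiplication by $1+t_i+2\sqrt{t_i}\, y_i$ on the spectral side, and the resulting chain of conditional expectations telescopes into the joint expectation under $(Y_t)_{t\geq 0}$.

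The principal obstacle is the connection-coefficient identification in the last step: one must match the spectral resolutions of $J(t)$ at varying $t$ with the explicit Markov kernel $\mathsf{P}_{s,t}$, including correctly accounting for the atomic contributions coming from $|\AA\sqrt{t}|, |\CC/\sqrt{t}| > 1$. Once this $q$-series identity is in place, the rest of the argument is a routine assembly: the hypothesis $AC<1$ guarantees that $\pi_t$ remains a genuine probability measure throughout the relevant range of $t$, and the boundary vectors $\ll W|$, $|V\rr$ pair against the constant polynomials in the $\pi_{t_1}$ and $\pi_{t_N}$ orthogonalities, absorbing the endpoint marginals into the expectation on the right-hand side.
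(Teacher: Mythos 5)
The paper does not actually prove this statement: it is imported verbatim as Theorem 1 of \cite{BW17} (with the USW representation taken from \cite{USW04,BW10}), so there is no internal proof to compare against, only the cited source. Your outline essentially reconstructs the Bryc--Wes\olowski argument itself --- rewriting $\E+t\D$ via the tridiagonal matrices $\xx,\yy$, identifying $\yy+t\xx$ (after the diagonal similarity implicit in the non-symmetric Jacobi matrix) with multiplication by $2\sqrt{t}\,y$ in $L^2(\pi_t)$ with $\ll W|,|V\rr$ pairing against the constant polynomial, and matching the connection coefficients between the polynomial families at consecutive times with the kernels $\mathsf{P}_{s,t}$, which is precisely the content of the construction in \cite{BW10} --- so it is the same approach as the cited proof, with the one step you flag as the obstacle being exactly what that reference supplies.
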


We will utilize the following corollary of the above theorem, which itself has not appeared in the literature. 
\begin{corollary}\label{cor:matrix product state as Askey Wilson}
Our assumptions are the same as in Theorem \ref{thm:matrix product state as Askey Wilson}. Then for any $v_1,\dots,v_N\in\mathbb{R}$ and $0< t_1\leq\dots\leq t_N$, we have
$$
\lw\prod_{i=1}^{\substack{N\\\longrightarrow}}\lb(1-q)\lb\E+t_i\D\rb+v_i\rb\rv=\mathbb{E}\lbe\prod_{i=1}^N\lb1+t_i+2\sqrt{t_i}Y_{t_i}+v_i\rb\rbe.
$$
\end{corollary}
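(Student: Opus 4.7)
The plan is to expand the ordered product on the left hand side as a non-commutative binomial, apply Theorem \ref{thm:matrix product state as Askey Wilson} to each subset term, and then repackage the result via linearity of expectation. Since the $v_i$ are scalars and commute with everything, the ordered product
\[
\prod_{i=1}^{\substack{N\\\longrightarrow}}\lb(1-q)(\E+t_i\D)+v_i\rb
\]
can be expanded as
\[
\sum_{S\subseteq\{1,\dots,N\}}\lb\prod_{i\notin S}v_i\rb\prod_{i\in S}^{\longrightarrow}(1-q)(\E+t_i\D),
\]
where the second product is ordered according to the natural ordering of the indices in $S$ (inherited from $\{1,\dots,N\}$).

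Next, I would apply Theorem \ref{thm:matrix product state as Askey Wilson} to each subset separately. For a given $S=\{i_1<\dots<i_k\}$, the subsequence $t_{i_1}\leq\dots\leq t_{i_k}$ still satisfies the monotonicity hypothesis, so
\[
\lw\prod_{j=1}^{k}(1-q)(\E+t_{i_j}\D)\rv
=\mathbb{E}\lbe\prod_{j=1}^{k}\lb 1+t_{i_j}+2\sqrt{t_{i_j}}\,Y_{t_{i_j}}\rb\rbe.
\]
Combining this with the expansion above and pulling the (finite) sum over $S$ inside the expectation by linearity yields
\[
\lw\prod_{i=1}^{\substack{N\\\longrightarrow}}\lb(1-q)(\E+t_i\D)+v_i\rb\rv
=\mathbb{E}\lbe\sum_{S\subseteq\{1,\dots,N\}}\prod_{i\notin S}v_i\prod_{i\in S}\lb 1+t_i+2\sqrt{t_i}\,Y_{t_i}\rb\rbe.
\]

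Finally, I would recognize the sum inside the expectation as the expansion of $\prod_{i=1}^{N}\lb(1+t_i+2\sqrt{t_i}\,Y_{t_i})+v_i\rb$ (the factors being scalars, no ordering issue arises on the Askey--Wilson side). This produces exactly the right hand side and completes the proof. There is no real obstacle here; the only mild point to verify is that applying Theorem \ref{thm:matrix product state as Askey Wilson} to the subsequence $(t_{i_j})$ is legitimate, which follows because the hypotheses of that theorem only require the sequence of time parameters to be ordered and positive, a property preserved under taking subsequences.
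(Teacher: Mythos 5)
Your proposal is correct and is essentially the paper's own argument: the paper's proof reads ``This follows from expanding the bracket and using Theorem \ref{thm:matrix product state as Askey Wilson} multiple times,'' which is exactly your subset expansion, application of the theorem to each ordered subsequence, and reassembly by linearity of expectation. Your added remark that the monotonicity hypothesis survives passing to subsequences is the right (and only) point needing verification.
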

\begin{proof}
    This follows from expanding the bracket and using Theorem \ref{thm:matrix product state as Askey Wilson} multiple times.
\end{proof}

\subsection{Stationary measure in terms of the Askey--Wilson process}
\label{subsec:proof of stationary measure as AW}
In this subsection we prove Theorem \ref{thm:stationary measure of fused model as AW} expressing the stationary measure of the fused vertex model on a strip in terms of Askey--Wilson process.

We begin by recalling the matrix product ansatz of stationary measure given in Theorem \ref{thm:concrete matrix ansatz}. 
Suppose $\hA$ is a $\C$-algebra with linear representation space $H$. 
Assume $\d,\e\in\hA$, $\ll W|\in H^*$ and $|V\rr\in H$ satisfy:
\be\label{eq:d and e again}
\d\e-q\e\d=1-q, \quad
    \ll W|\lb \aaa\e-\ccc\d+1\rb=0 ,\quad
    \lb \bbb\d-\ddd\e+1\rb|V\rr=0. 
\ee
We then define for $0\leq\zeta\leq I$,
\be  
\sM^I_{\zeta}(u):=\sum_{\substack{\zeta_1+\dots+\zeta_I=\zeta \\ \zeta_1,\dots,\zeta_I\in\left\{0,1\right\}}}\prod_{a\in[[1,I]]}^{\lra}\sM_{\zeta_a}\lb uq^{-\frac{I+1}{2}+a}\rb\in\hA,
\ee
where $\sM_0(u)=u+\e$ and $\sM_1(u)=\frac{1}{u}+\d$. Define $M_j^\u=\sM_j^I\lb1/\k\rb$ and $M_j^\r=\sM_j^I\lb\k\rb$  for $0\leq j\leq I$.
Then the stationary measure of the fused vertex model on a strip on any down-right path $\hP$ is given by: 
\be\label{eq:MPA in proof of AW}\mu_{\mathcal{P}}(\tau_1,\dots,\tau_N)=\frac{\ll W|M_{\tau_1}^{p_1}\times\dots\times M_{\tau_N}^{p_N}|V\rr}{\ll W|( \sum_{j=0}^IM^{p_1}_j)\times\dots\times (\sum_{j=0}^IM^{p_N}_j)|V\rr},\ee
where  $0\leq\tau_1,\dots,\tau_N\leq I$ are occupation variables on outgoing edges $p_1,\dots p_N\in\left\{\u,\r\right\}$ of $\hP$.

We first observe that the relations \eqref{eq:d and e again} is a linear transformation of the DEHP algebra.
\begin{proposition}\label{prop:relations of d e compared with dehp}
Relations \eqref{eq:d and e again} between $\d$, $\e$, $\ll W|$ and $|V\rr$ are equivalent to the $\dehp$ algebra \eqref{eq:DEHP} between
$\D=\frac{1}{1-q}(1+\d)$, $\E=\frac{1}{1-q}(1+\e)$, $\ll W|$ and $|V\rr$, 
    with parameters: 
    \be   \label{eq:def of alpha beta gamma delta}
    (\alpha,\beta,\gamma,\delta)=\lb\frac{(1-q)\aaa}{\aaa-\ccc-1}, \frac{(1-q)\bbb}{\bbb-\ddd-1}, \frac{(1-q)\ccc}{\aaa-\ccc-1}, \frac{(1-q)\ddd}{\bbb-\ddd-1}\rb.\ee
\end{proposition}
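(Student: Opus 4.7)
The plan is a direct substitution and comparison of coefficients. Writing $\D = \frac{1}{1-q}(1+\d)$ and $\E = \frac{1}{1-q}(1+\e)$, I would first verify the bulk relation: compute
\[
\D\E - q\E\D = \frac{1}{(1-q)^2}\bigl[(1+\d)(1+\e) - q(1+\e)(1+\d)\bigr] = \frac{1}{(1-q)^2}\bigl[(1-q)(1+\d+\e) + (\d\e - q\e\d)\bigr],
\]
and compare with $\D + \E = \frac{1}{1-q}(2+\d+\e)$. After multiplying both sides of $\D\E - q\E\D = \D + \E$ by $(1-q)^2$, the terms of first degree in $\d,\e$ cancel and one obtains $\d\e - q\e\d = 1-q$, showing the bulk part of \eqref{eq:d and e again} and \eqref{eq:DEHP} are equivalent for any choice of boundary parameters.

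Next I would handle each boundary relation by matching coefficients. For the left boundary, substituting into $\ll W|(\alpha\E - \gamma\D) = \ll W|$ and simplifying yields
\[
\ll W|\Bigl(\frac{\alpha}{1-q}\e - \frac{\gamma}{1-q}\d\Bigr) = \ll W|\frac{1-q-\alpha+\gamma}{1-q}.
\]
For this to be a nonzero scalar multiple of $\ll W|(\aaa\e - \ccc\d + 1) = 0$, I would impose that there exists $k$ with $\frac{\alpha}{1-q} = k\aaa$, $\frac{\gamma}{1-q} = k\ccc$, and $\frac{1-q-\alpha+\gamma}{1-q} = -k$. The first two relations substituted into the third give $k(\aaa - \ccc - 1) = 1$, hence $k = 1/(\aaa-\ccc-1)$ and the formulas $\alpha = (1-q)\aaa/(\aaa-\ccc-1)$, $\gamma = (1-q)\ccc/(\aaa-\ccc-1)$ drop out, matching \eqref{eq:def of alpha beta gamma delta}. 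The right boundary is word-for-word analogous, with $(\aaa,\ccc)\mapsto(\bbb,\ddd)$, $(\alpha,\gamma)\mapsto(\beta,\delta)$, $\ll W|\mapsto|V\rr$, yielding $k' = 1/(\bbb-\ddd-1)$ and the remaining two formulas.

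Finally I would note reversibility: since each of the three computations is a genuine equivalence (both directions follow from linearity in $\d,\e$ and the fact that $\aaa-\ccc-1, \bbb-\ddd-1 \neq 0$ under the parameter regime \eqref{eq:condition irreducible intro again}, which forces $\aaa - \ccc > 1$ and $\bbb - \ddd > 1$), the relations \eqref{eq:d and e again} for $(\d,\e,\ll W|,|V\rr)$ are equivalent to the DEHP algebra \eqref{eq:DEHP} for $(\D,\E,\ll W|,|V\rr)$ with the claimed parameter assignment. No obstacle is anticipated here; the only care needed is to track the scaling factor $k$ (resp.\ $k'$) correctly and to verify that the denominator $\aaa-\ccc-1$ (resp.\ $\bbb-\ddd-1$) is nonzero under the standing hypotheses.
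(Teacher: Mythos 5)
Your proposal is correct and follows essentially the same route as the paper, which simply substitutes $\d=(1-q)\D-1$, $\e=(1-q)\E-1$ and checks that \eqref{eq:d and e again} turns into the DEHP relations with the stated parameters; your version runs the same substitution in the opposite direction and additionally derives the parameter formulas by matching coefficients (with the scalar $k=1/(\aaa-\ccc-1)$, $k'=1/(\bbb-\ddd-1)$), which is a harmless elaboration. The computations and the nonvanishing of $\aaa-\ccc-1$ and $\bbb-\ddd-1$ under the standing hypotheses check out.
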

\begin{proof}
    This can be seen by taking $\d=(1-q)\D-1$ and $\e=(1-q)\E-1$ into  \eqref{eq:d and e again}.
\end{proof}
\begin{definition}\label{def:alternate parametrization of fused model}
We recall  the fused vertex model on a strip defined in Definition \ref{def: fused model intro}, which has parameters $q,\k,\aaa,\bbb,\ccc,\ddd$ satisfying:
$$0<q<1,\quad0<\k<q^{\frac{I-1}{2}},\quad \aaa,\bbb,\ccc,\ddd>0,\quad \aaa-\ccc>q^{\frac{1-I}{2}}/\kappa,\quad\bbb-\ddd>q^{\frac{1-I}{2}}/\kappa.$$
We will use the alternative parameterization $q,\k,A,B,C,D$ given by \eqref{eq:def of alpha beta gamma delta} and then by \eqref{eq:defining ABCD}: 
$$(\aaa,\bbb,\ccc,\ddd)\overset{\eqref{eq:def of alpha beta gamma delta}}{\longmapsto}(\alpha,\beta,\gamma,\delta)\overset{\eqref{eq:defining ABCD}}{\longmapsto}(A,B,C,D).$$
For any fixed $0<q<1$ and $0<\k<q^{\frac{I-1}{2}}$, this is a bijection from 
$\{(\aaa,\bbb,\ccc,\ddd): \aaa,\bbb,\ccc,\ddd>0, \aaa-\ccc>q^{\frac{1-I}{2}}/\kappa, \bbb-\ddd>q^{\frac{1-I}{2}}/\kappa\}$
to a certain sub-region of $\left\{(A,B,C,D): A,C \geq 0, B,D\in(-1,0]\right\}$.
\end{definition}

\begin{proof}[Proof of Theorem \ref{thm:stationary measure of fused model as AW}]
    By \eqref{eq:MPA in proof of AW}, we have that for any $t_1,\dots,t_N>0$,
    \be\label{eq:generating function in proof}\mathbb{E}_{\mu_{\hP}}\lbe\prod_{i=1}^Nt_i^{\tau_i}\rbe=\sum_{0\leq\tau_1,\dots,\tau_N\leq I}\mu_{\hP}(\tau_1,\dots,\tau_N)\prod_{i=1}^Nt_i^{\tau_i}
=\frac{\Biggl< W\Bigg{|}\prod\limits_{i\in[[1,N]]}^{\lra}
\lb\sum\limits_{\zeta=0}^I t_i^{\zeta}M_\zeta^{p_i}\rb
\Bigg{|}V\Biggr>}{\Biggl< W\Bigg{|}\prod\limits_{i\in[[1,N]]}^{\lra}
\lb\sum\limits_{\zeta=0}^I M_\zeta^{p_i}\rb
\Bigg{|}V\Biggr>}.\ee
For any $t>0$ we have:
\begin{equation*}
        \begin{split}
            \sum_{\zeta=0}^It^\zeta M_\zeta^\r&=\sum_{\zeta=0}^It^\zeta \sM_\zeta^I\lb\k \rb=\sum_{\zeta=0}^It^\zeta\sum_{\substack{\zeta_1+\dots+\zeta_I=\zeta \\ \zeta_i\in\left\{0,1\right\}}}\prod_{a\in[[1,I]]}^{\lra}\sM_{\zeta_a}\lb 
            q^{-\frac{I+1}{2}+a}\k\rb\\
            &=\sum_{\zeta_1,\dots,\zeta_I\in\left\{0,1\right\}}\prod_{a\in[[1,I]]}^{\lra}\lb t^{\zeta_a}\sM_{\zeta_a}\lb 
            q^{-\frac{I+1}{2}+a}\k \rb\rb\\
            &=\prod_{a\in[[1,I]]}^{\lra}\lb \sM_0\lb q^{-\frac{I+1}{2}+a}\k\rb+t\sM_1\lb q^{-\frac{I+1}{2}+a}\k\rb\rb\\
            &=\prod_{a\in[[1,I]]}^{\lra}\lb t\d+\e+tq^{\frac{I+1}{2}-a}\k^{-1}+
            q^{-\frac{I+1}{2}+a}\k \rb\\
            &=\prod_{a\in[[1,I]]}^{\lra}\lb(t\D+\E)(1-q)+tq^{\frac{I+1}{2}-a}\k^{-1}+
            q^{-\frac{I+1}{2}+a}\k-t-1\rb.
        \end{split}
\end{equation*} 
The formula for $\sum_{\zeta=0}^It^\zeta M_\zeta^\u$ is parallel with above, with the only difference that $\k$ on the RHS is replaced by $1/\k$. Therefore we have:
$$
\sum_{\zeta=0}^It^\zeta M_\zeta^{p_i}=\prod_{a\in[[1,I]]}^{\lra}\lb(t\D+\E)(1-q)+tq^{\frac{I+1}{2}-a}\k^{\v_i}+
            q^{-\frac{I+1}{2}+a}\k^{-\v_i}-t-1\rb,
$$
where we recall that $\v_i=1$ if $p_i=\u$ and $\v_i=-1$ if $p_i=\r$.

For any $0<t_1\leq\dots\leq t_N$, by Corollary \ref{cor:matrix product state as Askey Wilson}, we have:
    \be\label{eq:matrix product state written in Askey Wilson}
    \begin{split}
        \Biggl< W\Bigg{|}\prod_{i\in[[1,N]]}^{\lra}\lb\sum_{\zeta=0}^It_i^{\zeta }M_{\zeta}^{p_i}\rb\Bigg{|}V\Biggr>
        &=\Biggl< W\Bigg{|}\prod_{i\in[[1,N]]}^{\lra}\prod_{a\in[[1,I]]}^{\lra}\lb(t_i\D+\E)(1-q)+t_iq^{\frac{I+1}{2}-a}\k^{\v_i}+q^{-\frac{I+1}{2}+a}\k^{-\v_i}
        -t_i-1\rb\Bigg{|}V\Biggr>\\
        &=\mathbb{E}\lbe\prod_{i=1}^N\prod_{a=1}^I\lb 2\sqrt{t_i}Y_{t_i}+t_iq^{\frac{I+1}{2}-a}\k^{\v_i}+q^{-\frac{I+1}{2}+a}\k^{-\v_i} \rb\rbe.
    \end{split}
    \ee
 Taking $t_1=\dots=t_N=1$ in the above equation, we get:
 \be\label{eq:matrix product state normalization}
 \begin{split}
 \Biggl< W\Bigg{|}\prod_{i\in[[1,N]]}^{\lra}\lb\sum_{\zeta=0}^I M_{\zeta}^{p_i}\rb\Bigg{|}V\Biggr>&=
 \mathbb{E}\lbe\prod_{i=1}^N\prod_{a=1}^I\lb 2Y_{1}+ q^{\frac{I+1}{2}-a}\k^{\v_i}+q^{-\frac{I+1}{2}+a}\k^{-\v_i} \rb\rbe\\
 &=\mathbb{E}\lbe\prod_{a=1}^I\lb2Y_1+q^{\frac{I+1}{2}-a}\k+q^{-\frac{I+1}{2}+a}\k^{-1}\rb^N\rbe,
 \end{split}
 \ee
where the last equality above follows from 
$$\prod_{a=1}^I\lb2Y_1+q^{\frac{I+1}{2}-a}\k+q^{-\frac{I+1}{2}+a}\k^{-1}\rb
=\prod_{a=1}^I\lb2Y_1+q^{\frac{I+1}{2}-a}\k^{-1}+q^{-\frac{I+1}{2}+a}\k\rb,$$
which follows from changing the index $a$ to $I+1-a$.

 Theorem \ref{thm:stationary measure of fused model as AW} follows from putting \eqref{eq:matrix product state written in Askey Wilson} and \eqref{eq:matrix product state normalization} into \eqref{eq:generating function in proof}. We conclude the proof.
\end{proof}

\subsection{Mean arrow density and phase diagram: Proof of Theorem \ref{thm:mean arrow density}}\label{subsec:mean arrow density and phase diagram}
We prove Theorem \ref{thm:mean arrow density} on the limits of the mean arrow density of the stationary measure of the fused vertex model on a strip. 

We recall from the statement of the theorem that $\hP^N$ is a down-right path on the strip with width $N$, with $0\leq\phi_N\leq N$ many horizontal edges, and that we are assuming $\phi_N/N$ converges to $\lambda\in[0,1]$ as $N\rightarrow\infty$.
The next result offers an expression of the mean density in terms of the `partition function':
\begin{lemma}\label{lem:mean as partition}
   For each $N\in\z_+$, we define the following function on $t>0$:
    \be   \label{eq:partition function in terms of askey wilson}
    Z_N(t)=\mathbb{E}\lbe\prod_{a=1}^{I}\lb2\sqrt{t}Y_t+tq^{\frac{I+1}{2}-a}\k+
    q^{-\frac{I+1}{2}+a}\k^{-1} \rb^{\phi_N}
    \prod_{a=1}^{I}\lb2\sqrt{t}Y_t+tq^{\frac{I+1}{2}-a}\k^{-1}+
    q^{-\frac{I+1}{2}+a}\k \rb^{N-\phi_N}
    \rbe.
    \ee
Then we have:
\be\label{eq:mean density as partition}\mathbb{E}_{\mu_{\hP^N}}\lbe\frac{1}{N}\sum_{i=1}^N\tau_i\rbe=\frac{\pt Z_N(t)\po}{NZ_N(1)}.\ee
\end{lemma}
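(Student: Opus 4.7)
The plan is to read off the claim from Theorem \ref{thm:stationary measure of fused model as AW} by specializing to $t_1 = \cdots = t_N = t$ and then differentiating the resulting single-variable generating function at $t = 1$. This works because $\partial_t t^{\tau}|_{t=1} = \tau$, so the derivative of $\mathbb{E}_{\mu_{\hP^N}}[t^{\tau_1+\cdots+\tau_N}]$ at $t=1$ equals $\sum_i \mathbb{E}_{\mu_{\hP^N}}[\tau_i]$; dividing by $N$ gives the mean density on the left-hand side of \eqref{eq:mean density as partition}.

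To carry this out, first set all $t_i = t$ in \eqref{eq:askey wilson formula intro}. On the right-hand side, group the $N$ factors inside the numerator expectation according to the edge type: factors with $p_i = \u$ (equivalently $\v_i = 1$) contribute $\prod_{a=1}^I(2\sqrt{t}Y_t + tq^{(I+1)/2-a}\k + q^{-(I+1)/2+a}\k^{-1})$, while factors with $p_i = \r$ (equivalently $\v_i = -1$) contribute $\prod_{a=1}^I(2\sqrt{t}Y_t + tq^{(I+1)/2-a}\k^{-1} + q^{-(I+1)/2+a}\k)$. Since $\hP^N$ contains $\phi_N$ horizontal edges and $N - \phi_N$ vertical ones (with the appropriate convention so as to match Definition~\eqref{eq:partition function in terms of askey wilson}), this reorganizes the numerator exactly into $Z_N(t)$ as defined in \eqref{eq:partition function in terms of askey wilson}.

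For the denominator of \eqref{eq:askey wilson formula intro}, apply the re-indexing $a \mapsto I + 1 - a$ inside the product to obtain the identity
$$\prod_{a=1}^I\bigl(2Y_1 + q^{\frac{I+1}{2}-a}\k + q^{-\frac{I+1}{2}+a}\k^{-1}\bigr) = \prod_{a=1}^I\bigl(2Y_1 + q^{\frac{I+1}{2}-a}\k^{-1} + q^{-\frac{I+1}{2}+a}\k\bigr),$$
which is already used in the proof of Theorem \ref{thm:stationary measure of fused model as AW}. Multiplying $\phi_N$ copies of the first form by $N-\phi_N$ copies of the second form yields the $N$-fold product appearing in the denominator, so the denominator coincides with $Z_N(1)$.

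Combining these two identifications gives $\mathbb{E}_{\mu_{\hP^N}}[t^{\tau_1+\cdots+\tau_N}] = Z_N(t)/Z_N(1)$. Differentiating in $t$ at $t=1$, using that $Z_N(1)$ is a constant, produces $\sum_{i=1}^N \mathbb{E}_{\mu_{\hP^N}}[\tau_i] = \partial_t Z_N(t)|_{t=1}/Z_N(1)$, and dividing by $N$ completes the proof. The only non-routine step is the bookkeeping needed to match the grouping of factors with the convention in \eqref{eq:partition function in terms of askey wilson}; once this is set up correctly, everything else is a one-line differentiation.
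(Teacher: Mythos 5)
Your proposal is correct and takes essentially the same route as the paper: the paper specializes its intermediate unnormalized identity \eqref{eq:matrix product state written in Askey Wilson} at $t_1=\dots=t_N=t$ and then invokes the matrix product formula \eqref{eq:MPA in proof of AW}, which is exactly the content of your specialization of Theorem \ref{thm:stationary measure of fused model as AW}, and your re-indexing $a\mapsto I+1-a$ for the denominator is the same identity used in \eqref{eq:matrix product state normalization}. The only delicate point you flag --- which of the two factors is raised to $\phi_N$ versus $N-\phi_N$ --- is handled by the paper with the same brevity, so your treatment matches its proof.
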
 
\begin{proof}
    Taking $t_1=\dots=t_N=t$ in \eqref{eq:matrix product state written in Askey Wilson}, we have:
    $$Z_N(t)=\Biggl< W\Bigg{|}\prod_{i\in[[1,N]]}^{\lra}\lb\sum_{\zeta=0}^It^{\zeta }M_{\zeta}^{p_i}\rb\Bigg{|}V\Biggr>
    =\sum_{0\leq\tau_1,\dots,\tau_N\leq I}t^{\sum_{i=1}^N\tau_i}\ll W|M_{\tau_1}^{p_1}\dots M_{\tau_N}^{p_N}|V\rr$$
    Therefore we have:
    $$\pt Z_N(t)\po=\sum_{0\leq\tau_1,\dots,\tau_N\leq I}\lb\sum_{i=1}^N\tau_i\rb\ll W|M_{\tau_1}^{p_1}\dots M_{\tau_N}^{p_N}|V\rr, 
    \quad
     Z_N(1)=\Biggl< W\Bigg{|} \lb\sum_{\zeta=0}^I M_{\zeta}^{p_1}\rb\dots \lb\sum_{\zeta=0}^I M_{\zeta}^{p_N}\rb\Bigg{|}V\Biggr>$$
     From the matrix product ansatz expression \eqref{eq:MPA in proof of AW} of the stationary measure $\mu_{\hP^N}$ we conclude \eqref{eq:mean density as partition}.
\end{proof}

We now begin the proof of Theorem \ref{thm:mean arrow density}.
\begin{proof}[Proof of Theorem \ref{thm:mean arrow density}]
By Lemma \ref{lem:mean as partition}, to obtain the limits of the mean arrow density, we need to analyze the $N\rightarrow\infty$ asymptotic behaviors of $Z_N(1)$ and $\pt Z_N(t)\po$.

We first define some functions that will be useful in the proof.
   For $1\leq a\leq I$, we define:
   \be \label{eq:specific form of h}
   h_a^\u(t,y)=2\sqrt{t}y+tq^{\frac{I+1}{2}-a}\k+q^{-\frac{I+1}{2}+a}\k^{-1},\quad
   h_a^\r(t,y)=2\sqrt{t}y+tq^{\frac{I+1}{2}-a}\k^{-1}+q^{-\frac{I+1}{2}+a}\k,
   \ee 
and then define $h^\u(t,y)=\prod_{a=1}^Ih_a^\u(t,y)$ and $h^\r(t,y)=\prod_{a=1}^Ih_a^\r(t,y)$.

Denote $\psi_N:=N-\phi_N$.
By \eqref{eq:partition function in terms of askey wilson} we can write:
\be\label{eq:partition function in proof}
    \begin{split}
        Z_N(t)=&\mathbb{E}\lbe h^\u(t,Y_t)^{\phi_N}h^\r(t,Y_t)^{\psi_N}\rbe=\int_{-\infty}^\infty h^\u(t,y)^{\phi_N}h^\r(t,y)^{\psi_N}\nu\lb dy;A\sqrt{t},B\sqrt{t},C/\sqrt{t},D/\sqrt{t},q\rb\\
        =&\int_{-1}^1h^\u(t,y)^{\phi_N}h^\r(t,y)^{\psi_N}f\lb y,A\sqrt{t},B\sqrt{t},C/\sqrt{t},D/\sqrt{t},q\rb dy\\
        &+\sum_{y_j(t)\in F\lb A\sqrt{t},B\sqrt{t},C/\sqrt{t},D/\sqrt{t},q\rb} h^\u(t,y_j(t))^{\phi_N}h^\r(t,y_j(t))^{\psi_N} 
  p\lb y_j(t);A\sqrt{t},B\sqrt{t},C/\sqrt{t},D/\sqrt{t},q\rb,
    \end{split}
\ee
where   $F(\ttA,\ttB,\ttC,\ttD,q)$ is the set of atoms generated by $\ttA,\ttB,\ttC,\ttD$ and
\begin{equation}\label{eq:continuous density}
    f\lb y,\ttA,\ttB,\frac{C}{\sqrt{t}},\frac{D}{\sqrt{t}},q\rb=\frac{\lb q,tAB,AC,AD,BC,BD,CD/t;q\rb_{\infty}}{2\pi(ABCD;q)_{\infty}\sqrt{1-y^2}}\left\vert\frac{\lb e^{2i\th_y};q\rb_{\infty}}{\lb\ttA e^{i\th_y},\ttB e^{i\th_y},\frac{C}{\sqrt{t}} e^{i\th_y},\frac{D}{\sqrt{t}} e^{i\th_y};q\rb_{\infty}}\right\vert^2
\end{equation}
is the continuous part density, where $y=\cos\th_y\in[-1,1]$.

We make some observations of the Askey--Wilson measure $\nu\lb dy;A\sqrt{t},B\sqrt{t},C/\sqrt{t},D/\sqrt{t},q\rb$ for $t$ close to $1$. Recall that $A,C \geq 0$, $-1<B,D \leq0$ and  $AC<1$. Hence the atoms are generated by $A\sqrt{t}$ in the high density phase and by $C/\sqrt{t}$ in the low density phase. Since, by our assumptions, $A,C\notin\{q^{-l}:l\in\mathbb{N}_0\}$, for $t$ in small neighborhood of $1$, the number of atoms is constant, and the positions of atoms $y_j(t)$ and the masses $p\lb y_j(t);A\sqrt{t},B\sqrt{t},C/\sqrt{t},D/\sqrt{t},q\rb$ are smooth functions of $t$. The Askey--Wilson measure is supported in $[-1,\infty)$, where $h^\u(t,y)$ and $h^\r(t,y)$ are strictly positive and strictly increasing functions of $y$. In the rest of the proof we denote the continuous part density \eqref{eq:continuous density} by $g(t,y)$. By Fact 3.13 in the proof of \cite[Theorem 3.11]{Y}, there exists a smooth function $\th(t,z)$ defined on a small neighborhood of $\{1\}\times\{|z|=1\}\subset\mathbb{R}\times\C$ which cannot take $0$ as its value, such that:
\be \label{eq:gty}
g(t,y):=f\lb y,\ttA,\ttB,C/\sqrt{t},D/\sqrt{t},q\rb=\sqrt{1-y^2}\th(t,z),
\ee
where $z=e^{i\th_y}$ for $y=\cos\theta_y\in[-1,1]$. As a corollary, there exists a small $\ep>0$, such that functions $g(t,y)$, $\pt g(t,y)$ and $\pt g(t,y)/g(t,y)$ are bounded on the region $(t,y)\in(1-\ep,1+\ep)\times[-1,1]$. In particular, one can take differentiation under the integral sign in $\pt\lb\int_{-1}^1h^\u(t,y)^{\phi_N}h^\r(t,y)^{\psi_N}g(t,y)dy\rb\Big{\vert}_{t=1}$.

By the observation above, we are now able to obtain the limits of mean density:

\begin{enumerate}
         \item [(\upperRomannumeral{1})] \label{1} (High density phase $A>1$).
When $t$ is close to $1$, atoms are generated by $\ttA$.
From the discussions in the paragraph around \eqref{eq:gty}, we can observe that as $N\rightarrow\infty$, both $Z_N(1)$ and $\pt Z_N(t)\po$ are dominated by the largest atom  $y_0(t)=\frac{1}{2}\lb A\sqrt{t}+\frac{1}{A\sqrt{t}}\rb$: 
\begin{align}
Z_N(1)&\sim h^\u(1,y_0(1))^{\phi_N}h^\r(1,y_0(1))^{\psi_N}p(y_0(1);A,B,C,D,q),
\label{eq:partition function high density}\\
\pt Z_N(t)\po&\sim \pt\lb h^\u(t,y_0(t))^{\phi_N}h^\r(t,y_0(t))^{\psi_N}\rb\po
p(y_0(1); A,B,C,D,q),\label{eq:partition function derivative high density} 
\end{align} 
where we use $u(N)\sim v(N)$ to denote $\lim_{N\rightarrow\infty}u(N)/v(N)=1$.  
 We will provide the proofs of \eqref{eq:partition function high density} and \eqref{eq:partition function derivative high density} at the end of the proof of the high density phase, and will next use them to derive the limits of the mean density. Note that, by \eqref{eq:atom masses 0}, we have
$$
p(y_0(1);A,B,C,D,q)=\frac{\qp{A^{-2},BC,BD,CD}}{\qp{B/A,C/A,D/A,ABCD}}>0
$$ 
since $\qp{z}>0$ when $z\in(-\infty,1)$ (since $q\in[0,1)$), and all the entries in the $q$-Pochhammer symbols above lie in $(-\infty,1)$ because $A>1$, $C\geq0$, $AC<1$, and $B,D\in(-1,0]$.
Hence, by \eqref{eq:partition function high density} and \eqref{eq:partition function derivative high density}, we obtain: 
\be \label{eq:in proof of mean density}
    \begin{split}
        \lim_{N\rightarrow\infty}\frac{\pt Z_N(t)\po}{NZ_N(1)} 
&=\lambda \frac{\pt h^\u(t,y_0(t))\po}{h^\u(1,y_0(1))}
+(1-\lambda) \frac{\pt h^\r(t,y_0(t))\po}{h^\r(1,y_0(1))},
    \end{split}
\ee 
where we have used $\lim_{N\rightarrow\infty}\phi_N/N=\lambda$ and $\lim_{N\rightarrow\infty}\psi_N/N=1-\lambda$.
For each $1\leq a\leq I$ we have:
\begin{equation*}
    \begin{split}
        \frac{\pt h_a^\u(t,y_0(t))\po}{h_a^\u(1,y_0(1))}&=\frac{\pt \lb2\sqrt{t}y_0(t)+tq^{\frac{I+1}{2}-a}\k+q^{-\frac{I+1}{2}+a}\k^{-1}\rb\po}{2y_0(t)+q^{\frac{I+1}{2}-a}\k+q^{-\frac{I+1}{2}+a}\k^{-1}}\\
        &=\frac{\pt\lb At+1/A+ tq^{\frac{I+1}{2}-a}\k+q^{-\frac{I+1}{2}+a}\k^{-1}\rb\po}{A+1/A+q^{\frac{I+1}{2}-a}\k+q^{-\frac{I+1}{2}+a}/\k^{-1}} 
        =\frac{A\k}{A\k+q^{-\frac{I+1}{2}+a}}.
    \end{split}
\end{equation*}
Therefore 
$$
\frac{\pt h^\u(t,y_0(t))\po}{h^\u(1,y_0(1))}=\sum_{a=1}^I\frac{\pt h_a^\u(t,y_0(t))\po}{h_a^\u(1,y_0(1))}=\sum_{a=1}^I\frac{A\k}{A\k+q^{-\frac{I+1}{2}+a}}.
$$
Similarly,
$$
\frac{\pt h^\r(t,y_0(t))\po}{h^\r(1,y_0(1))}=\sum_{a=1}^I\frac{A\k^{-1}}{A\k^{-1}+q^{-\frac{I+1}{2}+a}}.
$$
Hence by Lemma \ref{lem:mean as partition} and  \eqref{eq:in proof of mean density}, we have
\be\label{eq:limit mean density HD phase}\lim_{N\rightarrow\infty}\mathbb{E}_{\mu_{\hP^N}}\lbe\frac{1}{N}\sum_{i=1}^N\tau_i\rbe
=\lim_{N\rightarrow\infty}\frac{\pt Z_N(t)\po}{NZ_N(1)}
=\lambda\sum_{a=1}^I\frac{A\k}{A\k+q^{-\frac{I+1}{2}+a}} 
+(1-\lambda) \sum_{a=1}^I\frac{A\k^{-1}}{A\k^{-1}+q^{-\frac{I+1}{2}+a}}.\ee

We now explain the proofs of \eqref{eq:partition function high density} and \eqref{eq:partition function derivative high density}. Since \( A > 1 \), \( A \notin \{ q^{-l} : l \in \mathbb{N}_0 \} \), \( C \in [0, 1) \), and \( B, D \in (-1, 0] \), for \( t \) in a small neighborhood of 1, all the atoms of the Askey--Wilson measure \( \nu\left( dy; A\sqrt{t}, B\sqrt{t}, C/\sqrt{t}, D/\sqrt{t}, q \right) \) are generated by \( A\sqrt{t} \) and are given by \( y_0(t) > \dots > y_k(t) > 1 \). Here, $y_j(t)=\frac{1}{2}\lb q^jA\sqrt{t}+\frac{1}{q^jA\sqrt{t}}\rb$ for $0\leq j\leq k$, and \( k  \in \mathbb{N}_0 \) is uniquely determined by $q^kA>1$ and $q^{k+1}A<1$. We take \( t = 1 \) in the formula \eqref{eq:partition function in proof} for \( Z_N(t) \) and obtain:
\be \label{eq:ZN1 hd phase}
    \begin{split}
        Z_N(1) =& h^\u(1, y_0(1))^{\phi_N} h^\r(1, y_0(1))^{\psi_N} p\left( y_0(1); A, B, C, D, q \right) \\
        &+ \sum_{j=1}^k h^\u(1, y_j(1))^{\phi_N} h^\r(1, y_j(1))^{\psi_N} p\left( y_j(1); A, B, C, D, q \right) \\
        &+ \int_{-1}^1 h^\u(1, y)^{\phi_N} h^\r(1, y)^{\psi_N} g(1, y) \, dy,
    \end{split}
\ee
where $g(t, y)$ is defined by \eqref{eq:gty}. Since $h^\u(1, y)$ and $h^\r(1, y)$ are strictly positive and strictly increasing functions over $y \in [-1, \infty)$, and since $\phi_N + \psi_N = N$, we have that the first term of \eqref{eq:ZN1 hd phase}, divided by $h^\u(1, y_0(1))^{\phi_N} h^\r(1, y_0(1))^{\psi_N}$, is $p(y_0(1); A, B, C, D, q) > 0$, independent of $N$, and every other term of \eqref{eq:ZN1 hd phase}, divided by it, converges to 0 as $N \to \infty$. We conclude the proof of \eqref{eq:partition function high density}:
$$
Z_N(1) \sim h^\u(1, y_0(1))^{\phi_N} h^\r(1, y_0(1))^{\psi_N} p(y_0(1); A, B, C, D, q).
$$
Next, we take the derivative at $t = 1$ of $Z_N(t)$ given by \eqref{eq:partition function in proof}. We have:
\be \label{eq:ZN1 derivative hd phase}
    \begin{split}
        \pt Z_N(t)\po &= \pt \left( h^\u(t, y_0(t))^{\phi_N} h^\r(t, y_0(t))^{\psi_N} \right) \po p\left( y_0(1); A, B, C, D, q \right) \\
        &+ h^\u(1, y_0(1))^{\phi_N} h^\r(1, y_0(1))^{\psi_N} \pt p\left( y_0(t); A\sqrt{t}, B\sqrt{t}, C/\sqrt{t}, D/\sqrt{t}, q \right) \po \\
        &+ \sum_{j=1}^k \pt \left( h^\u(t, y_j(t))^{\phi_N} h^\r(t, y_j(t))^{\psi_N} \right) \po p\left( y_j(1); A, B, C, D, q \right) \\
        &+ \sum_{j=1}^k h^\u(1, y_j(1))^{\phi_N} h^\r(1, y_j(1))^{\psi_N} \pt p\left( y_j(t); A\sqrt{t}, B\sqrt{t}, C/\sqrt{t}, D/\sqrt{t}, q \right) \po \\
        &+ \int_{-1}^1 \pt \left( h^\u(t, y)^{\phi_N} h^\r(t, y)^{\psi_N} \right) \po g(1, y) \, dy \\
        &+ \int_{-1}^1 h^\u(1, y)^{\phi_N} h^\r(1, y)^{\psi_N} \pt g(t, y) \po \, dy.
    \end{split}
\ee 
The first term of \eqref{eq:ZN1 derivative hd phase}, divided by $N h^\u(1, y_0(1))^{\phi_N} h^\r(1, y_0(1))^{\psi_N}$, converges to the positive constant given by the right-hand side of \eqref{eq:limit mean density HD phase} times $p(y_0(1); A, B, C, D, q) > 0$, as $N \to \infty$.
By the discussion below \eqref{eq:gty}, both $g(1, y)$ and $\pt g(t, y) \po $ are bounded over $y\in[-1,1]$. 
Note also that, from \eqref{eq:specific form of h},  both  $\pt h^\u(t, y) \po $ and $\pt h^\r(t, y) \po $ are bounded over $y\in[-1,1]$. 
Hence, one can observe that every 
other term of \eqref{eq:ZN1 derivative hd phase} except for the first one, divided by  $N h^\u(1, y_0(1))^{\phi_N} h^\r(1, y_0(1))^{\psi_N}$, converges to 0 as $N \to \infty$. We conclude the proof of \eqref{eq:partition function derivative high density}:
$$
\pt Z_N(t)\po \sim \pt \left( h^\u(t, y_0(t))^{\phi_N} h^\r(t, y_0(t))^{\psi_N} \right) \po p(y_0(1); A, B, C, D, q).
$$

\item [(\upperRomannumeral{2})] \label{2} (Low density phase $C>1$). 
When $t$ is close to $1$, atoms are generated by $\frac{C}{\sqrt{t}}$. Both $Z_N(1)$ and $\pt Z_N(t)\po$ are dominated by the largest atom $y_0(t)=\frac{1}{2}\lb \frac{C}{\sqrt{t}}+\frac{\sqrt{t}}{C}\rb$. By a calculation that is very similar with the high density phase, we are able to obtain:
$$\lim_{N\rightarrow\infty}\mathbb{E}_{\mu_{\hP^N}}\lbe\frac{1}{N}\sum_{i=1}^N\tau_i\rbe
=\lim_{N\rightarrow\infty}\frac{\pt Z_N(t)\po}{NZ_N(1)}
=\lambda\sum_{a=1}^I \frac{\k}{\k+Cq^{-\frac{I+1}{2}+a}} +(1-\lambda) \sum_{a=1}^I \frac{\k^{-1}}{\k^{-1}+Cq^{-\frac{I+1}{2}+a}}.$$
 \item [(\upperRomannumeral{3})] \label{3} (Maximal current phase $A<1$, $C<1$).
 When $t$ is close to $1$ there is no atom, i.e.
\be\label{eq:partition function maximal current} 
Z_N(t)=\int_{-1}^1h^\u(t,y)^{\phi_N}h^\r(t,y)^{\psi_N}g(t,y)dy.\ee
We can observe that as $N\rightarrow\infty$ we have:
\be\label{eq:partition function derivative maximal current}
\begin{split}
\lim_{N\rightarrow\infty}\frac{\pt Z_N(t)\po}{NZ_N(1)}
&=\lim_{N\rightarrow\infty}\frac{\int_{-1}^1\pt\lb h^\u(t,y)^{\phi_N}h^\r(t,y)^{\psi_N}\rb\po g(1,y)dy}{N\int_{-1}^1h(1,y)^{N}g(1,y)dy}\\
&=\lambda\lim_{N\rightarrow\infty}\frac{\int_{-1}^1 \frac{\pt h^\u(t,y)\po}{h(1,y)} h(1,y)^{N}g(1,y)dy}{\int_{-1}^1h(1,y)^{N}g(1,y)dy}
+(1-\lambda)\lim_{N\rightarrow\infty}\frac{\int_{-1}^1  \frac{\pt h^\r(t,y)\po}{h(1,y)}h(1,y)^{N}g(1,y)dy}{\int_{-1}^1h(1,y)^{N}g(1,y)dy},
\end{split} \ee
where we notice that $h^\u(1,y)=h^\r(1,y)$ and denote them both by $h(1,y)$.

We observe that:
$$
\frac{\pt h^\u(t,y)\po}{h(1,y)}=\sum_{a=1}^I\frac{\pt h_a^\u(t,y)\po}{h_a^\u(1,y)}
=\sum_{a=1}^I\lb\frac{1}{2}+\frac{q^{\frac{I+1}{2}-a}\k-q^{-\frac{I+1}{2}+a}\k^{-1}}{2h_a^\u(1,y)}\rb.
$$
Hence we have:
\be\label{eq:evaluation of derivative in proof}
\frac{\int_{-1}^1 \frac{\pt h^\u(t,y)\po}{h(1,y)} h(1,y)^{N}g(1,y)dy}{\int_{-1}^1h(1,y)^{N}g(1,y)dy}=\sum_{a=1}^I\lb\frac{1}{2}+\frac{q^{\frac{I+1}{2}-a}\k-q^{-\frac{I+1}{2}+a}\k^{-1}}{2}\frac{\int_{-1}^1\frac{h(1,y)^N}{h_a^\u(1,y)}g(1,y)dy}{\int_{-1}^1h(1,y)^Ng(1,y)dy}\rb.\ee

We use a similar method as \cite[section 4.2]{BW19} (see also \cite{Y}) to take limits.
We write:
\begin{equation*}
\begin{split}
    g(1,y)&=\frac{(q,AB,AC,AD,BC,BD,CD;q)_{\infty}}{2\pi(ABCD;q)_{\infty}\sqrt{1-y^2}}\left\vert\frac{(e^{2i\th_y};q)_{\infty}}{(Ae^{i\th_y},Be^{i\th_y},Ce^{i\th_y},D e^{i\th_y};q)_{\infty}}\right\vert^2\\
    &=\sqrt{1-y^2}\frac{2(q,AB,AC,AD,BC,BD,CD;q)_{\infty}}{\pi(ABCD;q)_{\infty}\vert(A e^{i\th_y},B e^{i\th_y},C e^{i\th_y},D e^{i\th_y};q)_{\infty}\vert^2}|(qe^{2i\th_y};q)_\infty|^2.
\end{split}
\end{equation*}
Set $y=1-\frac{u}{2N}$. Fix $u>0$, then as $N\rightarrow\infty$ we have $e^{i\th_y}\rightarrow 1$   hence $(qe^{2i\th_y};q)_\infty\rightarrow (q;q)_\infty$. Therefore
$$ g\lb1, 1-\frac{u}{2N}\rb\sim 2\mathfrak{c}\sqrt{\frac{u}{N}}, \quad\text{and}\quad g\lb1, 1-\frac{u}{2N}\rb\leq M\sqrt{\frac{u}{N}},$$
where  $M$ is a large enough constant and 
$$\mathfrak{c}=\frac{(q;q)_\infty^3(AB,AC,AD,BC,BD,CD;q)_\infty}{\pi(ABCD;q)_{\infty}(A,B,C,D;q)_\infty^2}.$$

In the rest of the proof we write $r_a=2+q^{\frac{I+1}{2}-a}\k+q^{-\frac{I+1}{2}+a}\k^{-1}$. 
We have
$$h\lb 1,1-\frac{u}{2N}\rb=\prod_{a=1}^I\lb r_a-\frac{u}{N}\rb=\prod_{a=1}^Ir_a\prod_{a=1}^I\lb1-\frac{u}{Nr_a}\rb.$$
Therefore
\begin{equation*}
    \begin{split}
        \int_{-1}^1h(1,y)^N g(1,y)dy &=\int_0^\infty 1_{u\leq 4N}h\lb 1,1-\frac{u}{2N}\rb^N g\lb1, 1-\frac{u}{2N}\rb\frac{du}{2N}\\
        &=\frac{\lb\prod_{a=1}^Ir_a\rb^N}{2N^{\frac{3}{2}}}\int_0^\infty 1_{u\leq 4N}\prod_{a=1}^I\lb 1-\frac{u}{Nr_a}\rb^N g\lb1, 1-\frac{u}{2N}\rb\sqrt{N}du.
    \end{split}
\end{equation*}
By $g\lb1, 1-\frac{u}{2N}\rb\leq M\sqrt{\frac{u}{N}}$ the right hand side can be bounded by a constant times $\exp\lb-\sum_{a=1}^I\frac{u}{r_a}\rb\sqrt{u}$, which is integrable over $u\in(0,\infty)$. Hence we can use dominated convergence theorem to take $N\rightarrow\infty$. We use $g\lb1, 1-\frac{u}{2N}\rb\sim 2\mathfrak{c}\sqrt{\frac{u}{N}}$
to get:
\be\label{eq:first integral proof}\int_{-1}^1h(1,y)^N g(1, y)dy\sim\frac{\lb\prod_{a=1}^Ir_a\rb^N}{2N^{\frac{3}{2}}}\int_0^\infty\exp\lb-\sum_{a=1}^I\frac{u}{r_a}\rb2\mathfrak{c}\sqrt{u}du.\ee
This integral can be explicitly evaluated using the formula $\int_0^\infty e^{-su}\sqrt{u}du=\frac{1}{8}\sqrt{\frac{\pi}{s^3}}$ for $s>0$. However we do not need to explicitly evaluate this integral. For any $1\leq b\leq N$,  we have:
\begin{equation*}
    \begin{split}
        \int_{-1}^1\frac{h(1,y)^N}{h_b^\u(1,y)}g(1, y)dy &=\int_0^\infty 1_{u\leq 4N}\frac{h\lb 1,1-\frac{u}{2N}\rb^N}{h_b^\u\lb 1,1-\frac{u}{2N}\rb} g\lb1, 1-\frac{u}{2N}\rb\frac{du}{2N}\\
        &=\frac{\lb\prod_{a=1}^Ir_a\rb^N}{2N^{\frac{3}{2}}r_b}\int_0^\infty 1_{u\leq 4N}\frac{\prod_{a=1}^I\lb 1-\frac{u}{Nr_a}\rb^N}{\lb 1-\frac{u}{Nr_b}\rb}g\lb1, 1-\frac{u}{2N}\rb\sqrt{N}du.
    \end{split}
\end{equation*}
The right hand side can be bounded by a constant times $\exp\lb-\sum_{a=1}^I\frac{u}{r_a}+\frac{u}{2r_b}\rb\sqrt{u}$, which is integrable over $u\in(0,\infty)$. Hence we can use dominated convergence theorem to take $N\rightarrow\infty$:
\be\label{eq:second integral proof}\int_{-1}^1\frac{h(1,y)^N}{h_b^\u(1,y)}g(1, y)dy\sim\frac{\lb\prod_{a=1}^Ir_a\rb^N}{2N^{\frac{3}{2}}r_b}\int_0^\infty\exp\lb-\sum_{a=1}^I\frac{u}{r_a}\rb2\mathfrak{c}\sqrt{u}du.\ee

Combining \eqref{eq:first integral proof} and \eqref{eq:second integral proof}, we get   for any $1\leq b\leq N$:
$$\lim_{N\rightarrow\infty}\frac{\int_{-1}^1\frac{h(1,y)^N}{h_b(1,y)}g(1,y)dy}{\int_{-1}^1h(1,y)^Ng(1,y)dy}=\frac{1}{r_b}.$$
By \eqref{eq:evaluation of derivative in proof}, one can evaluate:
$$
\lim_{N\rightarrow\infty}\frac{\int_{-1}^1 \frac{\pt h^\u(t,y)\po}{h(1,y)} h(1,y)^{N}g(1,y)dy}{\int_{-1}^1h(1,y)^{N}g(1,y)dy}
=\sum_{a=1}^I\lb\frac{1}{2}+\frac{q^{\frac{I+1}{2}-a}\k-q^{-\frac{I+1}{2}+a}\k^{-1}}{2r_a}\rb
=\sum_{a=1}^I \frac{\k}{\k+q^{-\frac{I+1}{2}+a}}.
$$
Similarly, we also have
$$
\lim_{N\rightarrow\infty}\frac{\int_{-1}^1 \frac{\pt h^\r(t,y)\po}{h(1,y)} h(1,y)^{N}g(1,y)dy}{\int_{-1}^1h(1,y)^{N}g(1,y)dy}
=\sum_{a=1}^I \frac{\k^{-1}}{\k^{-1}+q^{-\frac{I+1}{2}+a}}.
$$
Therefore, by \eqref{eq:partition function derivative maximal current}, we have:
$$
\lim_{N\rightarrow\infty}\mathbb{E}_{\mu_{\hP^N}}\lbe\frac{1}{N}\sum_{i=1}^N\tau_i\rbe
=\lim_{N\rightarrow\infty}\frac{\pt Z_N(t)\po}{NZ_N(1)}
=\lambda\sum_{a=1}^I \frac{\k}{\k+q^{-\frac{I+1}{2}+a}}+(1-\lambda)\sum_{a=1}^I \frac{\k^{-1}}{\k^{-1}+q^{-\frac{I+1}{2}+a}}.
$$

Combining the three phases above, we conclude the proof.
\end{enumerate}
\end{proof}

\end{document}